\documentclass[pmlr]{jmlr}  

\RequirePackage{graphicx}
\usepackage{booktabs}
\usepackage{longtable}  
 
\usepackage{graphicx}  
\usepackage{hyperref}
\usepackage{url}
\usepackage{pifont}
\usepackage{amsmath}
\usepackage{yfonts}
\usepackage{amssymb}
\usepackage{wrapfig}
\usepackage{mathtools}
\usepackage[dvipsnames,table]{xcolor}
\usepackage{dsfont}
\usepackage{derivative}  
\usepackage{algpseudocode}
\usepackage{algorithm}
\usepackage{mathtools}
\usepackage{enumitem}
\usepackage{multirow}

\usepackage{amsfonts}
\usepackage{bm}
\usepackage{bbm} 
\usepackage{framed}
\usepackage{makecell}
\usepackage{adjustbox}
\usepackage{csquotes}

\newcommand\independent{\protect\mathpalette{\protect\independenT}{\perp}}
\def\independenT#1#2{\mathrel{\rlap{$#1#2$}\mkern2mu{#1#2}}}

\DeclareMathOperator*{\sign}{sign}
\DeclareMathOperator*{\indic}{\mathds{1}}

\DeclareMathOperator*{\E}{\mathbb{E}}
\newcommand{\X}{\mathbf{X}}
\newcommand{\x}{\mathbf{x}}
\newcommand{\U}{\mathbf{U}}
\newcommand{\unc}{\mathbf{u}}
\DeclareMathOperator*{\Prob}{\mathbb{P}}
\newcommand{\Ttreat}{T^{(1)}}
\newcommand{\Ctreat}{C^{(1)}}
\newcommand{\Tcont}{T^{(0)}}
\newcommand{\Ccont}{C^{(0)}}
\newcommand{\YI}{Y_{\mathrm{I}, t, 1}}
\newcommand{\YII}{Y_{\mathrm{II}, t, 1}}
\newcommand{\Yrmst}{Y_{\mathrm{RMST}, \tau, 1}}
\newcommand{\taumax}{\tau_\mathrm{max}}

\theorembodyfont{\upshape}
\theoremheaderfont{\scshape}
\theorempostheader{:}
\theoremsep{\newline}

\newtheorem{assumption}[theorem]{Assumption}

\jmlrvolume{340}
\jmlryear{2026}
\jmlrworkshop{Machine Learning for Healthcare}

\title[DVDS sensitivity analysis to unobserved confounding for survival outcomes]{Doubly valid and doubly sharp sensitivity analysis to unobserved confounding for survival outcomes}

\author{\Name{Jean-Baptiste Baitairian}
       \Email{jean-baptiste.baitairian@inria.fr}\\ 
       \addr Sanofi R\&D, Gentilly, France \\
       Inria, Inserm, Université Paris Cité, HeKA, F-75015 Paris, France
       \AND
       \Name{Bernard Sebastien}
       \Email{bernard.sebastien@sanofi.com}\\ 
       \addr Sanofi R\&D, Gentilly, France
       \AND
       \Name{Rana Jreich}
       \Email{rana.jreich@sanofi.com}\\ 
       \addr Sanofi R\&D, Gentilly, France
       \AND
       \Name{Sandrine Katsahian}
       \Email{sandrine.katsahian@aphp.fr}\\ 
       \addr Inria, Inserm, Université Paris Cité, HeKA, F-75015 Paris, France \\
       CIC-EC 1418 - Paris HEGP, Paris, France
       \AND
       \Name{Agathe Guilloux}
       \Email{agathe.guilloux@inria.fr}\\ 
       \addr Inria, Inserm, Université Paris Cité, HeKA, F-75015 Paris, France
       }

\begin{document}

\maketitle

\begin{abstract}
    Time-to-event outcomes are central in oncology and rare diseases, where treatment effects are often summarized by differences in survival curves or Restricted Mean Survival Time (RMST). In real-world data, estimating these causal effects relies on the absence of unobserved confounding, an assumption that is rarely satisfied. We develop a sensitivity analysis framework for causal treatment effects with survival outcomes under the Marginal Sensitivity Model (MSM). We introduce doubly valid and doubly sharp (DVDS) bounds for differences in survival functions and RMST, extending recent DVDS results to the time-to-event setting while accounting for informative censoring. In practice, our method yields tighter bounds and improved computational efficiency compared to a previous approach from the literature, on simulated and real data. For tractability, we assume independence between censoring and unobserved confounding, a limit that should be addressed in future works.
\end{abstract}

\section{Introduction}

Observational data, more generally referred to as real-world data (RWD), offer substantial opportunities to enhance clinical drug development and inform regulatory decision-making. Such data sources include Electronic Health Records (EHR), administrative claims databases, and, increasingly, measurements from wearable devices. In this context, the \textit{21st Century Cures Act} mandates the \textit{U.S. Food and Drug Administration} (FDA) to evaluate the potential use of real-world evidence (RWE) in the approval of medical products \citep{fang2025sensitivity}. 

Time-to-event outcomes are central endpoints in many clinical trials, particularly in oncology and rare diseases. Overall Survival (OS) is often regarded as the gold-standard endpoint; however, its long follow-up time motivates the use of surrogate endpoints such as Progression-Free Survival (PFS) or Disease-Free Survival (DFS) \citep{fda2018clinicaltrial, delgado2021clinical}. 

From these survival endpoints, several treatment effect measures can be defined. The hazard ratio remains a widely used summary measure \citep{barraclough2011biostatistics}, yet its causal interpretation is limited~\citep{lee2024sensitivity, voinot2025treatment}. As a result, alternative estimands have been advocated, including differences in survival functions and Restricted Mean Survival Time (RMST) \citep{royston2013restricted}, which provide more causally interpretable treatment effects over a fixed time horizon.

Estimating such treatment effects from RWD raises fundamental challenges due to confounding and selection bias, as treatment assignment is no longer randomized. These issues are typically addressed within the framework of causal inference. In particular, causal estimands such as survival curve or RMST differences can be identified under the \textit{ignorability} (or \textit{unconfoundedness}) assumption \citep{rubin1974estimating, hirano2004propensity}, which posits that all confounders of the treatment–outcome relationship are observed. In practice, this assumption is neither testable nor plausible in many clinical settings, especially when relying on routinely collected data.

Sensitivity analysis provides an approach to assess the impact of unobserved confounding by relaxing point identification to partial identification. Rather than returning a single estimate, these methods characterize a set of plausible values for the causal effect under a sensitivity model that quantifies the magnitude of hidden bias. Among such frameworks, the \textit{Marginal Sensitivity Model} (MSM) \citep{tan2006distributional} has emerged as a flexible and promising approach.

While sensitivity analysis under the MSM is now well developed for binary and continuous outcomes, extensions to causal survival estimands remain limited. In particular, accommodating both robust sensitivity bounds and informative censoring has received little attention in the literature. We address this gap by developing sensitivity bounds for survival function and RMST that inherit strong robustness properties while remaining computationally tractable.

\paragraph{Contributions.}
We build upon the DVDS framework of \citet{dorn2024doubly} to derive novel sensitivity bounds for survival outcomes, including survival functions and RMST. Our approach relies on two survival function estimators proposed by \citet{cheng2022addressing}, which admit closed-form expressions. This enables us to obtain analytical bounds that inherit the double validity and double sharpness properties.

In contrast to \citet{lee2024sensitivity}, whose bounds are defined through optimization problems, we give analytical expressions of our estimators, resulting in improved computational efficiency. Moreover, we work under the assumption of conditionally independent censoring (or informative censoring), which is weaker and more realistic than the independent censoring assumption adopted in this prior work.

Through simulations and real-data applications, we demonstrate that our bounds achieve improved robustness and validity properties, while significantly reducing computational cost, particularly in the case of survival curves.

\paragraph{Outline.}
In Section~\ref{sec:methods}, we introduce notations and briefly review causal survival analysis under the ignorability assumption. Then, we present our novel sensitivity bounds for (difference of) survival functions; extensions to the RMST are provided in the appendix. In Section~\ref{sec:experiments_chap4}, we compare our method with \citet{lee2024sensitivity} on simulated data and illustrate its performance on the Right Heart Catheterization (RHC) and German Breast Cancer Study Group (GBCSG) datasets. Section~\ref{sec:conclusion} concludes with a discussion and future directions.

\subsection*{Generalizable Insights about Machine Learning in the Context of Healthcare}

\begin{itemize}
    \item In observational healthcare data, ignorability is often unrealistic. Partial identification and sensitivity analysis provide a principled way to quantify uncertainty due to hidden confounding.
    
    \item Robustness properties (e.g., double robustness and sharpness) are essential when combining machine learning with causal inference under model misspecification.
    
    \item Time-to-event settings require to handle censoring, which is a key source of bias in survival analysis.
    
    \item Computationally efficient methods are critical for large-scale healthcare data, especially when sensitivity analyses are required for regulatory use of real-world evidence.
\end{itemize}

\section{Related Work}

\paragraph{Sensitivity Analysis under Unobserved Confounding.}
In observational studies, the ignorability assumption is generally not testable and may be violated due to unobserved confounders. Sensitivity analysis addresses this limitation by replacing point identification with partial identification. Instead of returning a single estimate, these methods characterize a \textit{partially identified set} of causal effects compatible with a sensitivity model that quantifies the degree of departure from ignorability through a sensitivity parameter.

A framework for such analyses is the \textit{Marginal Sensitivity Model} (MSM) introduced by \citet{tan2006distributional}. Within this framework, \citet{zhao2019sensitivity} proposed Inverse-Probability-Weighting-based (IPW) \citep{hirano2003efficient} sensitivity bounds for Average Treatment Effects (ATE) under unobserved confounding in the case of binary treatments. Another previous model, \textit{Rosenbaum's Sensitivity Model} (RSM) \citep{rosenbaum2002covariance}, is more appropriate when dealing with matched cohorts.

\paragraph{Sharp and Doubly Robust Sensitivity Bounds.}
Subsequent work has focused on improving the statistical properties of the bounds proposed by \citet{zhao2019sensitivity}. In particular, \citet{dorn2022sharp} introduced \textit{sharp} sensitivity bounds, ensuring that the limiting bounds coincide with the set of causal effects compatible with the sensitivity model and no others. More recently, \citet{dorn2024doubly} proposed \textit{doubly valid and doubly sharp} (DVDS) bounds, which combine propensity score and outcome regression estimators in the same way as the \textit{Augmented Inverse Probability Weighting} (AIPW) estimator \citep{robins1994estimation}.

These bounds enjoy several desirable robustness properties. \textit{Double sharpness} ensures that the bounds remain consistent and sharp when either the propensity score model or the outcome regression model is correctly specified, together with an additional nuisance parameter. \textit{Double validity} guarantees valid coverage of the true treatment effect even when this additional nuisance parameter is misspecified, at the cost of potentially conservative bounds. Related refinements have also been explored by \citet{tan2024model}.

\paragraph{Extensions to Survival Outcomes.}
Compared with binary or continuous outcomes, sensitivity analysis for time-to-event data remains relatively underdeveloped. A recent contribution by \citet{lee2024sensitivity} extended the MSM-based sensitivity analysis to survival outcomes, with a focus on differences in Restricted Mean Survival Time (RMST). While this work represents an important step toward handling survival estimands under unobserved confounding, the resulting bounds are defined through optimization procedures and rely on stronger censoring assumptions.

Moreover, estimations are biased when right censoring occurs. Right censoring can be corrected using weighting techniques such as the \textit{Inverse Probability of Censoring Weighting} (IPCW) approach \citep{koul1981regression} or the partial likelihood \citep{cox1972regression}.

\paragraph{Positioning of our Work.}
Our work builds on these developments by extending DVDS sensitivity bounds to causal survival estimands. In particular, we derive analytical bounds for survival functions and RMST that inherit strong robustness properties while accommodating conditionally independent censoring, thereby addressing limitations of existing approaches.

\section{Methods} \label{sec:methods}

\subsection{Notations and Estimands} \label{sec:notations}

We denote by $\mathbf{X} \in \mathcal{X} \subseteq \mathbb{R}^{p_\mathbf{X}}$ ($p_\mathbf{X} \geq 1$) the vector of observed confounders, and by $A \in \{0, 1\}$ the binary treatment (or exposure). Let $T \in \mathbb{R}_+$ be the event time and $C \in \mathbb{R}_+$ the censoring time. The observed time is $\tilde{T} = \min(T, C) = T \wedge C$, and we define $\Delta = \mathds{1}(T \leq C)$, indicating whether the event is observed ($\Delta = 1$) or censored ($\Delta = 0$). Since observational data typically involve hidden variables, we assume the existence of unobserved confounders $\U \in \mathcal{U} \subseteq \mathbb{R}^{p_\U}$ ($p_\U \geq 1$).

We adopt the \textit{Neyman--Rubin potential outcomes framework}~\citep{rubin1974estimating, neyman1990application}, adapted to survival data as in \citet{lee2024sensitivity} or \citet{voinot2025treatment}. We denote by $T^{(a)}$ and $C^{(a)}$ the potential event and censoring times under treatment $a \in \{0, 1\}$. The following SUTVA assumption links observed and potential outcomes.

\begin{assumption}[SUTVA for survival outcomes] \label{ass:sutva_survival}
For $a \in \{0, 1\}$, if $A = a$, then $T = T^{(a)}$ and $C = C^{(a)}$ (consistency), and there is no interference between units. Equivalently, $T = A \Ttreat + (1-A) \Tcont$ and $C = A \Ctreat + (1-A) \Ccont$.
\end{assumption}

We define the \textit{survival function} $S(t) = \Prob(T > t)$, with potential counterparts $S^{(a)}(t) = \Prob(T^{(a)} > t)$ for $a \in \{0, 1\}$. The \textit{Restricted Mean Survival Time} (RMST) at horizon $\tau$ \citep{royston2013restricted} is defined as
\begin{equation*}
    \mathrm{RMST}(\tau) = \int_0^\tau S(t) \, \mathrm{d}t = \E[T \wedge \tau].
\end{equation*}
An important measure of treatment effect in survival analysis, and therefore our first estimand, is the \textit{difference of survival functions at time $t$} \citep{mao2018propensity, cheng2022addressing}:
\begin{equation*}
    \Delta S(t) = S^{(1)}(t) - S^{(0)}(t) = \E[\indic(\Ttreat > t)] - \E[\indic(\Tcont > t)].
\end{equation*}
A positive value of $\Delta S(t)$ indicates a higher probability of surviving beyond time $t$ under treatment (assuming the event is death). Second, the \textit{difference in RMST} 
\begin{equation*}
    \Delta \mathrm{RMST}(\tau) = \int_0^\tau S^{(1)}(t) \, \mathrm{d}t - \int_0^\tau S^{(0)}(t) \, \mathrm{d}t
\end{equation*}
corresponds to the difference in areas under the survival curves up to time $\tau$, and can be interpreted as an average gain or loss of survival time up to $\tau$ \citep{han2022restricted, voinot2025treatment}. Figure~\ref{fig:survival_analysis} in Appendix~\ref{app:add_background_info} gives a representation of the different quantities at stake.

These causal estimands cannot be directly computed from the observed data, since both potential outcomes and censoring are present. However, they can be estimated from an i.i.d.\ sample $\mathcal{D} = \{ \X_i, A_i, \tilde{T}_i, \Delta_i \}_{i=1}^n$ of size $n \geq 1$, with individual index $i$, under appropriate causal assumptions.

\subsection{Survival Analysis and Causal Assumptions with Unobserved Confounders}

We consider that, had we observed $\U$, we would have captured all common causes of $A$ and $T^{(a)}$, and $A$ and $C^{(a)}$ not included in $\X$ \citep{imbens2003sensitivity, kallus2021causal}. This translates into the following ($\X, \U$)\textit{-ignorability assumption}.

\begin{assumption}[($\X, \U$)-ignorability] \label{ass:xu-ignorability_survival}
    For $a \in \{0, 1\}$, $(T^{(a)}, C^{(a)}) \independent A | \X, \U$.
\end{assumption}
We also assume \textit{conditional independent censoring}, or \textit{informative censoring} \citep{mao2018propensity}.
\begin{assumption}[Conditional independent censoring with $\U$] \label{ass:cond_indep_censoring_u}
    For $a \in \{0, 1\}$, $T^{(a)} \independent C^{(a)} | \X, \U, A$.
\end{assumption}

We define the \textit{nominal propensity score} $e(\X) = \Prob(A=1 | \X)$ and the \textit{true propensity score} $e(\X, \U) = \Prob(A=1 | \X, \U)$, and assume \textit{positivity}~\citep{rosenbaum1983central}.
\begin{assumption}[Positivity] \label{ass:positivity_surv}
    $\forall \mathbf{x} \in \mathcal{X}, \mathbf{u} \in \mathcal{U}, \, 0 < e(\mathbf{x}) < 1$ and $0 < e(\mathbf{x},\mathbf{u}) < 1$.
\end{assumption}

We next introduce $S(t|\X=\x, A=a)$ the \textit{conditional survival functions}, and the \textit{conditional censoring function} $\bar{G}(t | \X=\x,\U=\mathbf{u}, A=a) = \Prob(C > t | \X=\x, \U=\mathbf{u}, A=a)$. We also define the time $\taumax > 0$, which is the maximum time such that $S(\taumax |\X=\x, A=a) > 0$ and $\bar{G}(\taumax |\X=\x, A=a) > 0$, for all $\x$ in $\mathcal{X}$ and $a$ in $\{0, 1\}$. 

For simplicity, we assume that the censoring mechanism does not depend on $\U$.
\begin{assumption}[Independence of the censoring function with respect to $\U$] \label{ass:censoring_fun_indep_u} For all  $t \in [0, \taumax]$,
        $\bar{G}(t | \X, \U, A) = \bar{G}(t | \X, A).$
\end{assumption}
This assumption is weaker than that of \citet{lee2024sensitivity}, who impose complete independence, i.e., $C \independent (A, \X, \U)$ (non-informative censoring).

Under these assumptions, the potential survival function admits the forms
\begin{align*}
    S_\mathrm{I}^{(1)}(t) = 1 - \E \biggl[ \frac{A \Delta \indic(\tilde{T} \leq t)}{e(\X, \U) \bar{G}(\tilde{T} | \X, A=1)} \biggr] \quad \text{and} \quad S_\mathrm{II}^{(1)}(t) = \E \biggl[ \frac{A \indic(\tilde{T} > t)}{e(\X, \U) \bar{G}(t | \X, A=1)} \biggr].
\end{align*}
See Appendix~\ref{app:proofs} for the derivation of Forms~I and II of the potential survival function. Intuitively, the two forms resemble a traditional Inverse Probability of Treatment Weighting (IPTW) approach where an observed outcome is weighted by the assigned treatment and the inverse of the propensity score to correct for selection bias. Moreover, the division by the censoring function corrects for censoring bias in both forms (IPCW). Note also that Form~I uses only uncensored observations (because of $\Delta$) whereas Form~II leverages all observations. Finally, Form~I may be more stable because the censoring function is evaluated at $\tilde{T}$ whereas it is evaluated at $t$ in Form~II: when the positivity of the censoring function is violated for high values of time, small values of the censoring function are ‘‘diluted" by the expectancy in Form~I but not in Form~II. In practice, we recommend to work with Form~I, as it is more stable and as we did not identify settings where Form~II performed better. Nevertheless, in the following, we keep explanations and results for both forms.

As $\U$ is, by definition, not observed, $e(\X, \U)$ cannot be estimated from the data. Our goal is therefore not to get a point estimate but rather to derive \textit{sets of estimates} by controlling the deviation between $e(\X)$ and $e(\X, \U)$.

\subsection{Marginal Sensitivity Model and DVDS Bounds }

To this end, we adopt the \textit{Marginal Sensitivity Model} (MSM) of \citet{tan2006distributional}, which bounds the odds ratio between the true and nominal propensity scores by a parameter $\Gamma \geq 1$:
\begin{equation} \label{eqn:MSM_U_chap4}
    \Gamma^{-1} \leq \frac{\Prob(A=1 | \X=\x, \U=\unc) / \Prob(A=0 | \X=\x, \U=\unc)}{\Prob(A=1 | \X=\x) / \Prob(A=0 | \X=\x)} \leq \Gamma.
\end{equation}
When $\Gamma = 1$, this reduces to $\X$-ignorability (or simply ignorability): $\forall a \in \{0, 1\}$,\\ $(T^{(a)}, C^{(a)}) \independent A | \X$
(because the nominal and true propensity scores are equal). Larger values of $\Gamma$ represent a departure from this assumption. For ways to calibrate the sensitivity parameter $\Gamma$, we refer the reader to \citet{cinelli2020making} and \citet{baitairian2025sensitivity}.

For convenience, we reformulate the MSM in terms of the potential outcome $Y_t(a) = \indic(T^{(a)} \leq t)$ (because $S^{(a)}(t) = 1 - \E[\indic(T^{(a)} \leq t)]$). By Proposition~\ref{prop:MSM_Ya_equiv}, Equation~\eqref{eqn:MSM_U_chap4} is equivalent to
    \begin{equation} \label{eqn:MSM_Bayes_chap4}
        \Gamma^{-1} \leq \frac{f(Y_t(a)=y|\X=\x, A=0)}{f(Y_t(a)=y|\X=\x, A=1)} \leq \Gamma,
    \end{equation}
    for $a \in \{0, 1\}$, $t \in [0, \taumax]$, and $\x \in \mathcal{X}$.

To derive the sensitivity bounds on $S_\mathrm{I}^{(1)}(t)$, we need to define several quantities:
\begin{itemize}
    \item a modified outcome 
    $\YI = \Delta \indic(\tilde{T} \leq t) / \bar{G}(\tilde{T} | \X, A=1)$;
    \item the conditional quantile $Q_\nu(\X=\x, A=1)$ of order $\nu$ of the distribution of $\YI$ conditionally on $\X=\x$ and $A=1$;
    \item the regression functions $\rho_+$ and $\rho_-$:
   \begin{align*}
        \rho_\pm(t, \X, A=1, \Gamma) & = \Gamma^{-1} \E[ \YI | \X, A=1] \\
        & \quad + (1 - \Gamma^{-1}) \E \biggl[ Q_\pm(\X, 1) + \frac{\{ \YI - Q_\pm(\X, 1) \}_\pm}{1-\gamma} \bigg| \X, A=1 \biggr] 
    \end{align*} with $Q_+(\X, 1) = Q_\gamma(\X, A=1)$, $Q_-(\X, 1) = Q_{1-\gamma}(\X, A=1)$ and $\gamma = \Gamma / (1 + \Gamma)$. We write $\pm$ when we must read the formula first with a $+$ sign and then with a $-$ sign, and write $\mp$ for the opposite. 
\end{itemize}

Doubly valid and doubly sharp bounds combine two types of bounds: some based on modified outcome regressions and Equation~\eqref{eqn:MSM_Bayes_chap4}, and some based on propensity scores and Equation~\eqref{eqn:MSM_U_chap4}. Details are provided in Appendix~\ref{app:theo_bounds_estim_I}. The next proposition gives these two different upper bounds derived for $S_\mathrm{I}^{(1)}(t)$. 

\begin{proposition}
    Under the MSM of Equation~\eqref{eqn:MSM_U_chap4}, both
    \begin{equation} \label{eqn:ub_I_modified_outcome_reg}
        1 - \E[ A \YI + (1-A) \rho_-(t, \X, 1, \Gamma)]
    \end{equation}
    and 
    \begin{equation} \label{eqn:ub_I_prop_score}
         1 - \E \biggl[ \frac{A Q_{1-\gamma}(\X, 1)}{e(\X)} + A (\YI - Q_{1-\gamma}(\X, 1)) \biggl( 1 + \frac{1-e(\X)}{e(\X)} \Gamma^{-\sign(\YI - Q_{1-\gamma}(\X, 1))} \biggr) \biggr]
    \end{equation}
    are sharp upper bounds for $S_\mathrm{I}^{(1)}(t)$.
\end{proposition}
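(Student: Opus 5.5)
The approach is to reduce the problem to the DVDS framework of \citet{dorn2024doubly}, applied to the modified outcome $\YI$ instead of a generic outcome $Y(1)$. First I rewrite $S_\mathrm{I}^{(1)}(t) = 1 - \E[\YI(1)]$. Here $\YI(1)$ is the modified outcome built from $(\Ttreat,\Ctreat)$, so that $\E[\YI(1)] = \E[A\YI/e(\X,\U)]$. This identity holds under Assumptions~\ref{ass:sutva_survival}--\ref{ass:censoring_fun_indep_u}, through the derivation of Form~I in Appendix~\ref{app:proofs}. The key observation is that $\bar G(\cdot|\X,A=1)$ does not depend on $\U$ (Assumption~\ref{ass:censoring_fun_indep_u}). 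Hence $\YI$ is a fixed measurable function of observed data, and the ratio $\E[\YI(1)\mid \X,\U] = \E[\YI \mid \X,\U,A=1]$ behaves like an ordinary outcome regression. Upper bounding $S_\mathrm{I}^{(1)}(t)$ is therefore equivalent to lower bounding $\psi = \E[\YI(1)] = \E[A\YI] + \E[(1-A)\E[\YI(1)\mid \X, A=0]]$.

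For the outcome-regression form~\eqref{eqn:ub_I_modified_outcome_reg}, I use the Bayes reformulation~\eqref{eqn:MSM_Bayes_chap4} (Proposition~\ref{prop:MSM_Ya_equiv}), transported to $\YI$. It says that the unobservable law of $\YI(1)$ given $(\X,A=0)$ has a density ratio $w \in [\Gamma^{-1},\Gamma]$ with respect to the observed law given $(\X,A=1)$, with $\E[w\mid\X,A=1]=1$. Minimising $\E[w\YI\mid \X,A=1]$ over such $w$ is a fractional-knapsack problem. The solution puts weight $\Gamma$ below the $(1-\gamma)$-quantile and weight $\Gamma^{-1}$ above it, with $1-\gamma = 1/(1+\Gamma)$. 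A short computation then shows that the optimal value equals $\rho_-(t,\X,1,\Gamma)$, by rewriting the minimising tail expectation in its Rockafellar--Uryasev/CVaR form $Q_{1-\gamma} - \E[\{\YI-Q_{1-\gamma}\}_-]/(1-\gamma)$. Plugging this into $\psi$ gives~\eqref{eqn:ub_I_modified_outcome_reg}. Sharpness follows because the optimal $w$ is attained: I construct $\U$ as the indicator of $\YI(1)$ lying above or below the quantile, and then check that the implied $e(\X,\U)$ satisfies~\eqref{eqn:MSM_U_chap4}. If the conditional law has atoms at the quantile, I need a randomized tie-break.

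For the propensity form~\eqref{eqn:ub_I_prop_score}, I start from $\psi = \E[A\YI/e(\X,\U)]$. Under~\eqref{eqn:MSM_U_chap4}, the term $1/e(\X,\U)$ lies in $1+\frac{1-e(\X)}{e(\X)}[\Gamma^{-1},\Gamma]$, subject to the normalisation $\E[A/e(\X,\U)\mid\X]=1$. This is again a linear program in the weights, and the same quantile-thresholding argument applies: the minimiser takes factor $\Gamma$ when $\YI<Q_{1-\gamma}$ and factor $\Gamma^{-1}$ otherwise. Because of the normalisation constraint, I can subtract $Q_{1-\gamma}(\X,1)$ from $\YI$ and add back $AQ_{1-\gamma}/e(\X)$. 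Doing so yields exactly the $\Gamma^{-\sign(\cdot)}$ expression in~\eqref{eqn:ub_I_prop_score}, as in \citet{dorn2022sharp}. Sharpness uses the same $\U$ construction as before.

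The main obstacle is showing that the two optimizations coincide and are attained within the model. This requires checking that the thresholding weight defined through $Q_{1-\gamma}$ actually yields a valid $e(\X,\U)$, with $\E[A/e(\X,\U)\mid \X]=1$ enforced by the quantile definition, including at atoms. $\YI$ has a point mass at $0$ (whenever $\Delta=0$ or $\tilde T>t$), so I expect that the tie-breaking at atoms will need the most care. I will handle it via an auxiliary uniform variable included in $\U$.
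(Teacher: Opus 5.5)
Your proposal is correct and follows essentially the paper's route: reduce to the framework of \citet{dorn2024doubly} with the outcome replaced by $\YI$, obtain $\rho_-$ as the optimum of the density-ratio linear program in CVaR/Rockafellar--Uryasev form, and get the propensity-score form from the quantile-thresholded worst-case propensity $e_-(\X,\YI,\Gamma)$ of \citet{dorn2022sharp}, recentred at $Q_{1-\gamma}(\X,1)$ through $\E[A/e_-(\X,\YI,\Gamma)\mid\X]=1$; your two deviations are improvements, namely transporting the MSM to the density ratio of $\YI(1)$ itself (the paper uses the ratio of the binary $Y_t(1)$ but then takes the CVaR of $\YI$) and building the extremal $\U$ explicitly rather than citing \citet{baitairian2025sharp}. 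One caution applies equally to the paper: your extremal $\U$ is a function of $\YI(1)$, hence of $\Ctreat$, so it generally violates Assumptions~\ref{ass:cond_indep_censoring_u}--\ref{ass:censoring_fun_indep_u}, and what you obtain is sharpness for the IPW functional $S_\mathrm{I}^{(1)}(t)$ over MSM-compatible propensities, not for $S^{(1)}(t)$ within the full model (there the bound built from the binary $\indic(\Ttreat \le t)$ is attainable and can be strictly tighter under censoring).
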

The corresponding lower bounds and details on the derivation are given in Appendix~\ref{app:theo_bounds_estim_I}.
Similar ideas can be used for $S_\mathrm{II}^{(1)}(t)$ by working with $\E[\indic(\Ttreat > t)|\X=\x]$ instead of $1-\E[\indic(\Ttreat \leq t)|\X=\x]$.

Combining these two bounds entails the following theorem. 
\begin{theorem}[DVDS bounds on $S_\mathrm{I}^{(1)}(t)$] \label{theo:dvds_estim_I}
    Define the lower $S_{\mathrm{I},-}^{(1)}(t, \Gamma)$ and upper $S_{\mathrm{I},+}^{(1)}(t, \Gamma)$ bounds for $S_\mathrm{I}^{(1)}(t)$ as
    \begin{align}
         &S_{\mathrm{I}, \pm}^{(1)}(t, \Gamma) = 1 - \E[ A \YI + (1-A)\rho_\mp(t, \X, 1, \Gamma)] \nonumber \\
         - &\E \biggl[ A \frac{1 - e(\X)}{e(\X)} \biggl( Q_\mp(\X, 1) + \Gamma^{\mp \sign(\YI - Q_\mp(\X, A=1))} ( \YI - Q_\mp(\X, 1)) - \rho_\mp(t, \X, 1, \Gamma) \biggr) \biggr], \label{eqn:PIS_estim_I}
        \end{align} or equivalently
        \begin{align}
        & S_{\mathrm{I}, \pm}^{(1)}(t, \Gamma) = 
         1 - \E \biggl[ \frac{A Q_\mp(\X, 1)}{e(\X)} + A (\YI - Q_\mp(\X, 1)) \biggl( 1 + \frac{1-e(\X)}{e(\X)} \Gamma^{\mp \sign( \YI - Q_\mp(\X, 1))} \biggr) \biggr] \nonumber \\
        & - \E \biggl[ \biggl( 1 - \frac{A}{e(\X)} \biggr) \rho_\mp(t, \X, 1, \Gamma) \biggr]. \label{eqn:PIS_estim_I_prop_score}
    \end{align}
     Under Assumptions~\ref{ass:sutva_survival} to \ref{ass:censoring_fun_indep_u}, and when the censoring function $\bar{G}$ is consistently estimated, these bounds are doubly valid and doubly sharp with respect to the propensity scores and modified outcome regressions, as defined in \citet{dorn2024doubly}.
\end{theorem}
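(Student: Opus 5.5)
Our strategy is to reduce the statement to the DVDS theorem of \citet{dorn2024doubly}, applied with the observed outcome replaced by the modified outcome $\YI$. The key observation is that, under Assumptions~\ref{ass:sutva_survival} to \ref{ass:censoring_fun_indep_u}, $\YI$ acts as an unbiased proxy for $Y_t(1)=\indic(\Ttreat\le t)$ given $(\X,\U,A=1)$.

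\textbf{Step 1: conditional unbiasedness of $\YI$.} We will show that
\[
\E[\YI \mid \X,\U,A=1] = \E[Y_t(1) \mid \X,\U,A=1].
\]
This follows from the usual IPCW argument. On $\{\Delta=1\}$ we have $\tilde T = T = \Ttreat$. Conditioning on $(T,\X,\U,A=1)$, Assumption~\ref{ass:cond_indep_censoring_u} gives
\[
\E[\Delta \mid T,\X,\U,A=1] = \bar G(T\mid \X,\U,1),
\]
and Assumption~\ref{ass:censoring_fun_indep_u} turns this into $\bar G(T\mid\X,1)$, which cancels the denominator. This is the same computation that yields Form~I. Consequently $S^{(1)}(t) = 1-\E[\YI \mid A=1]$ and, more generally,
\[
S^{(1)}(t) = 1-\E\bigl[A\YI + (1-A)\,\E[Y_t(1)\mid \X,A=0]\bigr].
\]
The counterfactual term is constrained by Equation~\eqref{eqn:MSM_Bayes_chap4}.

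\textbf{Step 2: transfer the MSM to $\YI$.} Dorn et al.'s sharp bound needs the likelihood ratio constraint on the conditional law of the outcome actually used. Here $\YI$ is a measurable function of $(Y_t(1), C, \X)$. Because the censoring law given $(\X,\U,A)$ does not depend on $\U$ (Assumption~\ref{ass:censoring_fun_indep_u}), the ratio of the conditional distributions of the full data $(\Ttreat,\Ctreat)$ under $A=0$ and $A=1$ given $\X$ inherits the bound $[\Gamma^{-1},\Gamma]$ from Equation~\eqref{eqn:MSM_U_chap4}, by the same Bayes-rule argument as in Proposition~\ref{prop:MSM_Ya_equiv}. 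The bound then passes to $\YI$ by pushforward. With this in hand, the worst-case reweighting of the $A=1$ conditional law of $\YI$ is attained by the quantile-threshold weights at level $\gamma$. This identifies $\rho_\mp$ as the sharp bound on $\E[\YI\text{-counterfactual}\mid\X,A=0]$ and gives Equation~\eqref{eqn:ub_I_modified_outcome_reg} and Equation~\eqref{eqn:ub_I_prop_score} (the preceding proposition, which we take as given).

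\textbf{Step 3: the AIPW-type combination and the algebra.} Expanding $(1-A/e)\rho$ and regrouping shows that Equation~\eqref{eqn:PIS_estim_I} and Equation~\eqref{eqn:PIS_estim_I_prop_score} coincide. We then verify the two robustness properties.
\begin{itemize}
\item If $e$ is correct, the correction term $\E[(1-A/e)\rho]$ vanishes for any $\rho$, leaving the propensity bound.
\item If $\rho$ is correct, the correction term in \eqref{eqn:PIS_estim_I} has conditional mean zero given $\X$ for any $e$. This uses the identity $\E[Q+\Gamma^{\mp\sign}(\YI-Q)\mid\X,A=1]=\rho_\mp$, which follows from the definition of $\rho$ and $\gamma=\Gamma/(1+\Gamma)$.
\end{itemize}
This gives double sharpness when the quantile $Q$ is also correct.

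For double validity with misspecified $Q$, we will reproduce Dorn et al.'s argument that, for any fixed $\tilde Q$, the functional with threshold $\tilde Q$ is still a valid bound. Their argument rests on the fact that the quantile minimizes the check-loss objective, so a wrong threshold only loosens the bound. The bound therefore remains conservative under the correct $e$ or under the correct $\rho$ evaluated at $\tilde Q$. Consistency of $\hat{\bar G}$ is used only to ensure that the plug-in $\YI$ converges to its population version, so that all of the above carries over.

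\textbf{Main obstacle.} We expect Step 2 to be the hardest part: rigorously justifying that the MSM on $Y_t(1)$, or on $\U$, constrains the law of the IPCW-modified outcome $\YI$ with the same $\Gamma$. This relies crucially on Assumption~\ref{ass:censoring_fun_indep_u}. We will first try to prove the density-ratio bound directly for the joint law of $(\Ttreat,\Ctreat)$ given $\X$, and only then push it forward to $\YI$.
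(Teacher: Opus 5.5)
Your route matches the paper's. You treat $\YI$ as the outcome of \citet{dorn2024doubly}, take the outcome-regression bound with $\rho_\mp$ and the worst-case propensity bound of \citet{dorn2022sharp}, and combine them AIPW-style. Your Step~3 is correct and more explicit than the paper. The two displays agree pointwise, and $\E[Q+\Gamma^{\mp\sign(\YI-Q)}(\YI-Q)\mid\X,A=1]=\rho_\mp$ holds for \emph{any} threshold $Q$. For example, $\rho_-=Q+\Gamma^{-1}\E[\{\YI-Q\}_+\mid\X,A=1]+\Gamma\,\E[\{\YI-Q\}_-\mid\X,A=1]$, using $1/(1-\gamma)=1+\Gamma$; this is exactly what double validity needs.

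There are two slips. First, $1-\E[\YI\mid A=1]=\Prob(\Ttreat>t\mid A=1)$, not $S^{(1)}(t)$; your next display is the correct one. Second, Step~2 is not where Assumption~\ref{ass:censoring_fun_indep_u} is needed. The quantity $\Prob(A=1\mid\X,\Ttreat,\Ctreat)=\E[e(\X,\U)\mid\X,\Ttreat,\Ctreat]$ stays in the MSM interval by \eqref{eqn:MSM_U_chap4} and Assumption~\ref{ass:xu-ignorability_survival} alone; this is the lemma of \citet{tan2024model} behind Proposition~\ref{prop:MSM_Ya_equiv}. So the transfer to $\YI$ is the easy part. Assumption~\ref{ass:censoring_fun_indep_u} is really used in Step~1, to replace $\bar G(\cdot\mid\X,\U,1)$ by $\bar G(\cdot\mid\X,1)$.

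The genuine gap is sharpness. Pushing the density ratio forward only shows that the true weight lies in the class of $[\Gamma^{-1},\Gamma]$-valued, mean-one functions of $\YI$, which gives validity. To call the quantile-threshold optimum sharp you need the converse: that $\Gamma^{\mp\sign(\YI-Q_\mp)}$ is induced by some $(\U,e(\X,\U))$ compatible with the observed law and Assumptions~\ref{ass:sutva_survival}--\ref{ass:censoring_fun_indep_u}. In general it is not. Assumptions~\ref{ass:xu-ignorability_survival}, \ref{ass:cond_indep_censoring_u} and \ref{ass:censoring_fun_indep_u} together give $\Ctreat\independent(\U,\Ttreat,A)\mid\X$. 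Hence every admissible density ratio is a function of $\Ttreat$ (and $\X$) only, whereas the threshold weights depend on $\Delta$ and $\bar G(\tilde T\mid\X,1)$, i.e.\ on $\Ctreat$.

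Here is a concrete case. Let $p(\x)=\Prob(\Ttreat\le t\mid\X=\x,A=1)>1/(1+\Gamma)$, and suppose censoring before $t$ makes $\Prob(\YI>0\mid\X=\x,A=1)\le 1/(1+\Gamma)$. Then $Q_\gamma=0$ and $\rho_+=\Gamma p(\x)$, which can exceed $1$. The sharp upper value of $\E[Y_t(1)\mid\X=\x,A=0]$ is $1-\Gamma^{-1}(1-p(\x))<\Gamma p(\x)$, attained with weight $\Gamma^{-1}$ on $\{\Ttreat>t\}$.

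So your argument proves double validity. It proves double sharpness only for the functional of \citet{dorn2024doubly} with $\YI$ treated as the outcome, not sharpness of the identified set for $S^{(1)}(t)$ under Assumptions~\ref{ass:sutva_survival}--\ref{ass:censoring_fun_indep_u}. The paper's proof has the same feature rather than a fix. Its class $\mathcal{W}(\Gamma)$ is indexed by the binary $Y_t$, yet the optimum is identified with the CVaR of $\YI$, and sharpness is asserted by citation. The theorem's sharpness claim should therefore be read in this weaker sense.
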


An equivalent theorem for DVDS bounds on $S_\mathrm{II}^{(1)}(t)$ is given in Appendix~\ref{app:theo_bounds_estim_II}. On the first line of Equation~\eqref{eqn:PIS_estim_I}, we recognize the formulation of the bounds via modified outcome regression (Equation~\eqref{eqn:ub_I_modified_outcome_reg}). The second line has mean zero and introduces robustness in the sense that, if $\hat{\rho}_\pm$ was estimated consistently, the second line would be null, even if $\hat{e}(\X)$ was misspecified, and the sharpness property from the formulation of the bounds via modified outcome regression would apply. On the first line of Equation~\eqref{eqn:PIS_estim_I_prop_score}, we recognize the formulation of the bounds via propensity scores (Equation~\eqref{eqn:ub_I_prop_score}) and a similar line of reasoning applies.

Remark that the regression function $\rho_\pm$ depends on the sensitivity parameter $\Gamma$ and on the Conditional Value at Risk (CVaR), also called Expected Shortfall (the second and longer conditional expectation). See, for example, \citet{rockafellar2000optimization} for a related literature on the CVaR. In our case, the CVaR is an expected value measured in the tail of the distribution of $Y$ conditionally on $\mathbf{X}$ and $A$. When $\Gamma = 1$ (no unobserved confounding), $\rho_\pm$ is simply the expected value of $Y$ conditionally on $\mathbf{X}$ and $A$. When $\Gamma$ tends to infinity, $\rho_\pm$ gets closer to the CVaR, so to the tail of the distribution of the outcome.

Observe that the results from \citet{dorn2024doubly} are also compatible with survival outcomes and we can replace their continuous outcome ``$Y$" with
    \begin{align*}
        \YI = \Delta \indic(\tilde{T} \leq t) / \bar{G}(\tilde{T} | \X, A=1) \quad \text{and} \quad \YII = \indic(\tilde{T} > t) / \bar{G}(t | \X, A=1)
    \end{align*}
to recognize, respectively, our DVDS bounds on Form~I and Form~II. However, our Theorems~\ref{theo:dvds_estim_I} and \ref{theo:dvds_estim_II} are based on different proofs of the bounds via modified outcome regression. Note also that our bounds inherit the efficiency property of the bounds from \citet{dorn2024doubly}.

Bounds for the unexposed group, $S_{\pm}^{(0)}(t, \Gamma)$, are obtained by replacing $A$ with $1-A$ (therefore, conditioning on $A=1$ also becomes conditioning on $A=0$) and replacing $e(\X)$ with $1 - e(\X)$. Bounds for the difference of survival functions are obtained by taking $S_{-}^{(1)}(t, \Gamma) - S_{+}^{(0)}(t, \Gamma)$ (lower bound) and $S_{+}^{(1)}(t, \Gamma) - S_{-}^{(0)}(t, \Gamma)$ (upper bound).

With an i.i.d.\ sample $\mathcal{D}$ as defined earlier, empirical estimators of Equations~\eqref{eqn:PIS_estim_I} and \eqref{eqn:PIS_estim_II} can be obtained by replacing the nuisance parameters with plug-in estimators $\hat{e}$, $\hat{\bar{G}}$, $\hat{\rho}_\pm$, and $\hat{Q}_\pm$ using $K$-fold cross-fitting, and by replacing the expectations with empirical averages. In particular, the modified outcome regression $\rho_\pm$ can be achieved by regressing $\Gamma^{-1} \YI + (1 - \Gamma^{-1}) (Q_\pm(\X, 1) + (1-\gamma)^{-1} \{ \YI - Q_\pm(\X, 1) \}_\pm)$ on $\X$ and $A$ for Form~I (and similarly for Form~II). The resulting intervals are called \textit{Point Estimate Intervals} (PEI). Confidence intervals (CIs) can be obtained via the percentile bootstrap method \citep{zhao2019sensitivity} or with less computationally intensive Wald-type CIs \citep{dorn2024doubly}.

\section{Experiments} \label{sec:experiments_chap4}

In this section, we compare the performance of our DVDS bounds for Forms~I and II to the bounds of \citet{lee2024sensitivity} for the survival functions and difference of survival functions on simulated and real-world data. Results for bounds on the RMST and difference in RMST are given in Appendix~\ref{app:add_exp}. The R repository \href{https://github.com/Sanofi-Public/CSM_dvds_bounds_survival_outcomes}{\textbf{CSM\_dvds\_bounds\_survival\_outcomes}} to reproduce the results is publicly available on the Sanofi-Public GitHub under a non-commercial license.

\subsection{Implementation Details}

All experiments were performed under the R software (version 4.3.2) \citep{rstatisticalsoftware} using parallel computing on 32 CPUs (Amazon EC2 c6i.8xlarge instances). We parallelized computations between different values of time $t$ (survival function) with our DVDS bounds and parallelized the optimization step in the method of \citet{lee2024sensitivity}.

5-fold cross-fitting was used to estimate the nuisance parameters and avoid overfitting. The nominal propensity score $\hat{e}$ was estimated via logistic regression (\texttt{glm} function from the \texttt{stats} package, version 4.3.2), the conditional quantile regressions $\hat{Q}_\pm$ via linear quantile regression (\texttt{rq} function from the \texttt{quantreg} package, version 5.97) with the Frisch–Newton interior point method, the modified outcome regressions $\hat{\rho}_\pm$ via random forests (\texttt{ranger} function from the \texttt{ranger} package, version 0.16.0), and the censoring function $\hat{\bar{G}}$ via a Cox proportional hazard model (\texttt{coxph} function from the \texttt{survival} package, version 3.5-7) with the Breslow method for dealing with ties. Except for the \texttt{rq} and \texttt{coxph} functions, we used the default hyperparameter values. In the second application on real data, 95\%-confidence intervals were computed via the percentile bootstrap method with $B = 200$ bootstrap samples.

We reimplemented the direct optimization method from \citet{lee2024sensitivity} for the difference in RMST (Equations~(9) to (12) from their paper) and adapted it for the difference of survival functions by removing the integration step. Even if it is faster, we did not implement the approximated method presented in their Section~3.3 as it relies on additional assumptions on the censoring process. As they suggest in Section~3.4, we used the \texttt{optim} function from the \texttt{stats} package (more precisely, we used the \texttt{optimParallel} function, a parallelized implementation of \texttt{optim}, from the \texttt{optimParallel} package, version 1.0-2) with the L-BFGS-B optimization method, a maximum number of iterations of 100, and a numeric forward difference approximation of the gradient.

Following \citet{dorn2024doubly}, we do not normalize (or stabilize) our estimators as in Appendix~\ref{proof:estimand_I} and \ref{proof:estimand_II}. Experimental results support this choice by showing good stability properties, even without normalization.

\subsection{Results on Synthetic Experiments} \label{sec:results_synth_exp}

We applied our methodology to a simulated dataset whose construction was inspired by \citet{lee2024sensitivity} and \citet{baitairian2025sensitivity}. In our experiments, we considered $p_\mathbf{X} = 2$ observed confounders and $p_\mathbf{U} = 2$ unobserved confounders. We chose $\mathbf{X} \sim \mathcal{U}(-1, 1)^{p_\mathbf{X}}$, where $\mathcal{U}(a, b)^p$ is the uniform distribution on $(a, b)$ of dimension $p$. For $j$ in $[\![1, p_\mathbf{U}]\!]$, conditionally on $\mathbf{X}=\mathbf{x}$, we designed each unobserved confounder $U_j$ as
\begin{equation*}
    U_j|\mathbf{X}=\mathbf{x} \sim \lambda G + (1 - \lambda) \beta_j^T \mathbf{x} = \mathcal{N} \bigl( (1-\lambda) \beta_j^T \mathbf{x}, \, \lambda^2 \bigr),
\end{equation*}
with $G \sim \mathcal{N}(0, 1)$, $0 < \lambda < 1$, and $\beta_j = (\beta_{j,1}, \dots, \beta_{j,p_\mathbf{X}}) \in \mathbb{R}^{p_\mathbf{X}}_+$. We chose the distribution of $A$ conditionally on $\mathbf{X}$ and $\mathbf{U}$ to be a Bernoulli satisfying the MSM with a true sensitivity parameter $\Gamma^\star = 3$. We designed the true propensity score $e(\mathbf{X}, \mathbf{U}) = \Prob(A=1|\mathbf{X}, \mathbf{U})$ such that it marginalizes to the nominal propensity score $e(\mathbf{X}) = \Prob(A=1|\mathbf{X}) = \mathrm{logistic}(\delta^T \mathbf{X} + 0.5)$, with $\delta \in \mathbb{R}^{p_\mathbf{X}}$. We explain how this is done in Appendix~\ref{app:complete_sim_setup}.

The potential outcome $T^{(a)}$, with $a$ in $\{0, 1\}$, was randomly generated using the inverse transform sampling method via a proportional hazards model with Weibull-distributed baseline hazard \citep{bender2005generating}. In particular,
\begin{equation*}
    T^{(a)} = 10 \cdot \biggl( \frac{- \log (U^{(a)})}{0.95 \cdot \exp \bigl( a\log(5) + \beta_\X^T \X + \beta_\U^T \U \bigr)} \biggr)^{1/1.8},
\end{equation*}
where $U^{(a)} \sim \mathcal{U}(0, 1)$. The event time $T$ was then defined as $T = \indic(A=1) \cdot \Ttreat + \indic(A=0) \cdot \Tcont$ and the censoring time $C$ followed a Weibull distribution of shape 6 and scale 10. Note that, not to disadvantage the method of \citet{lee2024sensitivity}, $T$ and $C$ are independent and $C$ is independent from $\X$ and $A$, which corresponds to non-informative censoring. We provide experimental results under informative censoring in Appendix~\ref{app:add_exp}. We generated 20 Monte-Carlo samples, each of size $n = 1000$, and computed the sensitivity bounds for 10 values of time equally spaced between 0.1 and 9. Finally, administrative censoring was applied after the 0.95-quantile of the observed times. The complete simulation setup is given in Appendix~\ref{app:complete_sim_setup}. See also Appendix~\ref{app:add_exp} for additional results (higher number of Monte-Carlo samples, larger sample sizes, high dimensional dataset...).

\begin{figure}[h!]
    \centering
    \includegraphics[width=0.6\linewidth]{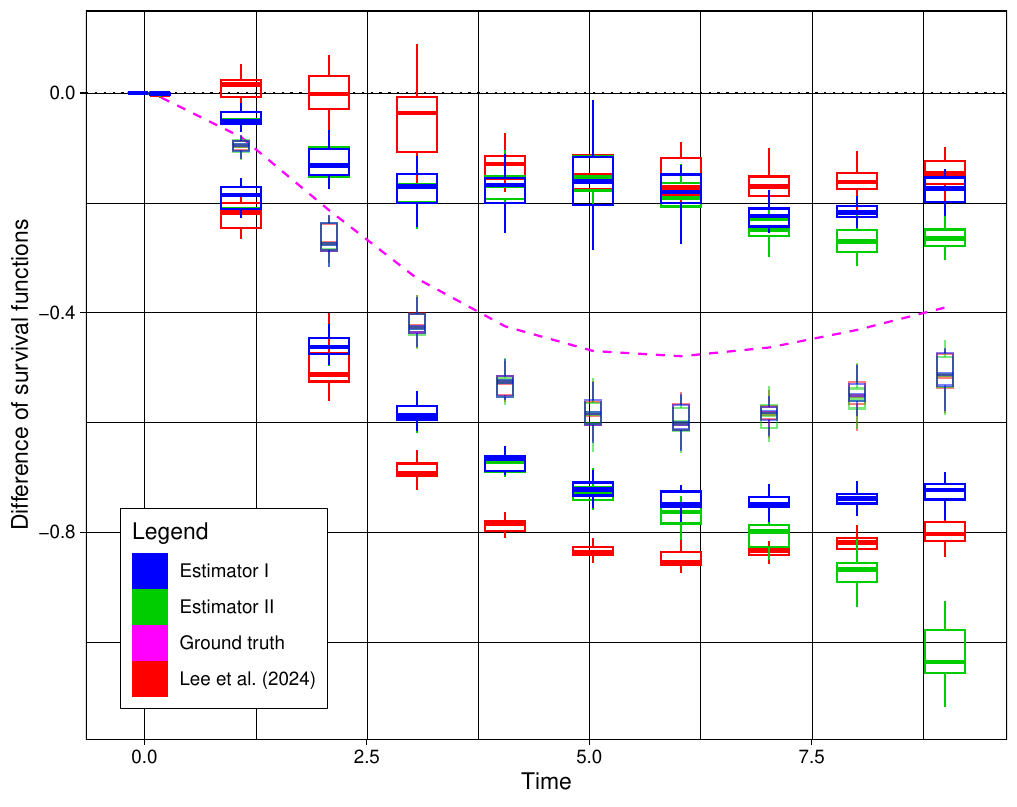}
    \caption{Sensitivity analysis for the difference of survival functions on the simulated data. The dotted lines in magenta are the true difference of survival functions. The large boxplots correspond to the estimated upper and lower sensitivity bounds (PEI) with Form~I (in blue), Form~II (in green), and the method from \citet{lee2024sensitivity} (in red) on 20 Monte-Carlo samples, for $\Gamma = 3$. The narrower and transparent boxplots correspond to the value under ignorability ($\Gamma = 1$) for each method.}
    \label{fig:simul_diff_surv_fun_results}
\end{figure}

Results of sensitivity analysis for the difference of survival functions are given in Figure~\ref{fig:simul_diff_surv_fun_results} for our DVDS bounds via Forms~I (in blue), II (in green) and the bounds from \citet{lee2024sensitivity} (in red). Under ignorability (narrow and transparent boxplots), the estimations deviate slightly from the ground truth (dotted lines in magenta) because of unobserved confounders, but are still close to the true value. Note that the three methods give approximately the same estimation.

When $\Gamma = 3$, that is, the same value $\Gamma^\star$ that generated the data, the intervals almost always contain the true value of the difference of survival functions. Remark that the DVDS bounds for Form~II are unstable for high values of time $t$. This behavior was also identified by \citet{cheng2022addressing} and can be explained by near violations of the positivity of the censoring process. On the contrary, the DVDS bounds for Form~I remain stable.

Notice that the intervals formed by the medians of our DVDS bounds for Form~I are also practically always included in the analogous intervals of the method from \citet{lee2024sensitivity}. These bounds show good coverage of the real difference of survival functions, and sharpness with respect to the bounds from \citet{lee2024sensitivity}.

Finally, our bounds also enjoy shorter computation times. To obtain Figure~\ref{fig:simul_diff_surv_fun_results}, 12 hours and 48 minutes were necessary for the method of \citet{lee2024sensitivity}. In comparison, only 1 minute and 48 seconds were required by our DVDS bounds for Form~I \textit{and} Form~II \textit{simultaneously}. This represents a reduction of computation time by a factor of 427.

\subsection{Results on Real Data} \label{sec:results_real_data}

\subsubsection{Right Heart Catheterization Dataset}

\begin{figure}[h!]
    \centering
    \subfigure[$\Gamma = 1.15$]{\includegraphics[width=0.49\textwidth]{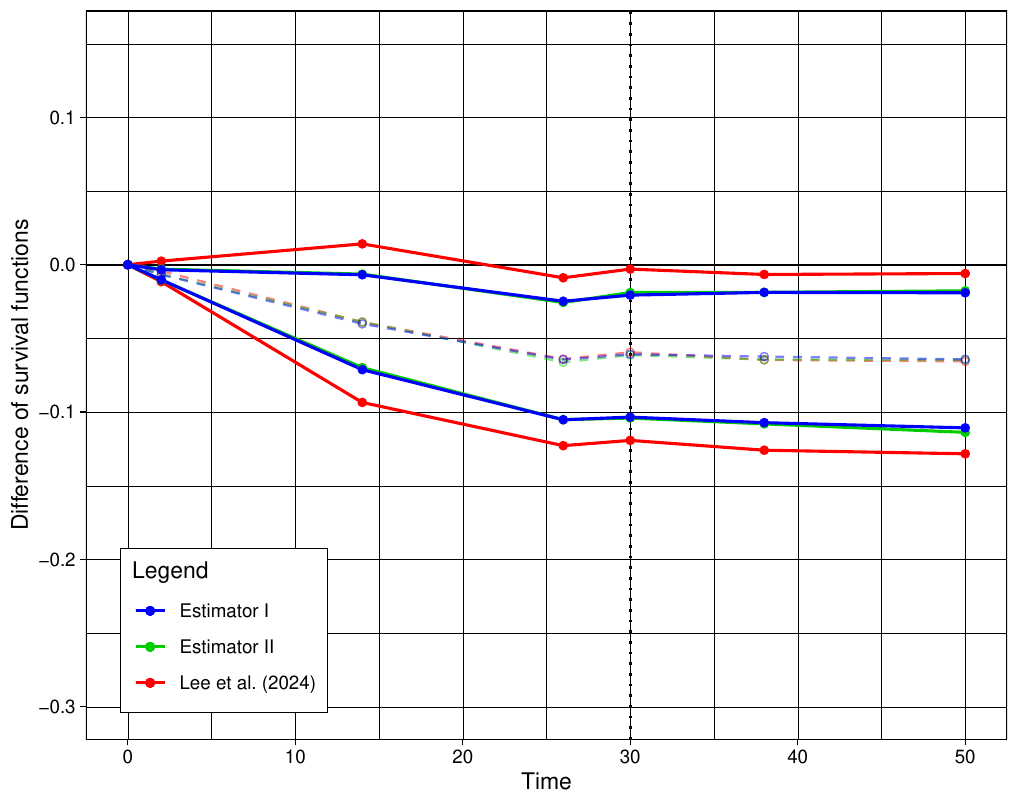}\label{fig:surv_gamma_1_15}}
    \hfill
    \subfigure[$\Gamma = 1.2$]{\includegraphics[width=0.49\textwidth]{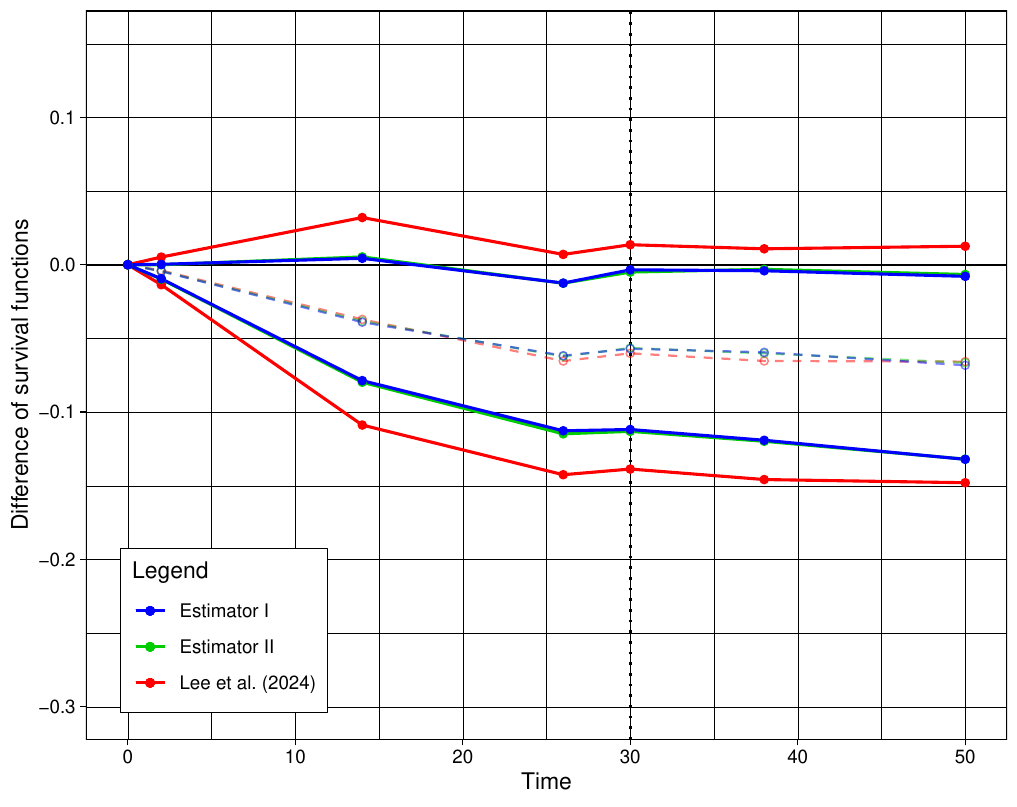}\label{fig:surv_gamma_1_2}}
    \caption{Sensitivity analysis for the difference of survival functions for $\Gamma = 1.15$ (\ref{fig:surv_gamma_1_15}) and $1.2$ (\ref{fig:surv_gamma_1_2}) on the RHC data. The curves correspond to the lower and upper sensitivity bounds (PEI) obtained with Form~I (in blue), Form~II (in green), and the method from \citet{lee2024sensitivity} (in red). The dotted and transparent lines correspond to the value under ignorability ($\Gamma = 1$) for each method. A vertical dotted black line indicates a time of 30 days. Note that the bounds with Form~II are almost coincident with the bounds obtained with Form~I.}
    \label{fig:rhc_surv_results}
\end{figure}

The \href{https://hbiostat.org/data/repo/rhc}{Right Heart Catheterization} (RHC) dataset consists in 5735 observations from 5 different US hospitals and was used by \citet{connors1996effectiveness} to explore the link between survival and the use of RHC during the first 24 hours in an intensive care unit. It was studied again later in \citet{lin1998assessing}. \citet{connors1996effectiveness} showed that RHC could lead to decreased survival rates in critically ill patients. A binary exposition signals if RHC was used ($A = 1$) or not ($A = 0$). The study admission date, hospital discharge date, date of last contact and date of death are given when available. Patients are considered uncensored ($\Delta = 1$) when a date of death is provided. In that case, the observed time $\tilde{T}$ is the period between study admission and death. When no date of death is provided, patients are considered censored ($\Delta = 0$). In that case, the observed time is the period between study admission and last contact, or hospital discharge if it corresponds to the last contact. Finally, the dataset provides 47 continuous and categorical variables at baseline (demographic and medical covariates), all included in our analysis. We applied administrative censoring on observed times that were greater than 180 days and binary encoded categorical variables.

The results of the sensitivity analysis on the difference of survival functions are given in Figure~\ref{fig:rhc_surv_results}, for a sensitivity parameter $\Gamma$ equal to 1.15 and 1.2. Additional graphs as well as computation times are provided in Appendix~\ref{app:add_exp}. The PEIs obtained via the DVDS bounds for Form~I (in blue), II (in green) and via the method of \citet{lee2024sensitivity} (in red) were computed for 5 equally spaced values of time between 2 and 50 days, to which we also added 30 days. The estimates under ignorability ($\Gamma = 1$) were represented with dotted lines.

Observe that, under ignorability, the difference of survival functions is always negative, regardless of the time point between 0 and 50 days. This is consistent with findings from \citet{connors1996effectiveness} on the negative effect of RHC on survival.

We highlight day 30 with a vertical dotted line to compare the current results of sensitivity analysis on survival outcomes to the ones obtained in \citet{dorn2024doubly} on a binary outcome (survival rate at day 30). We designate by \textit{critical value}, the lowest value of $\Gamma$ for which the sensitivity bounds contain the null effect. We recall that, with a binary outcome, the critical value of $\Gamma$ with the method from \citet{zhao2019sensitivity} is close to 1.15, whereas it is slightly larger and closer to 1.2 with the method from \citet{dorn2024doubly}. Similar critical values are obtained with survival outcomes. As the method from \citet{lee2024sensitivity} is equivalent to the bounds from \citet{zhao2019sensitivity} adapted to survival outcomes, the red PEI from Figure~\ref{fig:surv_gamma_1_15} when $\Gamma_{c, \mathrm{Lee}} = 1.15$ touches 0 at day 30. Our DVDS bounds for Form~I and Form~II give results that are closer to the ones obtained with the method from \citet{dorn2024doubly}. Indeed, when $\Gamma_{c, \mathrm{DVDS}} = 1.2$, the blue and green PEIs from Figure~\ref{fig:surv_gamma_1_2} touch 0 at day 30. Note also that our DVDS bounds are contained in the red bounds of our comparator for all time points between 0 and 50 days.

We also applied an \textit{informal benchmarking} approach in the Appendix where $\Gamma$ was estimated by taking the maximum odds ratio of the MSM with and without each observed covariate, one after another, as described by \citet{cinelli2020making} or \citet{baitairian2025sensitivity}. The minimum, median and maximum estimated values are 1.07, 1.63, and 35.6. Among the observed covariates, 8 have an associated confounding strength $\hat{\Gamma}$ lower than $\Gamma_{c, \mathrm{Lee}}$ and 9 have an associated confounding strength lower than $\Gamma_{c, \mathrm{DVDS}}$.

In the end, we can conclude that, even if RHC appears to be detrimental to survival at 30 days in intensive care units, ``small" confounders would be enough to explain away a causal association.

\subsubsection{German Breast Cancer Study Group Dataset}

\begin{figure}[h!]
    \centering
    \subfigure[$\Gamma = 1.35$]{\includegraphics[width=0.49\textwidth]{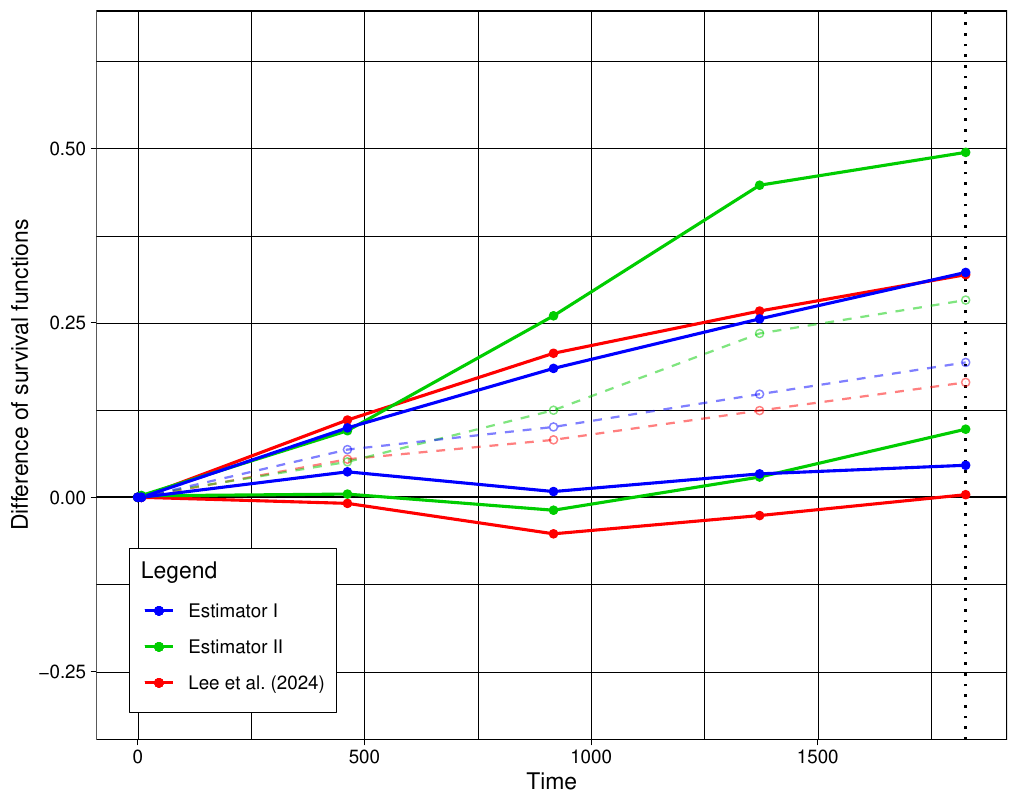}\label{fig:gbc_surv_gamma_1_35}}
    \hfill
    \subfigure[$\Gamma = 1.47$]{\includegraphics[width=0.49\textwidth]{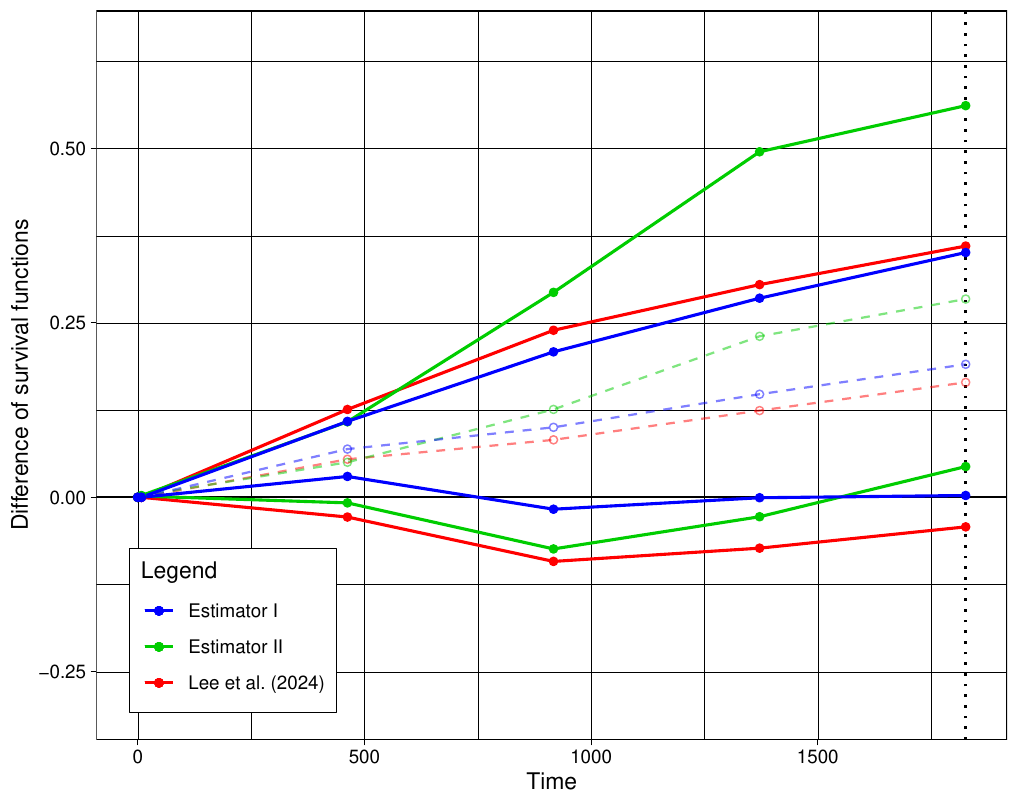}\label{fig:gbc_surv_gamma_1_47}}
    \caption{Sensitivity analysis for the difference of survival functions for $\Gamma = 1.35$ (\ref{fig:gbc_surv_gamma_1_35}) and $1.47$ (\ref{fig:gbc_surv_gamma_1_47}) on the GBCSG data. The curves correspond to the lower and upper sensitivity bounds (PEI) obtained with Form~I (in blue), Form~II (in green), and the method from \citet{lee2024sensitivity} (in red). The dotted and transparent lines correspond to the value under ignorability ($\Gamma = 1$) for each method. A vertical black dotted line indicates a time of 1826.25 days (5 years).}
    \label{fig:gbc_surv_results}
\end{figure}

The \href{https://www.kaggle.com/datasets/utkarshx27/breast-cancer-dataset-used-royston-and-altman}{German Breast Cancer Study Group} (GBCSG) dataset \citep{schumacher1994randomized} consists in 686 observations that relate the use of tamoxifen, a selective estrogen receptor modulator, to recurrence-free survival in breast cancer. We refer to \citet{lee2024sensitivity} for more details on the GBCSG dataset. A binary exposition indicates the treatment group ($A = 1$ if chemotherapy and tamoxifen were used, and $A = 0$ if only chemotherapy was used). The observed time $\tilde{T}$ is the number of days between the date of mastectomy and the date of the first occurrence of either locoregional recurrence, distant recurrence, contralateral tumor, secondary tumor, or death \citep{schumacher1994randomized, lee2024sensitivity}. Patients with no occurrence were censored at the end of follow-up ($\Delta = 0$). Finally, 7 continuous and categorical variables at baseline are provided and were included in our analysis. See Appendix~\ref{app:real_data} for more details. We used a similar setup as with the RHC data except that the PEIs were computed for 5 equally spaced values of time between 8 and 1826.25 days (5 years). Results are summarized in Figure~\ref{fig:gbc_surv_results}. Additional results as well as computation times are given in Appendix~\ref{app:add_exp}.

Observe that, under ignorability, the difference of survival functions is always positive. This is consistent with studies that showed a positive effect of the combination of tamoxifen with chemotherapy when compared with chemotherapy alone \citep{early1998tamoxifen}. The slope is slightly higher when the bounds are based on Form~II. At 5 years, the critical value for the method of \citet{lee2024sensitivity} is $\Gamma_{c, \mathrm{Lee}} = 1.35$, whereas is it equal to $\Gamma_{c, \mathrm{I}} = 1.47$ and $\Gamma_{c, \mathrm{II}} = 1.58$ with, respectively, our bounds with Form~I and Form~II (see Figure~\ref{fig:gbc_surv_all_gamma_1_58} in Appendix for the critical value with Form~II). \citet{lee2024sensitivity} indicate that ``according to an expert’s opinion [...], the degree of possible unmeasured confounding is expected to be weaker than tumor size". As the maximum value of odds ratio of propensity scores with and without tumor size is approximately 2.05 (see Appendix~\ref{app:add_exp}), our bounds do not allow to draw a conclusion about the sign of the difference of survival functions ($\Gamma_{c, \mathrm{I}} < \Gamma_{c, \mathrm{II}} \leq 2.05$) as well as the bounds of \citet{lee2024sensitivity} ($\Gamma_{c, \mathrm{Lee}} \leq 2.05$).

Finally, \citet{lee2024sensitivity} identified a critical value of 1.08 with 95\%-CIs for the RMST. In our experiments (see Figures~\ref{fig:gbc_surv_all_results} and \ref{fig:gbc_surv_ci_all_results} in Appendix~\ref{app:add_exp}), we also identified the same critical value for survival functions with their method, and a slightly larger critical value with our DVDS bounds, which would not allow us to conclude about a positive effect of tamoxifen on recurrence-free survival in breast cancer if unobserved confounders of such a strength existed.

\section{Conclusion and Discussion} \label{sec:conclusion}

In this manuscript, we have extended the sensitivity analysis to unobserved confounders proposed by \citet{dorn2024doubly} to survival outcomes and proposed two estimators of sensitivity bounds under an informative censoring assumption for the difference in survival curves and RMST. In particular, our bounds enjoy the double validity and double sharpness properties of the bounds of \citet{dorn2024doubly}. We have shown on simulated and real data that our bounds outperformed the proposition of \citet{lee2024sensitivity}---an adaptation of the bounds by \citet{zhao2019sensitivity} to survival outcomes---in terms of sharpness and computation time.

Note that we used estimators of the survival function given in \citet{cheng2022addressing} and not the traditional Kaplan Meier estimator, as in \citet{lee2024sensitivity}. This is because defining sensitivity bounds with a Kaplan Meier estimator leads to solving a fractional problem since the true propensity score appears in the numerator \textit{and} the denominator of the Kaplan Meier, whereas it only appears once in Forms~I and II of the survival function.

As compared to \citet{lee2024sensitivity}, we made the weaker assumption of informative censoring. However, this led us to suppose that the censoring function is independent from the unobserved confounders to avoid another sensitivity analysis to the violation of informative censoring. Future works should tackle this issue.

Additionally, our bounds have robustness properties with respect to the propensity score and modified outcome regression, given that the conditional quantiles are consistently estimated, but lack robustness with respect to the censoring function. \citet{voinot2025treatment} present a potential solution called double augmented correction. This should also be addressed in future works.

Finally, in the appendix, we derived DVDS bounds for the RMST by directly considering an outcome that is proportional to the minimum between the event time $T$ and the time horizon $\tau$, instead of integrating bounds for the survival function. Even though this approach suffers from instabilities for high values of $\tau$ and when censoring is important, it could pave the way for a promising and faster method to compute sensitivity bounds for the RMST.

\acks{The authors want to thank Charlotte Voinot (Sanofi, Gentilly, France, and PreMeDICaL, UMR UA11, Montpellier, France) for open discussions about causal survival analysis. The authors also want to thank the reviewers for their valuable feedback and suggestions, which helped improving the quality of this work. The research in this paper was supported by Sanofi and INRIA (Institut National de Recherche en Informatique et en Automatique) through the CIFRE program (Convention Industrielle de Formation par la Recherche).}

\conflicts{Jean-Baptiste Baitairian, Bernard Sebastien and Rana Jreich are or were Sanofi employees and they may hold shares and/or stock options in the company. Agathe Guilloux is employed by Inria. Sandrine Katsahian is employed by Université Paris-Cité and Assistance Publique - Hôpitaux de Paris (AP-HP).}

\bibliography{bibliography}

\newpage
\appendix

\section{Additional Background Information} \label{app:add_background_info}

Figure~\ref{fig:survival_analysis} gives a visual representation of different quantities at stake: the survival functions and the difference in RMST.

\begin{figure}[ht]
    \centering
    \includegraphics[width=0.8\linewidth]{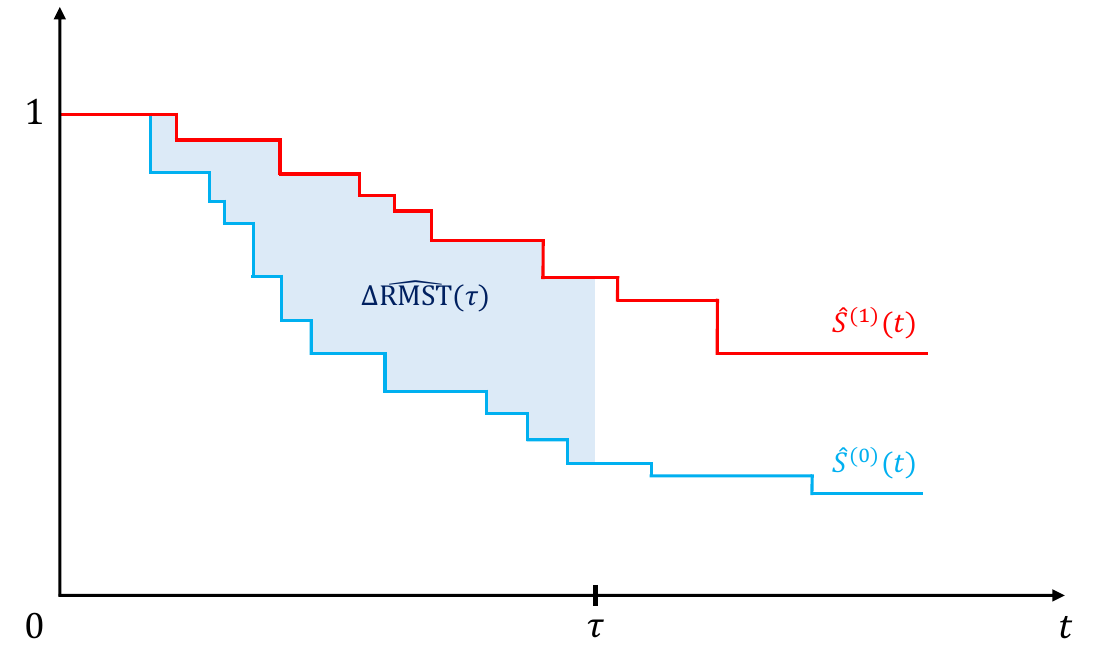}
    \caption[Illustration of the estimated difference in RMST (blue area), from time 0 up to time $\tau$, using the estimated survival functions among the exposed (in red) and the unexposed (in blue) groups.]{Illustration of the estimated difference in RMST (light blue area), from time 0 up to time $\tau$, using the estimated survival functions among the exposed group (in red) and the unexposed group (in blue).}
    \label{fig:survival_analysis}
\end{figure}

\section{Proofs} \label{app:proofs}

\subsection{Derivation of Form~I} \label{proof:estimand_I}

We refer the reader to the appendix of \citet{cheng2022addressing} for a proof of the consistency of the estimator of Form~I under ignorability and we will show that, under Assumptions~\ref{ass:sutva_survival} to \ref{ass:censoring_fun_indep_u}, $S_\mathrm{I}^{(1)}(t) = S^{(1)}(t)$.

\begin{align*}
    S^{(1)}(t) & = \mathbb{P}(\Ttreat > t)  \\
    & = \E[\indic(\Ttreat > t)] \\
    & = \E \bigl[ \E[\indic(\Ttreat > t) | \X, \U] \bigr] \quad \text{by tower property} \\
    & = \E \biggl[ \frac{e(\X, \U)}{e(\X, \U)} \E [ \indic(\Ttreat > t) | \X, \U] \biggr] \quad \text{by positivity (Assumption~\ref{ass:positivity_surv})} \\
    & = \E \biggl[ \frac{\E[A|\X, \U]}{e(\X, \U)} \E [ \indic(\Ttreat > t) | \X, \U] \biggr] \quad \text{because $\E[A|\X, \U] = \Prob(A=1|\X, \U) = e(\X, \U)$} \\
    & = \E \biggl[ \E \biggl[ \frac{A \indic(\Ttreat > t)}{e(\X, \U)} \bigg| \X, \U \biggr] \biggr] \quad \text{by ($\X, \U$)-ignorability (Assumption~\ref{ass:xu-ignorability_survival})} \\
    & = 1 - \E \biggl[ \frac{\E[A \indic(T \leq t)| \X, \U, A]}{e(\X, \U)} \biggr] \quad \text{by tower property and consistency} \\
    & = 1 - \E \biggl[ \frac{\E[A \indic(T \leq t)| \X, \U, A]}{e(\X, \U)} \cdot \frac{\E[\indic(C \geq T) | \X, \U, A]}{\E[\indic(C \geq T) | \X, \U, A]} \biggr] \\
    & = 1 - \E \biggl[ \frac{\E[A \indic(T \leq t) \indic(C \geq T) | \X, \U, A]}{e(\X, \U) \Prob(C \geq T | \X, \U, A)} \biggr] \quad \text{by Assumption~\ref{ass:cond_indep_censoring_u}} \\
    & = 1 - \E \biggl[ \frac{A \indic(T \leq t) \indic(C \geq T)}{e(\X, \U) \Prob(C \geq T | \X, \U, A=1)} \biggr] \quad \text{by Assumption~\ref{ass:sutva_survival}\textsuperscript{*}, tower prop., and $A \in \{0, 1\}$} \\
    & = 1 - \E \biggl[ \frac{A \Delta \indic(\tilde{T} \leq t)}{e(\X, \U) \Prob(C \geq \tilde{T} | \X, A=1)} \biggr] \quad \text{by definition of $\Delta$ and $\tilde{T}$, and by Assumption~\ref{ass:censoring_fun_indep_u}} \\
    & = 1 - \E \biggl[ \frac{A \Delta \indic(\tilde{T} \leq t)}{e(\X, \U) \bar{G}(\tilde{T} | \X, A=1)} \biggr] \quad \text{\textbf{Unstabilized form}} \\
    & = 1 - \E \biggl[ \frac{A \Delta \indic(\tilde{T} \leq t)}{e(\X, \U) \bar{G}(\tilde{T} | \X, A=1)} \biggr] \bigg/ \E \biggl[ \frac{A}{e(\X, \U)} \biggr] \quad \text{\textbf{Stabilized form}}
\end{align*}
\textsuperscript{*}We use SUTVA: $\indic(T>t) = A \indic(\Ttreat > t) + (1-A) \indic(\Tcont > t)$ because $A \indic(T>t) = A^2 \indic(\Ttreat > t) + A(1-A) \indic(\Tcont > t) = A \indic(\Ttreat > t)$. The same is true for $C$: $A \indic(C>t) = A \indic(\Ctreat > t)$.

The stabilized form comes from the fact that $\E[A/e(\X, \U)] = \E[ \E[A/e(\X, \U) | \X, \U] ] = \E[ \E[A|\X, \U]/e(\X, \U) ] = 1$.

\subsection{Derivation of Form~II} \label{proof:estimand_II}

We refer the reader to the appendix of \citet{cheng2022addressing} for a proof of the consistency of the estimator of Form~II and we will show that, under Assumptions~\ref{ass:sutva_survival} to \ref{ass:censoring_fun_indep_u}, $S_\mathrm{II}^{(1)}(t) = S^{(1)}(t)$.

\begin{align*}
    S^{(1)}(t) & = \mathbb{P}(\Ttreat > t)  \\
    & = \E[\indic(\Ttreat > t)] \\
    & = \E \bigl[ \E[\indic(\Ttreat > t) | \X, \U] \bigr] \quad \text{by tower property} \\
    & = \E \biggl[ \frac{e(\X, \U)}{e(\X, \U)} \E [ \indic(\Ttreat > t) | \X, \U ] \biggr] \quad \text{by positivity (Assumption~\ref{ass:positivity_surv})} \\
    & = \E \biggl[ \frac{\E[A|\X, \U]}{e(\X, \U)} \E [ \indic(\Ttreat > t) | \X, \U ] \biggr] \quad \text{because $\E[A|\X, \U] = \Prob(A=1|\X, \U) = e(\X, \U)$} \\
    & = \E \biggl[ \E \biggl[ \frac{A \indic(\Ttreat > t)}{e(\X, \U)} \bigg| \X, \U \biggr] \biggr] \quad \text{by ($\X, \U$)-ignorability (Assumption~\ref{ass:xu-ignorability_survival})} \\
    & = \E \biggl[ \E \biggl[ \frac{A \indic(T > t)}{e(\X, \U)} \bigg| \X, \U \biggr] \biggr] \quad \text{by consistency\textsuperscript{*} (Assumption~\ref{ass:sutva_survival})} \\
    & = \E \biggl[ \E \biggl[ \frac{A \indic(T > t)}{e(\X, \U)} \bigg| \X, \U, A \biggr] \biggr] \quad \text{by tower property} \\
    & = \E \biggl[ \frac{A \E[\indic(T > t) | \X, \U, A]}{e(\X, \U)} \biggr] \\
    & = \E \biggl[ \frac{A \E[\indic(T > t) | \X, \U, A]}{e(\X, \U)} \cdot \frac{\E[\indic(C > t) | \X, \U, A]}{\E[\indic(C > t) | \X, \U, A]} \biggr] \\
    & = \E \biggl[ \frac{A \E[\indic(\tilde{T} > t) | \X, \U, A]}{e(\X, \U) \Prob(C > t | \X, \U, A)} \biggr] \quad \text{by Assumption~\ref{ass:cond_indep_censoring_u} and $\indic(\tilde{T} > t) = \indic(T > t) \indic(C > t)$} \\
    & = \E \biggl[ \frac{A \indic(\tilde{T} > t)}{e(\X, \U) \Prob(C > t | \X, A=1)} \biggr] \quad \text{by tower property, $A \in \{0, 1\}$, and Assumption~\ref{ass:censoring_fun_indep_u}} \\
    & = \E \biggl[ \frac{A \indic(\tilde{T} > t)}{e(\X, \U) \bar{G}(t | \X, A=1)} \biggr] \quad \text{\textbf{Unstabilized form}} \\
    & = \E \biggl[ \frac{A \indic(\tilde{T} > t)}{e(\X, \U) \bar{G}(t | \X, A=1)} \biggr] \bigg/ \E \biggl[ \frac{A}{e(\X, \U)} \biggr] \quad \text{\textbf{Stabilized form}}
\end{align*}
\textsuperscript{*}We use SUTVA: $\indic(T>t) = A \indic(\Ttreat > t) + (1-A) \indic(\Tcont > t)$ because $A \indic(T>t) = A^2 \indic(\Ttreat > t) + A(1-A) \indic(\Tcont > t) = A \indic(\Ttreat > t)$. The same is true for $C$: $A \indic(C>t) = A \indic(\Ctreat > t)$.

The stabilized form comes from the fact that $\E[A/e(\X, \U)] = \E[ \E[A/e(\X, \U) | \X, \U] ] = \E[ \E[A|\X, \U]/e(\X, \U) ] = 1$.

\subsection{Proposition~\ref{prop:MSM_Ya_equiv} and Proof} \label{app:MSMequivalence}

\begin{proposition} \label{prop:MSM_Ya_equiv}
    For all $a \in \{0, 1\}$, $\x \in \mathcal{X}$, and $y \in \mathcal{Y}$, the ratio in the MSM from Equation~\eqref{eqn:MSM_U_chap4} can be equivalently written
    \begin{equation} \label{eqn:MSM_Bayes}
        \Gamma^{-1} \leq \frac{f(Y_t(a)=y|\X=\x, A=0)}{f(Y_t(a)=y|\X=\x, A=1)} \leq \Gamma.
    \end{equation}
\end{proposition}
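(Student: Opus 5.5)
The plan is to prove the two directions separately. In both, the MSM odds ratio in Equation~\eqref{eqn:MSM_U_chap4} is translated into a likelihood ratio of $\U$ between the two treatment arms using Bayes' rule. For fixed $\x$ and $\unc$, Bayes' rule gives
\begin{equation*}
\frac{\Prob(A=1|\X=\x,\U=\unc)/\Prob(A=0|\X=\x,\U=\unc)}{\Prob(A=1|\X=\x)/\Prob(A=0|\X=\x)} = \frac{f(\unc|\X=\x,A=1)}{f(\unc|\X=\x,A=0)},
\end{equation*}
which is well defined by positivity (Assumption~\ref{ass:positivity_surv}). Hence Equation~\eqref{eqn:MSM_U_chap4} is the same as requiring $f(\unc|\x,A=0)/f(\unc|\x,A=1)\in[\Gamma^{-1},\Gamma]$ for all $\unc$.

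\emph{Forward direction} (Equation~\eqref{eqn:MSM_U_chap4} implies Equation~\eqref{eqn:MSM_Bayes}). I will write the law of $Y_t(a)=\indic(T^{(a)}\le t)$ given $\X=\x$, $A=a'$ as a mixture over $\U$. By the tower property and $(\X,\U)$-ignorability (Assumption~\ref{ass:xu-ignorability_survival}),
\begin{equation*}
f(Y_t(a)=y|\x,A=a') = \int f(Y_t(a)=y|\x,\unc)\, f(\unc|\x,A=a')\,\mathrm{d}\unc ,
\end{equation*}
so the kernel $f(Y_t(a)=y|\x,\unc)$ is the same for $a'=0$ and $a'=1$. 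Substituting $f(\unc|\x,A=0)\le \Gamma f(\unc|\x,A=1)$ into the integral, and using that the kernel is nonnegative, gives $f(Y_t(a)=y|\x,A=0)\le \Gamma f(Y_t(a)=y|\x,A=1)$. The lower inequality follows in the same way. Both densities are positive wherever they are compared, so dividing gives Equation~\eqref{eqn:MSM_Bayes} for every $t\in[0,\taumax]$, $a$, $\x$ and $y$.

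\emph{Converse direction} (Equation~\eqref{eqn:MSM_Bayes} implies an MSM of the form in Equation~\eqref{eqn:MSM_U_chap4}). Here I follow the argument of \citet{dorn2022sharp,dorn2024doubly} and take the unobserved confounder to be the potential outcome itself, $\U:=Y_t(a)$. With this choice, $(\X,\U)$-ignorability holds trivially for the outcome $Y_t(a)$. Reading the Bayes identity above from right to left then turns the bound in Equation~\eqref{eqn:MSM_Bayes} into exactly the odds-ratio bound in Equation~\eqref{eqn:MSM_U_chap4} for this $\U$. This shows that the two formulations give the same set of admissible distributions for the quantities that enter $S^{(a)}(t)=1-\E[Y_t(a)]$. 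That set is what matters for the bounds and for their sharpness.

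\emph{Main obstacle.} I expect the difficulty to be in making ``equivalently'' precise. The converse only holds up to a choice of $\U$, namely $\U=Y_t(a)$ (or the pair of potential times), and it must remain compatible with the censoring Assumptions~\ref{ass:cond_indep_censoring_u} and \ref{ass:censoring_fun_indep_u}. I will check that $\bar G$ does not depend on this $\U$. If necessary, I will state the converse at the level of identified sets rather than as a literal equivalence for the original $\U$. The measure-theoretic details (densities with respect to a common dominating measure, and $Y_t(a)$ being binary so that $f$ is a probability mass function) are routine.
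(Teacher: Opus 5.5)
Your argument is correct, but it is organized differently from the paper's. The paper never works with $\U$ directly. It cites Lemma~1 of \citet{tan2024model} to pass from the odds-ratio bound \eqref{eqn:MSM_U_chap4} to the same odds-ratio bound with $\U$ replaced by $Y_t(a)$, which is Equation~\eqref{eqn:MSM_Y}. It then applies Bayes' formula to $\Prob(A=a'\mid\X=\x,Y_t(a)=y)$, turning that odds ratio into $f(Y_t(a)=y\mid\X=\x,A=1)/f(Y_t(a)=y\mid\X=\x,A=0)$; inverting is harmless because $[\Gamma^{-1},\Gamma]$ is stable under $r\mapsto 1/r$.

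You apply the same Bayes identity one level up, to $\U$. You then carry the likelihood-ratio bound over to $Y_t(a)$ through the $A$-free kernel $f(Y_t(a)=y\mid\x,\unc)$ supplied by Assumption~\ref{ass:xu-ignorability_survival}. That mixture step is a self-contained proof of the forward half of Tan's lemma, and your converse with $\U:=Y_t(a)$ is the construction behind its reverse half.

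The paper's route takes two lines but delegates the only substantive step to an external result. Yours shows exactly where ignorability is used. It also makes explicit that the converse is existential: it produces \emph{some} confounder satisfying the MSM, not the original $\U$. The word ``equivalently'' in the statement inherits that caveat silently from the citation.

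One of your hedges should be dropped: the converse does not go through with $\U=(\Tcont,\Ttreat)$. Equation~\eqref{eqn:MSM_Bayes} only bounds ratios of conditional distribution and survival functions of $T^{(a)}$, one $(a,t)$ at a time. That is weaker in general than a bound on the density ratio of the pair of potential times: a density ratio can approach $0$ on short intervals while all these distribution-function ratios stay bounded. So the converse must be stated separately for each $(a,t)$ with $\U=Y_t(a)$. The compatibility check with Assumptions~\ref{ass:cond_indep_censoring_u}--\ref{ass:censoring_fun_indep_u} that you plan is not needed for this proposition, which only rewrites the MSM.
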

\begin{proof}
By Lemma~1 from \citet{tan2024model}, the formulation of the MSM in terms of unobserved confounders \eqref{eqn:MSM_U_chap4} is equivalent to the formulation in terms of potential outcomes
\begin{equation} \label{eqn:MSM_Y}
    \Gamma^{-1} \leq \frac{\Prob(A=1 | \X=\x, Y_t(a)=y) / \Prob(A=0 | \X=\x, Y_t(a)=y)}{\Prob(A=1 | \X=\x) / \Prob(A=0 | \X=\x)} \leq \Gamma,
\end{equation}
for $a \in \{0, 1\}$, $t \in [0, \taumax]$, $(y, \x, \unc) \in \mathcal{Y} \times \mathcal{X} \times \mathcal{U}$, and $\Gamma \geq 1$. By Bayes' formula,
    \begin{align*}
        & \Prob(A=1|\X=\x, Y_t(a)=y) = \frac{f(Y_t(a)=y|\X=\x, A=1) \Prob(A=1|\X=\x)}{f(Y_t(a)=y|\X=\x)} \quad \text{and} \\
        & \Prob(A=0|\X=\x, Y_t(a)=y) = \frac{f(Y_t(a)=y|\X=\x, A=0) \Prob(A=0|\X=\x)}{f(Y_t(a)=y|\X=\x)}.
    \end{align*}
    By plugging these formula into Equation~\eqref{eqn:MSM_Y}, we obtain the desired result. The converse is also true by Bayes' formula.
\end{proof}

\subsection{Derivation of the DVDS Bounds for \texorpdfstring{$S_\mathrm{I}^{(1)}(t)$}{}} \label{app:theo_bounds_estim_I}

We denote $\{ \cdot \}_- = \min(0, \cdot)$, the negative part function, and $\{ \cdot \}_+ = \max(0, \cdot)$, the positive part function.

\subsubsection{Bounds via Modified Outcome Regression} \label{sec:bounds_modif_outc_regr}

We start by deriving bounds for $S^{(1)}(t)$ using a modified outcome regression approach. Observe that, for all $\x \in \mathcal{X}$,
\begin{equation} \label{eqn:S1_integral}
    S^{(1)}(t | \X=\x) = 1 - \E[Y_t(1) | \X=\x] = 1 - \int_\mathcal{Y} y f(Y_t(1)=y | \X=\x) \, \mathrm{d}y.
\end{equation}
However, the density $y \mapsto f(Y_t(1)=y | \X=\x)$ is not directly identifiable from the data. Therefore, we rewrite it as
\begin{align} \label{eqn:f_Y1_X}
    f(Y_t(1)=y | \X=\x) & = f(Y_t(1)=y, A=1 | \X=\x) + f(Y_t(1)=y, A=0 | \X=\x) \nonumber \\
    & = f(Y_t(1)=y | \X=\x, A=1) \Prob(A=1|\X=\x) \nonumber \\
    & \quad + f(Y_t(1)=y | \X=\x, A=0) \Prob(A=0|\X=\x).
\end{align}
The only non-identifiable term is $f(Y_t(1)=y | \X=\x, A=0)$. By introducing the density ratio
\begin{align*}
    w^\star(Y_t(1)=y, \X=\x) = \frac{f(Y_t(1)=y | \X=\x, A=0)}{f(Y_t(1)=y | \X=\x, A=1)},
\end{align*}
we finally get
\begin{align*}
    S^{(1)}(t | \X=\x) & = 1 - \biggl( e(\x) \int_\mathcal{Y} y f(Y_t(1)=y | \X=\x, A=1) \, \mathrm{d}y \\
    & \quad + (1-e(\x)) \int_\mathcal{Y} y w^\star(Y_t(1)=y, \X=\x) f(Y_t(1)=y | \X=\x, A=1) \, \mathrm{d}y \biggr) \\
    & = 1 - e(\x) \E[Y_t(1)|\X=\x, A=1] \\
    & \quad - (1-e(\x)) \E[Y_t(1) w^\star(Y_t(1), \x)|\X=\x, A=1],
\end{align*}
after plugging Equation~\eqref{eqn:f_Y1_X} into Equation~\eqref{eqn:S1_integral}.

As in the proof of Appendix~\ref{proof:estimand_I}, we further introduce the censoring function $\bar{G}$ by noting that
\begin{equation*}
    \E[Y_t(1)|\X=\x, A=1] = \E \biggl[ \frac{\Delta \indic(\tilde{T} \leq t)}{\bar{G}(\tilde{T}|\X, A=1)} \bigg| \X=\x, A=1 \biggr].
\end{equation*}
By denoting $\YI = \Delta \indic(\tilde{T} \leq t) / \bar{G}(\tilde{T}|\X, A=1)$, where the dependence in $\X$, $\tilde{T}$, and $\Delta$ is dropped for simplicity, we can write
\begin{align*}
    S^{(1)}(t | \X=\x) & = 1 - e(\x) \E[ \YI | \X=\x, A=1] \\
    & \quad - (1-e(\x)) \E[ \YI w^\star(Y_t(1), \x) | \X=\x, A=1].
\end{align*}

Notice that, by Equation~\eqref{eqn:MSM_Bayes_chap4}, $w^\star$ takes its values in $[\Gamma^{-1}, \Gamma]$ and that, for all $\x \in \mathcal{X}$,
\begin{equation} \label{eqn:w_star_condition}
    \E[w^\star(Y_t, \x)|\X=\x, A=1] = 1,
\end{equation}
with $Y_t = \indic(T > t)$. The sensitivity bounds for $S^{(1)}(t | \X=\x)$ are therefore defined as
\begin{align}
    S^{(1)}_-(t, \x, \Gamma) & = 1 - e(\x) \E[ \YI | \X=\x, A=1] \nonumber \\
    & \quad - (1-e(\x)) \underset{w \in \mathcal{W}(\Gamma)}{\max} \E[ \YI w(Y_t, \x) | \X=\x, A=1] \quad \text{and} \label{eqn:S_1_-_cond_pbm} \\
    S^{(1)}_+(t, \x, \Gamma) & = 1 - e(\x) \E[ \YI | \X=\x, A=1] \nonumber \\
    & \quad - (1-e(\x)) \underset{w \in \mathcal{W}(\Gamma)}{\min} \E[ \YI w(Y_t, \x) | \X=\x, A=1], \label{eqn:S_1_+_cond_pbm}
\end{align}
where the set $\mathcal{W}(\Gamma)$ is
\begin{align*}
    \mathcal{W}(\Gamma) = \{ & w : \{0, 1\} \times \mathcal{X} \mapsto [\Gamma^{-1}, \Gamma]; \\
    & \forall (\x, t) \in \mathcal{X} \times [0, \taumax], \, \E[w(Y_t, \x)|\X=\x, A=1] = 1 \}.
\end{align*}
As $w^\star$ is unknown, the intuition behind Equations~\eqref{eqn:S_1_-_cond_pbm} and \eqref{eqn:S_1_+_cond_pbm} is to search the minimizer and maximizer in a set $\mathcal{W}(\Gamma)$ of functions $w$ that ``look like" $w^\star$.

Without loss of generality, let us consider the upper bound, $S^{(1)}_+(t, \x, \Gamma)$. As $w \in \mathcal{W}(\Gamma)$,
\begin{equation*}
    (w(Y_t, \x) - \Gamma^{-1}) / (1-\Gamma^{-1}) \in [0, (1-\gamma)^{-1}],
\end{equation*}
with $\gamma = \Gamma / (1 + \Gamma)$.
Using this fact and the normalization constraint from Equation~\eqref{eqn:w_star_condition}, we rewrite the minimization problem as
\begin{align}
    & \underset{w \in \mathcal{W}(\Gamma)}{\min} \E[ \YI w(Y_t, \x) | \X=\x, A=1] \nonumber \\
    & = \Gamma^{-1} \E[ \YI | \X=\x, A=1] + (1-\Gamma^{-1}) \underset{w \in \mathcal{W}(\Gamma)}{\min} \E \biggl[ \YI \frac{w(Y_t, \x) - \Gamma^{-1}}{1 - \Gamma^{-1}} \bigg| \X=\x, A=1 \biggr]. \label{eqn:min_pbm_before_cvar}
\end{align}
A similar reasoning as in the proof of Theorem~3.1 from \citet{baitairian2025sharp} allows for the identification of the minimization problem as a Conditional Value at Risk (CVaR) of level $1-\gamma$, thanks to the “Fenchel--Moreau--Rockafellar” dual representation of the CVaR \citep{herdegen2023elementary}, the CVaR being defined as
\begin{align*}
    \mathrm{CVaR}_{1-\gamma}(t, \X=\x, A=1) = \E[ -\YI | \YI \leq Q_{1-\gamma}(\x, 1), \X=\x, A=1],
\end{align*}
where $Q_\nu(\X=\x, A=1)$ is the quantile of order $\nu$ of $\YI$ conditionally on $\X=\x$ and $A=1$.

As pointed out by \citet{rockafellar2000optimization}, the CVaR can be expressed directly in terms of the conditional quantiles of the outcome. By transposing this result to $Q_{1-\gamma}(\X=\x, A=1)$, we get
\begin{equation} \label{eqn:cvar}
    \mathrm{CVaR}_{1-\gamma}(t, \X=\x, A=1) = \E \biggl[ Q_{1-\gamma}(\x, 1) + \frac{\{ \YI - Q_{1-\gamma}(\x, 1) \}_-}{1-\gamma} \bigg| \X=\x, A=1 \biggr],
\end{equation}
where $\{ \cdot \}_- = \min(0, \cdot)$.

After plugging Equation~\eqref{eqn:cvar} into Equation~\eqref{eqn:min_pbm_before_cvar}, and combining the result with Equation~\eqref{eqn:S_1_-_cond_pbm}, we obtain
\begin{equation*}
    S^{(1)}_+(t, \x, \Gamma) = 1 - e(\x) \E[ \YI | \X=\x, A=1] - (1-e(\x)) \rho_-(t, \x, 1, \Gamma),
\end{equation*}
where
\begin{align*}
    \rho_-(t, \X=\x, A=1, \Gamma) & = \Gamma^{-1} \E[ \YI | \x, A=1] \\
    & \quad + (1 - \Gamma^{-1}) \E \biggl[ Q_{1-\gamma}(\x, 1) + \frac{\{ \YI - Q_{1-\gamma}(\x, 1) \}_-}{1-\gamma} \bigg| \x, A=1 \biggr].
\end{align*}
Finally, taking the expectancy over $\X$ and using the tower rule gives
\begin{equation} \label{eqn:bound_with_cvar}
    S^{(1)}_+(t, \Gamma) = 1 - \E[ A \YI + (1-A) \rho_-(t, \x, 1, \Gamma)].
\end{equation}
This result is similar to the formulation using adversarial regressions of \citet{dorn2024doubly}, where the outcome ``$Y$" is replaced with $\YI$.

By results given in the proof of Theorem~3.1 in \citet{baitairian2025sharp}, these bounds are sharp when both the conditional quantile and censoring function are consistently estimated. Similar arguments to those given in \citet{dorn2024doubly} (Proposition~4) or in \citet{baitairian2025sharp} (proof of Proposition~3.5) help us to show that the resulting bound is also singly-valid with respect to the modified outcome regression, that is, if the modified outcome regression correctly learns the conditional expectation of the \textit{estimated} modified outcome but the conditional quantile is misspecified, the bound is still valid but possibly conservative.

\subsubsection{Bounds via Propensity Scores} \label{sec:bounds_prop_score}

We now express our bounds for $S^{(1)}(t)$ in terms of propensity scores and will combine them to the formulation using modified outcome regression in the next section. Under $\X$-ignorability, Form~I writes
\begin{equation*}
    S_\mathrm{I}^{(1)}(t) = 1 - \E \biggl[ \frac{A \Delta \indic(\tilde{T} \leq t)}{e(\X) \bar{G}(\tilde{T} | \X, A=1)} \biggr] = 1 -  \E \biggl[ \frac{A \YI}{e(\X)} \biggr],
\end{equation*}
where $\YI = \Delta \indic(\tilde{T} \leq t) / \bar{G}(\tilde{T}|\X, A=1)$, as previously. 

Using the formulation of the MSM with unobserved confounders (Equation~\eqref{eqn:MSM_U_chap4}),
\citet{dorn2022sharp} (Proposition~2) introduced sharp sensitivity bounds when $e(\X)$ is replaced with worst-case propensity scores $e_\pm(\X, Y)$ satisfying the MSM, where $Y$ is the (continuous) outcome, under the constraint that $\E[A/e_\pm(\X, Y)|\X] = 1$. We reformulate their result in terms of survival outcomes.
\begin{proposition}
    There exist two worst-case propensity scores, $e_-(\x, \YI, \Gamma)$ and\\ $e_+(\x, \YI, \Gamma)$, satisfying the MSM (Equation~\eqref{eqn:MSM_U_chap4}) and
    \begin{equation} \label{eqn:qb_constraint}
        \E[A / e_\pm(\X, \YI, \Gamma) | \X=\x] = 1
    \end{equation}
    such that
    \begin{align*}
        & 1 / e_-(\x, \YI, \Gamma) = 1 + \frac{1-e(\x)}{e(\x)} \Gamma^{- \sign(\YI - Q_{1-\gamma}(\x, 1))} \quad \text{and} \\
        & 1 / e_+(\x, \YI, \Gamma) = 1 + \frac{1-e(\x)}{e(\x)} \Gamma^{\sign(\YI - Q_{\gamma}(\x, 1))},
    \end{align*}
    where $Q_\nu(\X=\x, A=1)$ is the quantile of order $\nu$ of $\YI$ conditionally on $\X=\x$ and $A=1$. The sharp lower and upper bounds on $S_\mathrm{I}^{(1)}(t)$ are then
    \begin{equation} \label{eqn:qb_bounds}
        S_{\mathrm{I}, -}^{(1)}(t, \Gamma) = 1 - \E \biggl[ \frac{A \YI}{e_+(\X, \YI, \Gamma)} \biggr] \quad \text{and} \quad S_{\mathrm{I}, +}^{(1)}(t, \Gamma) = 1 - \E \biggl[ \frac{A \YI}{e_-(\X, \YI, \Gamma)} \biggr].
    \end{equation}
\end{proposition}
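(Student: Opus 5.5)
We reduce the statement to the quantile-balancing argument of \citet{dorn2022sharp} (Proposition~2), applied with the scalar outcome $\YI$ in place of their continuous outcome $Y$. Only treated units contribute, since every term carries a factor $A$. Under the MSM \eqref{eqn:MSM_U_chap4}, the true propensity score has odds within a factor $\Gamma$ of the nominal odds. Hence every admissible weight satisfies
\begin{equation*}
  1/e(\x,\unc) = 1 + \frac{1-e(\x)}{e(\x)}\,\lambda(\x,\unc), \qquad \lambda(\x,\unc)\in[\Gamma^{-1},\Gamma].
\end{equation*}
Projecting onto $(\X, A=1, \YI)$ by conditioning on $\YI$ keeps $\lambda$ in $[\Gamma^{-1},\Gamma]$, because that interval is convex. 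Since $\E[A/e(\X,\U)\mid\X]=1$ (Appendix~\ref{proof:estimand_I}), the projected weight $h(\x,y) = \E[1/e(\X,\U)\mid \X=\x, A=1, \YI=y]$ satisfies the balancing constraint
\begin{equation*}
  \E[A\,h(\X,\YI)\mid\X=\x]=1 .
\end{equation*}
The bounds on $S_\mathrm{I}^{(1)}(t) = 1-\E[A\YI/e(\X,\U)]$ therefore reduce, pointwise in $\x$, to a linear program: optimize $\E[\YI\,\lambda \mid \X=\x, A=1]$ over $\lambda(\x,\cdot)\in[\Gamma^{-1},\Gamma]$. The constraint $\E[A h\mid \X=\x]=1$ is equivalent to $\E[\lambda\mid\X=\x,A=1]=1$.

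The second step solves this program by a Neyman--Pearson / bathtub argument. To get the upper bound on $S$ we minimize $\E[\YI\lambda\mid\x,A=1]$. The minimizer puts $\lambda=\Gamma$ where $\YI$ is small and $\lambda=\Gamma^{-1}$ where it is large. The threshold $c$ is fixed by the mass condition
\begin{equation*}
  \Gamma\,\Prob(\YI<c\mid\x,1)+\Gamma^{-1}\,\Prob(\YI>c\mid\x,1)=1,
\end{equation*}
which gives $\Prob(\YI<c\mid\x,1)=1/(1+\Gamma)=1-\gamma$, i.e.\ $c=Q_{1-\gamma}(\x,1)$. This matches the stated formula $\lambda=\Gamma^{-\sign(\YI-Q_{1-\gamma})}$, i.e.\ $e_-$. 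The symmetric maximization yields $c=Q_\gamma$ and $e_+$. Optimality follows from the usual exchange argument: for any feasible $\lambda$,
\begin{equation*}
  \E\bigl[(\YI-c)(\lambda-\lambda^\star)\mid\x,A=1\bigr]\ge 0,
\end{equation*}
and the $c$-term vanishes because both weights integrate to one.

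The main obstacle is that $\YI$ is not continuous. It has an atom at $0$, coming from censored units and events after $t$, and $Q_{1-\gamma}$ may fall on that atom. There the sign function is $0$ and $\lambda=1$, so the mass constraint \eqref{eqn:qb_constraint} may fail exactly. I would handle this as Dorn and Guo do with ties: randomize the weight on the atom, or note that on $\{\YI=c\}$ the value of $\lambda$ does not affect the objective beyond the $c$-term. Either way the bound value, written in the quantile form of Equation~\eqref{eqn:ub_I_prop_score}, is unchanged.

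Finally, sharpness requires showing that the extremal weight is attained by some $(\U, e(\X,\U))$ compatible with the observed law. For this I take $\U=\Ttreat$ (or $\U=\YI$ together with the potential outcome). I then set $e(\x,\unc)$ so that its odds equal the nominal odds times $1/\lambda^\star$. This satisfies Assumptions~\ref{ass:xu-ignorability_survival}--\ref{ass:censoring_fun_indep_u}, using Assumption~\ref{ass:censoring_fun_indep_u} to keep $\bar G$ unchanged. Balancing guarantees that it marginalizes to $e(\x)$, which gives \eqref{eqn:qb_bounds}.
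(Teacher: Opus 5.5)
Your validity argument is sound. Written out explicitly, it is exactly what the paper obtains by invoking \citet{dorn2022sharp} (Proposition~2) with $Y$ replaced by $\YI$: the projection of $1/e(\X,\U)$ onto $(\X,A=1,\YI)$, the box-plus-balancing linear program in $\lambda(\x,\cdot)$, and the threshold/exchange solution are all correct. Your remark on ties is also right, and it goes beyond the paper, which applies a result stated for continuous outcomes to $\YI$ even though $\YI$ has an atom at $0$. At an atom $\{\YI=c\}$ the literal formula (weight $1$ when the sign is $0$) can violate \eqref{eqn:qb_constraint}. Because $c$ is a quantile of the right order, a constant intermediate weight on the atom restores the constraint; it exists by the intermediate value theorem, so no randomization is needed. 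The quantile-form value is unchanged, and \eqref{eqn:qb_bounds} should be read with these corrected weights.

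The gap is your final, sharpness paragraph.
\begin{itemize}
\item With $\U=\Ttreat$, the extremal weight $\lambda^\star(\x,\YI)$ is not a function of $(\x,\U)$, because $\YI$ depends on $\Ctreat$ through $\Delta$ and $\tilde T$. So ``set the odds of $e(\x,\unc)$ to the nominal odds times $1/\lambda^\star$'' is not defined.
\item If instead $\U$ contains $\YI$ (hence $\Ctreat$), then $\bar G(\cdot\mid\X,\U,A=1)$ is degenerate and depends on $\U$. Assumption~\ref{ass:censoring_fun_indep_u} then fails, and with it the Form~I identity $S^{(1)}(t)=1-\E[A\YI/e(\X,\U)]$ on which your whole reduction rests.
\end{itemize}
No other choice of $\U$ helps. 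Assumptions~\ref{ass:cond_indep_censoring_u} and \ref{ass:censoring_fun_indep_u} give $C\independent(\U,T)\mid\X,A=1$ on $[0,\taumax]$, hence $\E[\YI\mid\X,\U,T,A=1]=\indic(T\le t)$. Therefore, for every admissible model, $\E[\YI\,\lambda(\X,\U)\mid\X=\x,A=1]=\E[\indic(\Ttreat\le t)\,\lambda(\X,\U)\mid\X=\x,A=1]$. The attainable values are those of the binary outcome $\indic(\Ttreat\le t)$, and these can lie strictly inside the range of your $\YI$-program.

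For instance, take $\Gamma=2$ and $\Prob(\Ttreat\le t\mid\x,A=1)=0.9$. Let half of the treated be censored immediately, independently of everything, and the rest never before $t$. Then:
\begin{itemize}
\item $\YI\in\{0,2\}$ with $\Prob(\YI=0\mid\x,A=1)=0.55\ge 1-\gamma$, so $Q_{1-\gamma}(\x,1)=0$.
\item The minimum of $\E[\YI\lambda\mid\x,A=1]$ in your program is $\Gamma^{-1}\E[\YI\mid\x,A=1]=0.45$.
\item Every admissible model gives at least $1-\Gamma\,\Prob(\Ttreat>t\mid\x,A=1)=0.8$.
\end{itemize}
So your argument, like the paper's appeal to Dorn and Guo, proves validity and optimality only over the relaxed class of weights that are arbitrary functions of $(\X,\YI)$. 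It does not prove that \eqref{eqn:qb_bounds} is attained by a data-generating process satisfying Assumptions~\ref{ass:sutva_survival}--\ref{ass:censoring_fun_indep_u}. You should either drop the realizability claim or restrict ``sharp'' to that relaxed sense.
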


As in \citet{dorn2024doubly}, we can introduce $Q_{\gamma}(\X, 1)$ or $Q_{1-\gamma}(\X, 1)$ in Equation~\eqref{eqn:qb_bounds} using the constraint from Equation~\eqref{eqn:qb_constraint}. The resulting bounds are
\begin{align*}
    & S_{\mathrm{I}, -}^{(1)}(t, \Gamma) = 1 - \E \biggl[ \frac{A Q_{\gamma}(\X, 1)}{e(\X)} + A (\YI - Q_{\gamma}(\X, 1)) \biggl( 1 + \frac{1-e(\x)}{e(\x)} \Gamma^{\sign(\YI - Q_{\gamma}(\x, 1))} \biggr) \biggr] \\
    & \text{and} \quad S_{\mathrm{I}, +}^{(1)}(t, \Gamma) = 1 - \E \biggl[ \frac{A Q_{1-\gamma}(\X, 1)}{e(\X)} \\
    & \quad \quad \quad \quad \quad \quad \quad \quad \quad \quad + A (\YI - Q_{1-\gamma}(\X, 1)) \biggl( 1 + \frac{1-e(\x)}{e(\x)} \Gamma^{-\sign(\YI - Q_{1-\gamma}(\x, 1))} \biggr) \biggr].
\end{align*}
They are sharp when the propensity score, conditional quantile and censoring function are all consistently estimated. When the censoring function is consistently estimated, they are singly-valid with respect to the propensity score, that is, if the propensity score is correctly estimated but not the conditional quantile, the bound is still valid but possibly conservative. Proof of single validity rests upon the same arguments used in \citet{dorn2024doubly} (Proposition~2).

\subsection{Theorem~\ref{theo:dvds_estim_II}} \label{app:theo_bounds_estim_II}

The next theorem establishes the DVDS bounds on the second form of the survival function $S_\mathrm{II}^{(1)}(t)$ and their properties, where we denote $\YII = \indic(\tilde{T} > t) / \bar{G}(t | \X, A=1)$. The proof is based on the same elements as those in Theorem~\ref{theo:dvds_estim_I}.
\begin{theorem}[DVDS bounds on $S_\mathrm{II}^{(1)}(t)$] \label{theo:dvds_estim_II}
    Define the lower $S_{\mathrm{II}, -}^{(1)}(t, \Gamma)$ and upper $S_{\mathrm{II}, +}^{(1)}(t, \Gamma)$ bounds for $S_\mathrm{II}^{(1)}(t)$ as
    \begin{align}
        & S_{\mathrm{II}, \pm}^{(1)}(t, \Gamma) = \E[ A \YII + (1-A)\rho_\pm(t, \X, 1, \Gamma)] \nonumber \\
        & + \E \biggl[ A \frac{1 - e(\X)}{e(\X)} \nonumber \\
        & \quad \quad \times \biggl( Q_\pm(\X, 1) + \Gamma^{\pm \sign(\YII - Q_\pm(\X, A=1))} (\YII - Q_\pm(\X, 1)) - \rho_\pm(t, \X, 1, \Gamma) \biggr) \biggr], \label{eqn:PIS_estim_II}
    \end{align}
    where $Q_+(\X, 1) = Q_\gamma(\X, A=1)$, $Q_-(\X, 1) = Q_{1-\gamma}(\X, A=1)$, with $\gamma = \Gamma / (1 + \Gamma)$, $Q_\nu(\X, A=1)$, the quantile of order $\nu$ of the distribution of $\YII$ conditionally on $\X$ and $A=1$, and
    \begin{align*}
        \rho_\pm(t, \X, A=1, \Gamma) & = \Gamma^{-1} \E[ \YII | \X, A=1] \\
        & + (1 - \Gamma^{-1}) \E \biggl[ Q_\pm(\X, 1) + \frac{\{ \YII - Q_\pm(\X, 1) \}_\pm}{1-\gamma} \bigg| \X, A=1 \biggr].
    \end{align*}
    Under Assumptions~\ref{ass:sutva_survival} to \ref{ass:censoring_fun_indep_u}, and when the censoring function $\hat{\bar{G}}$ is consistently estimated, these bounds are doubly valid and doubly sharp with respect to the propensity scores and modified outcome regressions, as defined in \citet{dorn2024doubly}.
\end{theorem}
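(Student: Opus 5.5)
The plan follows the proof of Theorem~\ref{theo:dvds_estim_I}, with $\YII = \indic(\tilde{T} > t)/\bar{G}(t | \X, A=1)$ replacing $\YI$. The one structural change is a sign. Here $S^{(1)}_\mathrm{II}(t) = \E[\indic(\Ttreat > t)]$ is an expectation of the outcome itself, not one minus an expectation. So the upper bound is obtained by \emph{maximising} the adversarial weight, which uses $Q_+ = Q_\gamma$ and $\rho_+$; the lower bound uses $Q_-$ and $\rho_-$.

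\textbf{Identification.} Starting from the derivation in Appendix~\ref{proof:estimand_II}, I first show that $\E[\YII | \X, A=1] = \E[\indic(\Ttreat > t) | \X, A=1]$. This uses Assumption~\ref{ass:cond_indep_censoring_u} together with Assumption~\ref{ass:censoring_fun_indep_u}, by conditioning on $\U$ inside and then integrating it out.

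\textbf{Outcome-regression bound.} I decompose $S^{(1)}(t | \X=\x)$ as in Section~\ref{sec:bounds_modif_outc_regr}:
\begin{equation*}
S^{(1)}(t | \X=\x) = e(\x)\E[\YII | \x, A=1] + (1-e(\x))\E[\YII\, w^\star | \x, A=1],
\end{equation*}
where $w^\star \in \mathcal{W}(\Gamma)$ by Proposition~\ref{prop:MSM_Ya_equiv}. I then rescale the weight as $(w-\Gamma^{-1})/(1-\Gamma^{-1}) \in [0,(1-\gamma)^{-1}]$. By the Rockafellar--Uryasev representation, the optimum over $\mathcal{W}(\Gamma)$ equals the upper- or lower-tail CVaR, which gives $\rho_\pm$. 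Taking the outer expectation yields
\begin{equation*}
\E[A\YII + (1-A)\rho_\pm(t,\X,1,\Gamma)].
\end{equation*}

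\textbf{Propensity-score bound.} Independently, I adapt Proposition~2 of \citet{dorn2022sharp}. The worst-case weights are
\begin{equation*}
1/e_\pm = 1 + \tfrac{1-e}{e}\,\Gamma^{\pm\sign(\YII - Q_\pm)},
\end{equation*}
and the normalisation $\E[A/e_\pm | \X] = 1$ lets me insert $Q_\pm$ into the bound.

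\textbf{DVDS combination.} I write the bound in display~\eqref{eqn:PIS_estim_II} as the outcome-regression bound plus the correction term, and then check the DVDS properties in four steps.
\begin{enumerate}
\item At the true nuisances, the correction has mean zero. Conditional on $\X$ and $A=1$, one needs
\begin{equation*}
\E\bigl[Q_\pm + \Gamma^{\pm\sign(\YII - Q_\pm)}(\YII - Q_\pm) \,\big|\, \X, A=1\bigr] = \rho_\pm .
\end{equation*}
I verify this algebraically: $\Gamma^{\pm\sign(\cdot)}$ equals $\Gamma^{-1} + (1-\Gamma^{-1})(1-\gamma)^{-1}$ on the tail where it differs from $\Gamma^{-1}$. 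This identity is the key computation.
\item Rearranging with $1 + (1-e)/e = 1/e$ gives the propensity-form expression, analogous to display~\eqref{eqn:PIS_estim_I_prop_score}.
\item \emph{Double sharpness.} If $\hat\rho_\pm$ is correct, the correction vanishes in expectation for any $\hat e$. If $\hat e$ is correct, the $\rho$ terms in $\E[(1-A/e)\rho]$ cancel for any $\hat\rho$. In either case one recovers a sharp single bound, given correct $Q$.
\item \emph{Double validity.} With $Q$ misspecified, I follow Propositions~2 and~4 of \citet{dorn2024doubly} verbatim with $Y := \YII$. Under a wrong quantile, the weights $\Gamma^{\pm\sign(\YII-\tilde Q)}$ are still feasible elements of the weight set. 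Equivalently, $\tilde Q + (1-\gamma)^{-1}\{\cdot\}_\pm$ upper- or lower-bounds the CVaR by its variational form, so the bound is conservative.
\end{enumerate}

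\textbf{Main obstacle.} The hard part is the role of $\bar{G}$. All of the above holds only with the true censoring function, since otherwise $\E[\YII | \X, A=1]$ no longer equals the potential-outcome regression. I will therefore state consistency of $\hat{\bar G}$ as a standing condition. Under it, plug-in errors in $\bar G$ enter only through a vanishing perturbation of the pseudo-outcome, and the arguments of \citet{dorn2024doubly} apply to the \emph{estimated} pseudo-outcome. A further subtlety is that $\YII$ is discrete: it takes the value $0$ or $1/\bar G(t | \X, 1)$. The quantile construction and sign convention then need ties to be handled by randomisation at $Q_\pm$, or else I must note that the CVaR identity still holds with atoms. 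I will handle this as \citet{dorn2024doubly} do for general outcomes.
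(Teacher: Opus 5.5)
Your outline follows the paper's own route for this theorem. The paper proves it only by pointing back to Theorem~\ref{theo:dvds_estim_I}: a CVaR bound over the weight set $\mathcal{W}(\Gamma)$, the Dorn--Guo worst-case propensities, and the AIPW-type combination of \citet{dorn2024doubly}. Your check that $\Gamma=\Gamma^{-1}+(1-\Gamma^{-1})(1-\gamma)^{-1}$ is correct and more explicit than the paper; it shows the correction term has conditional mean zero for \emph{any} $Q$ paired with the matching $\rho$.

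\textbf{The failing step.} The step that does not go through is ``the optimum over $\mathcal{W}(\Gamma)$ equals the tail CVaR of $\YII$'', and the paper's appendix takes it too. Weights in $\mathcal{W}(\Gamma)$ are functions of the binary $Y_t=\indic(T>t)$, and $\YII>0$ forces $T>t$. So with $p=S(t|\X,A=1)$ you get $\E[\YII\,w(Y_t)|\X,A=1]=w(1)\,p$. The maximum over $\mathcal{W}(\Gamma)$ is therefore $\min\{\Gamma p,\,1-\Gamma^{-1}(1-p)\}$. This is $\rho_+$ computed for the Bernoulli outcome $\E[\YII|T,\X,A=1]=\indic(T>t)$, not for $\YII$. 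By contrast, $\rho_+$ built from $\YII$ is the optimum of $\E[\YII W|\X,A=1]$ over \emph{all} normalised $[\Gamma^{-1},\Gamma]$-valued weights $W$, including weights that depend on $\indic(\tilde T>t)$. The upper atom $\{\tilde T>t\}$ of $\YII$ has mass $p\,\bar G(t|\X,1)<p$, so this optimum is larger, and strictly so in general.

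Concretely, take $\Gamma=2$, $p=0.9$ and $\bar G(t|\X,1)=1/2$. Then $\YII\in\{0,2\}$ with $\Prob(\YII=2)=0.45>1-\gamma=1/3$, so $Q_\gamma=2$ and $\rho_+=0.45+1=1.45$. That is an upper bound exceeding $1$ for $\Prob(\Ttreat>t|\X,A=0)$, whereas the optimum over $\mathcal{W}(\Gamma)$ is $0.95$. You noticed that $\YII$ is two-valued and handled atoms correctly; the trouble is that its atoms are not those of $\indic(\Ttreat>t)$.

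\textbf{What survives and how to repair it.} Your argument does establish validity, since $\mathcal{W}(\Gamma)$ sits inside the set of all normalised weights and the CVaR dual representation gives an inequality. It also establishes the double-validity transfer from \citet{dorn2024doubly}. Sharpness, however, holds only for a relaxed model in which the MSM is imposed directly on the potential pseudo-outcome $\indic(\Ttreat\wedge\Ctreat>t)/\bar G(t|\X,1)$, so that confounding may act through $\Ctreat$. Under Assumptions~\ref{ass:sutva_survival} to~\ref{ass:censoring_fun_indep_u}, $\Ctreat$ is independent of $(\Ttreat,A)$ given $\X$, which rules this out. To close the gap you have two options:
\begin{enumerate}
\item State and prove sharpness in that explicit relaxed sense. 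Check that Equation~\eqref{eqn:MSM_U_chap4} with Assumption~\ref{ass:xu-ignorability_survival} gives the density-ratio bound for the pseudo-outcome and that its mean is $S^{(1)}(t)$, then invoke \citet{dorn2024doubly} verbatim.
\item For sharpness under the paper's assumptions, build $Q_\pm$ and $\rho_\pm$ from the Bernoulli law of $\indic(\Ttreat>t)$ given $\X$ and $A=1$, and re-derive the correction term.
\end{enumerate}

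A smaller point concerns your double-validity step. When $\tilde Q$ is wrong, $\Gamma^{\pm\sign(\YII-\tilde Q)}$ is not normalised, so it is not a feasible weight. Conservativeness comes instead from the Rockafellar--Uryasev inequality combined with your mean-zero identity, as in your second formulation.
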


\section{Bounds for the RMST and Difference in RMST} \label{app:rmst}

We propose two methods to derive bounds for the RMST: by integrating the bounds for the survival function, or by deriving directly the bounds for the RMST.

\subsection{Bounds via Integration by the Rectangle Method} \label{app:rmst_bounds}

Bounds for the RMST can be obtained by integrating the bounds for the survival functions with respect to $t$ between 0 and a time horizon $\tau$. The lower and upper bounds of the partially identified set for the difference in RMST are therefore, respectively,
\begin{align}
    \Delta \mathrm{RMST}(\tau)_- & = \int_0^\tau S_{-}^{(1)}(t, \Gamma) \, \mathrm{d}t - \int_0^\tau S_{+}^{(0)}(t, \Gamma) \, \mathrm{d}t \quad \text{and} \label{eqn:RMST_PIS_lb} \\
    \Delta \mathrm{RMST}(\tau)_+ & = \int_0^\tau S_{+}^{(1)}(t, \Gamma) \, \mathrm{d}t - \int_0^\tau S_{-}^{(0)}(t, \Gamma) \, \mathrm{d}t, \label{eqn:RMST_PIS_ub}
\end{align}
where the information on the form (Form~I or II) in the index is dropped.

With an i.i.d.\ sample $\mathcal{D}$ as defined earlier in Section~\ref{sec:notations}, empirical estimators of Equations~\eqref{eqn:RMST_PIS_lb} and \eqref{eqn:RMST_PIS_ub} can be obtained by integrating empirical estimators of Equations~\eqref{eqn:PIS_estim_I} and \eqref{eqn:PIS_estim_II} using the rectangle method.

\subsection{Faster Bounds via Analytical Expression} \label{app:fast_rmst_bounds}

It is possible to show that, under Assumptions~\ref{ass:sutva_survival} to \ref{ass:censoring_fun_indep_u}, the RMST among the treated can be written
\begin{equation*}
    \mathrm{RMST}^{(1)}(\tau) = \E \biggl[ \frac{A \Delta^\tau}{e(\X, \U) \bar{G}(\tilde{T} \wedge \tau | \X, A=1)} \tilde{T} \wedge \tau \biggr],
\end{equation*}
with $\Delta^\tau = \indic(T \wedge \tau \leq C) = \Delta + (1 - \Delta) \indic(\tilde{T} \geq \tau) = \max(\Delta, \indic(\tilde{T} \geq \tau))$.

\begin{proof}
    \begin{align*}
        \mathrm{RMST}^{(1)}(\tau) & = \E[\Ttreat \wedge \tau]  \\
        & = \E \bigl[ \E[\Ttreat \wedge \tau | \X, \U] \bigr] \quad \text{by tower property} \\
        & = \E \biggl[ \frac{e(\X, \U)}{e(\X, \U)} \E [ \Ttreat \wedge \tau | \X, \U ] \biggr] \quad \text{by positivity (Assumption~\ref{ass:positivity_surv})} \\
        & = \E \biggl[ \frac{\E[A|\X, \U]}{e(\X, \U)} \E [ \Ttreat \wedge \tau | \X, \U ] \biggr] \quad \text{because $\E[A|\X, \U] = \Prob(A=1|\X, \U) = e(\X, \U)$} \\
        & = \E \biggl[ \E \biggl[ \frac{A}{e(\X, \U)} \Ttreat \wedge \tau \bigg| \X, \U \biggr] \biggr] \quad \text{by ($\X, \U$)-ignorability (Assumption~\ref{ass:xu-ignorability_survival})} \\
        & = \E \biggl[ \E \biggl[ \frac{A}{e(\X, \U)} \cdot \frac{\E[\indic(\Ttreat \wedge \tau \leq C) | \X, \U, A, \Ttreat]}{\E[\indic(\Ttreat \wedge \tau \leq C) | \X, \U, A, \Ttreat]} \Ttreat \wedge \tau \bigg| \X, \U \biggr] \biggr] \\
        & = \E \biggl[ \E \biggl[ \frac{A}{e(\X, \U)} \cdot \frac{\indic(\Ttreat \wedge \tau \leq C)}{\bar{G}(\Ttreat \wedge \tau | \X, \U, A)} \Ttreat \wedge \tau \bigg| \X, \U \biggr] \biggr] \quad \text{by tower property} \\
        & = \E \biggl[ \frac{A}{e(\X, \U)} \cdot \frac{\indic(T \wedge \tau \leq C)}{\bar{G}(T \wedge \tau | \X, \U, A)} T \wedge \tau \biggr] \quad \text{by consistency and tower property} \\
        & = \E \biggl[ \frac{A}{e(\X, \U)} \cdot \frac{\Delta^\tau}{\bar{G}(T \wedge \tau | \X, A)} T \wedge \tau \biggr] \quad \text{by Assumption~\ref{ass:censoring_fun_indep_u}}
    \end{align*}
    If $A = 1$ and $\Delta^\tau = 1$, $\tilde{T} \wedge \tau = \min(T, C) \wedge \tau = T \wedge \tau$. Therefore,
    \begin{equation*}
        \mathrm{RMST}^{(1)}(\tau) = \E \biggl[ \frac{A \Delta^\tau}{e(\X, \U) \bar{G}(\tilde{T} \wedge \tau | \X, A=1)} \tilde{T} \wedge \tau \biggr].
    \end{equation*}
\end{proof}

DVDS bounds for the RMST can be obtained by replacing $\YII$ with
\begin{equation*}
    \Yrmst = \frac{\Delta^\tau}{\bar{G}(\tilde{T} \wedge \tau | \X, A=1)} \tilde{T} \wedge \tau
\end{equation*}
in Theorem~\ref{theo:dvds_estim_II}. The resulting lower and upper bounds are respectively denoted $\mathrm{RMST}^{(1)}_-(\tau, \Gamma)$ and $\mathrm{RMST}^{(1)}_+(\tau, \Gamma)$. These bounds enjoy the same robustness properties as those cited in Theorem~\ref{theo:dvds_estim_II} and can be computed the same way as the bounds for Form~II.

\section{Experiments} \label{app:experiments}

\subsection{Complete Simulation Setup} \label{app:complete_sim_setup}

In our experiments, we considered $p_\mathbf{X} = 2$ observed confounders and $p_\mathbf{U} = 2$ unobserved confounders. We chose $\mathbf{X} \sim \mathcal{U}(-1, 1)^{p_\mathbf{X}}$, where $\mathcal{U}(a, b)^p$ is the uniform distribution on $(a, b)$ of dimension $p$. For $j$ in $[\![1, p_\mathbf{U}]\!]$, conditionally on $\mathbf{X}=\mathbf{x}$, we designed each unobserved confounder $U_j$ as
\begin{equation*}
    U_j|\mathbf{X}=\mathbf{x} \sim \lambda G + (1 - \lambda) \beta_j^T \mathbf{x} = \mathcal{N} \bigl( (1-\lambda) \beta_j^T \mathbf{x}, \, \lambda^2 \bigr),
\end{equation*}
with $G \sim \mathcal{N}(0, 1)$, $0 < \lambda < 1$, and $\beta_j = (\beta_{j,1}, \dots, \beta_{j,p_\mathbf{X}}) \in \mathbb{R}^{p_\mathbf{X}}_+$. We chose the distribution of $A$ conditionally on $\mathbf{X}$ and $\mathbf{U}$ to be a Bernoulli satisfying the MSM with a true sensitivity parameter $\Gamma^\star$. We designed the true propensity score $e(\mathbf{X}, \mathbf{U}) = \Prob(A=1|\mathbf{X}, \mathbf{U})$ such that it marginalizes to the nominal propensity score $e(\mathbf{X}) = \Prob(A=1|\mathbf{X}) = \mathrm{logistic}(\delta^T \mathbf{X} + 0.5)$, with $\delta \in \mathbb{R}^{p_\mathbf{X}}$. For simplicity, we considered that $\mathbf{U} = \sum_{j=1}^{p_\mathbf{U}} U_j / p_\mathbf{U}$.

To do so, we chose a true propensity score of the form
\begin{equation} \label{eqn:adversarial_prop_score}
    e(\mathbf{X}, \mathbf{U}) = l(\mathbf{X}) \cdot \mathds{1}(\mathbf{U} > t(\mathbf{X})) + u(\mathbf{X}) \cdot \mathds{1}(\mathbf{U} \leq t(\mathbf{X})),
\end{equation}
where
\begin{align*}
    & l(\mathbf{X}) = \frac{e(\mathbf{X})}{e(\mathbf{X}) + (1-e(\mathbf{X})) \Gamma^\star} \quad \text{and} \\
    & u(\mathbf{X}) = \frac{e(\mathbf{X})}{e(\mathbf{X}) + (1-e(\mathbf{X})) /\Gamma^\star}.
\end{align*}
We want to find the threshold $t(\mathbf{X})$ such that $\E[e(\mathbf{X}, \mathbf{U}) | \mathbf{X}] = e(\mathbf{X})$. Therefore, by applying the expectancy conditionally on $\mathbf{X}$ on Equation~\eqref{eqn:adversarial_prop_score} and using the previous constraint, we get
\begin{align*}
    \mathbb{P}(\mathbf{U} \leq t(\mathbf{X}) | \mathbf{X}) = \frac{e(\mathbf{X}) - l(\mathbf{X})}{u(\mathbf{X}) - l(\mathbf{X})}.
\end{align*}
As we assumed that $\mathbf{U} = \sum_{j=1}^{p_\mathbf{U}} U_j / p_\mathbf{U}$, the previous equation becomes
\begin{align*}
    \mathbb{P} \Bigl( \sum_j U_j / p_\mathbf{U} \leq t(\mathbf{X}) \Big| \mathbf{X} \Bigr) = \frac{e(\mathbf{X}) - l(\mathbf{X})}{u(\mathbf{X}) - l(\mathbf{X})}.
\end{align*}
As the distribution of $U_j|\mathbf{X}=\mathbf{x}$ is Gaussian, if we assume that the $U_j$ are mutually independent conditionally on $\mathbf{X}=\mathbf{x}$, the same applies to $\sum_j U_j / p_\mathbf{U}$ conditionally on $\mathbf{X}=\mathbf{x}$, so
\begin{align*}
    \sum_j U_j / p_\mathbf{U} \, | \, \mathbf{X}=\mathbf{x} \sim \mathcal{N} \Bigl( \sum_j (1-\lambda) \beta_j^T \mathbf{x} / p_\mathbf{U}, \, \lambda^2 / p_\mathbf{U} \Bigr).
\end{align*}
Therefore, the threshold $t(\mathbf{X})$ is the quantile of a normal distribution, which can be found by using the \texttt{qnorm} function:
\begin{equation*}
    t(\mathbf{X}) = \texttt{qnorm} \biggl( \frac{e(\mathbf{X}) - l(\mathbf{X})}{u(\mathbf{X}) - l(\mathbf{X})}, \, \texttt{mean} = \sum_j (1-\lambda) \beta_j^T \mathbf{x} / p_\mathbf{U}, \, \texttt{sd} = \lambda / \sqrt{p_\mathbf{U}} \biggr).
\end{equation*}

The potential outcome $T^{(a)}$, with $a$ in $\{0, 1\}$, was randomly generated using the inverse transform sampling method via a proportional hazards model with Weibull-distributed baseline hazard \citep{bender2005generating}. In particular,
\begin{equation*}
    T^{(a)} = 10 \cdot \biggl( \frac{- \log (U^{(a)})}{0.95 \cdot \exp \bigl( a\log(5) + \beta_\X^T \X + \beta_\U^T \U \bigr)} \biggr)^{1/1.8},
\end{equation*}
where $U^{(a)} \sim \mathcal{U}(0, 1)$. The event time $T$ was then defined as $T = \indic(A=1) \cdot \Ttreat + \indic(A=0) \cdot \Tcont$ and the censoring time $C$ followed a Weibull distribution of shape 6 and scale 10. Finally, we took the minimum between $T$ and $C$ to get the observed time $\tilde{T}$. We generated 20 Monte-Carlo samples, each of size $n = 1000$, and computed the sensitivity bounds for 10 values of time equally spaced between 0.1 and 9. Administrative censoring was applied after the 0.95-quantile of the observed times. Other parameter values used in the simulations are summarized in Table~\ref{tab:simu_param_values_chap4}.

\begin{table*}[h]
    \centering
    \begin{tabular}{|l|l|}
        \hline
        \textbf{Parameter} & \textbf{Value} \\
        \hline
        $p_\mathbf{X}$ & 2 \\
        $p_\mathbf{U}$ & 2 \\
        $n$ & 1000 \\
        $\Gamma^\star$ & 3 \\
        $\lambda$ & 0.9 \\
        $\beta_j$ & $\mathcal{U}(0, 1)^{p_\mathbf{X}}$ \\
        $\beta_\X$ & $\log(\mathcal{U}(1.1, 1.3)^{p_\mathbf{X}})$ \\
        $\beta_\U$ & $\log(\mathcal{U}(0.5, 0.8)^{p_\mathbf{U}})$ \\
        $\delta$ & $\mathcal{U}(-0.5, 0.5)^{p_\mathbf{X}}$ \\
        \hline
    \end{tabular}
    \caption[Parameter values used in the simulations.]{Parameter values used in the simulations. $\mathcal{U}(a, b)^p$ means that the vector of size $p$ was generated randomly according to a uniform distribution on $(a, b)$. $\log(\mathcal{U}(a, b)^p)$ means that the logarithm of each of the $p$ components in the random sample was taken.}
    \label{tab:simu_param_values_chap4}
\end{table*}

\subsection{Additional Information on Real Data} \label{app:real_data}

\subsubsection{German Breast Cancer Study Group Dataset}

We included 7 covariates in our analysis: \texttt{age} (the age, in years), \texttt{size} (the tumor size, in mm), \texttt{nodes} (the number of positive lymph nodes), \texttt{pgr} (the concentration of progesterone receptors, in fmol/L), \texttt{er} (the concentration of estrogen receptors, in fmol/L), \texttt{meno} (the menopausal status, i.e.\ 0 if premenopausal, and 1 if postmenopausal), and \texttt{grade} (the tumor grade, i.e.\ 1, 2, or 3). Categorical variables were binary encoded. The treatment $A$ is the \texttt{hormon} variable, the observed time $\tilde{T}$ is the \texttt{rfstime} variable, and the censoring status variable $\Delta$ is the \texttt{status} variable. More details on the dataset can be found on \href{https://www.kaggle.com/datasets/utkarshx27/breast-cancer-dataset-used-royston-and-altman}{https://www.kaggle.com/datasets/utkarshx27/breast-cancer-dataset-used-royston-and-altman}.

\subsubsection{Right Heart Catheterization Dataset}

The list and meaning of each variable in the original data can be found on\\
\href{https://hbiostat.org/data/repo/rhc}{https://hbiostat.org/data/repo/rhc}.

\subsection{Additional Results} \label{app:add_exp}

\subsubsection{Results on Synthetic Experiments}

\paragraph{Survival Function.}

Results of sensitivity analysis for the survival functions and the difference of survival functions are given in Figure~\ref{fig:simul_surv_fun_results} for our DVDS bounds with Form~I (in blue) and II (in green), and for the bounds from \citet{lee2024sensitivity} (in red). These results were obtained with the same setup as described in Section~\ref{sec:results_synth_exp}.

When $\Gamma = 3$, that is, the same value $\Gamma^\star$ that generated the data, the intervals almost always contain the true value of the survival or difference of survival functions. As pointed out in Section~\ref{sec:experiments_chap4}, remark that the DVDS bounds for Form~II of the survival function are unstable for high values of time $t$. In particular, the upper bound obtained on the control group increases even if this should not happen, because the survival function is a decreasing function over time (Figure~\ref{fig:control_surv_fun}). This can be explained by near violations of the positivity of the censoring process. On the contrary, the DVDS bounds for Form~I remain stable.

\begin{figure}[h!]
    \centering
    \subfigure[Among the treated.]{\includegraphics[width=0.49\textwidth]{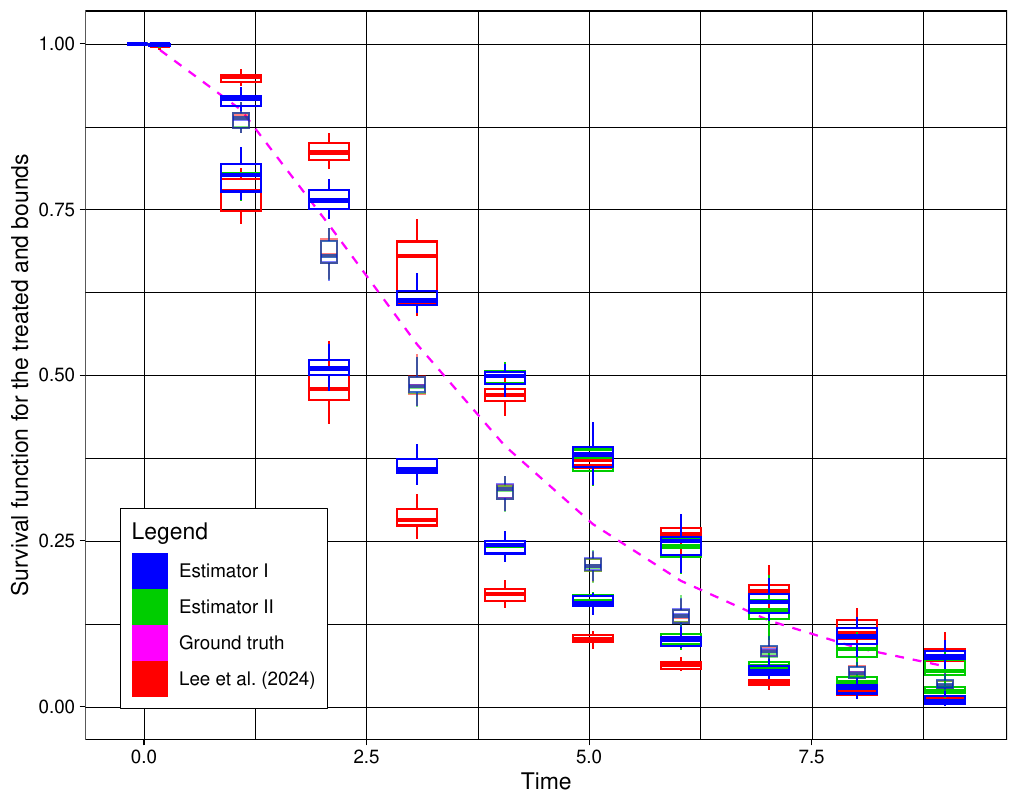}\label{fig:treated_surv_fun}}
    \hfill
    \subfigure[Among the control.]{\includegraphics[width=0.49\textwidth]{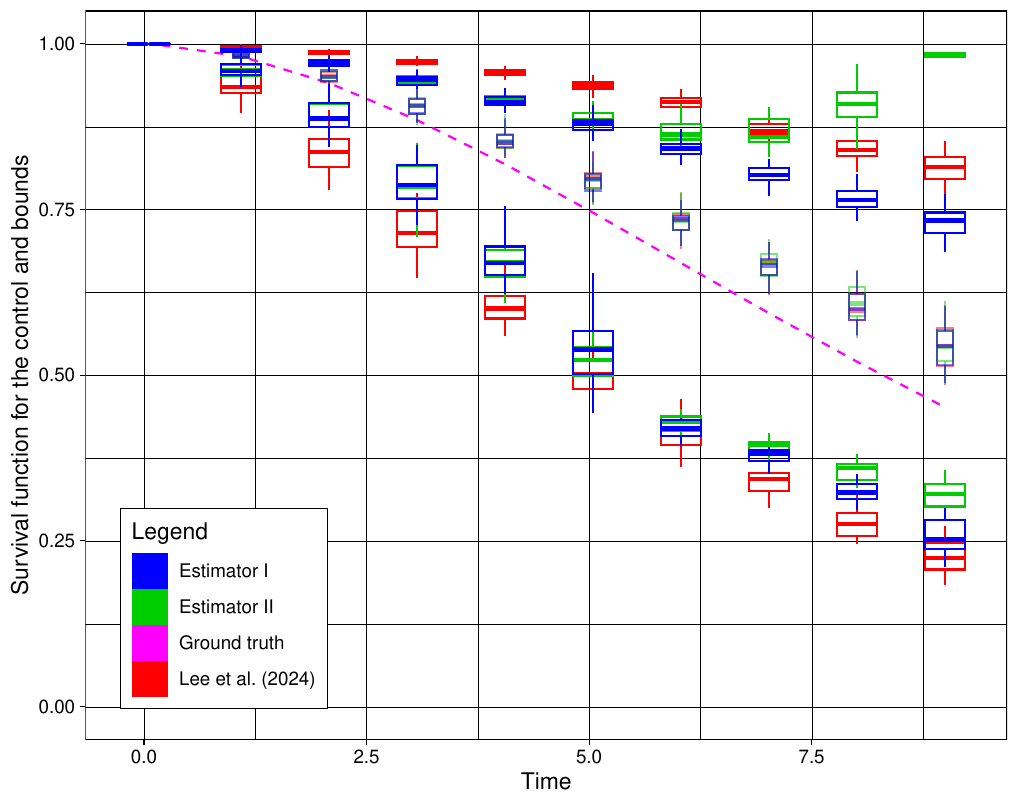}\label{fig:control_surv_fun}}
    \hfill
    \subfigure[Difference of survival functions.]{\includegraphics[width=0.49\textwidth]{simul_diff_surv_plot_lee_dvds_estim_I_and_II_v5_v13.pdf}\label{fig:diff_surv_fun}}
    \caption{Sensitivity analysis for the survival function among the treated (\ref{fig:treated_surv_fun}), the control (\ref{fig:control_surv_fun}) and the difference of survival functions (\ref{fig:diff_surv_fun}) \textbf{on the simulated data, as described in Section~\ref{sec:results_synth_exp}}. The dotted lines in magenta are the true survival functions and difference of survival functions. The large boxplots correspond to the estimated upper and lower sensitivity bounds (PEI) with Form~I (in blue), Form~II (in green), and the estimator from \citet{lee2024sensitivity} (in red) on 20 Monte-Carlo samples, for $\Gamma = 3$. The narrower and transparent boxplots correspond to the value under ignorability ($\Gamma = 1$) for each method.}
    \label{fig:simul_surv_fun_results}
\end{figure}

To check that our bounds are scalable to high dimensional datasets, we applied our approach to simulated data of sample size 5000 with 2 observed confounders and 2 unobserved confounders (Figure~\ref{fig:simul_surv_fun_hd_results}), and sample size 3000, with 40 observed confounders and 20 unobserved confounders (Figure~\ref{fig:simul_surv_fun_hd_2_results}). In both cases, we used 100 Monte Carlo samples. We parallelized the implementation on 9 CPUs and obtained an execution time of 9.74 hours for the difference of survival function in the second experiment with high dimensional variables. Our results show that the method is still valid, with conclusions that are similar to what we presented in Section~\ref{sec:results_synth_exp} and with boxplots of a similar width. Therefore, we conclude that our method is scalable to high dimensional datasets. Of course, as there are four nuisance parameters to estimate (propensity score, modified outcome regression, conditional quantile, conditional censoring function), each depending on the observed confounders $\X$, the higher the dimension of $\X$ is, the longer the execution time will be.

\begin{figure}[h!]
    \centering
    \subfigure[Among the treated.]{\includegraphics[width=0.49\textwidth]{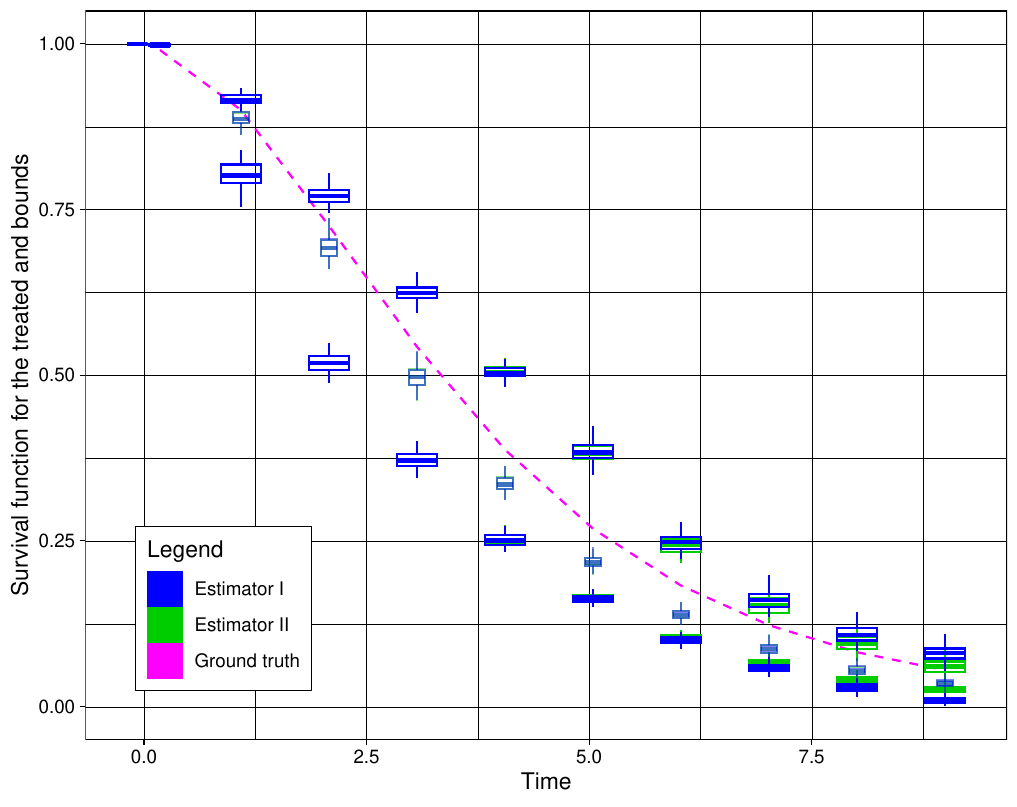}\label{fig:treated_surv_fun_hd}}
    \hfill
    \subfigure[Among the control.]{\includegraphics[width=0.49\textwidth]{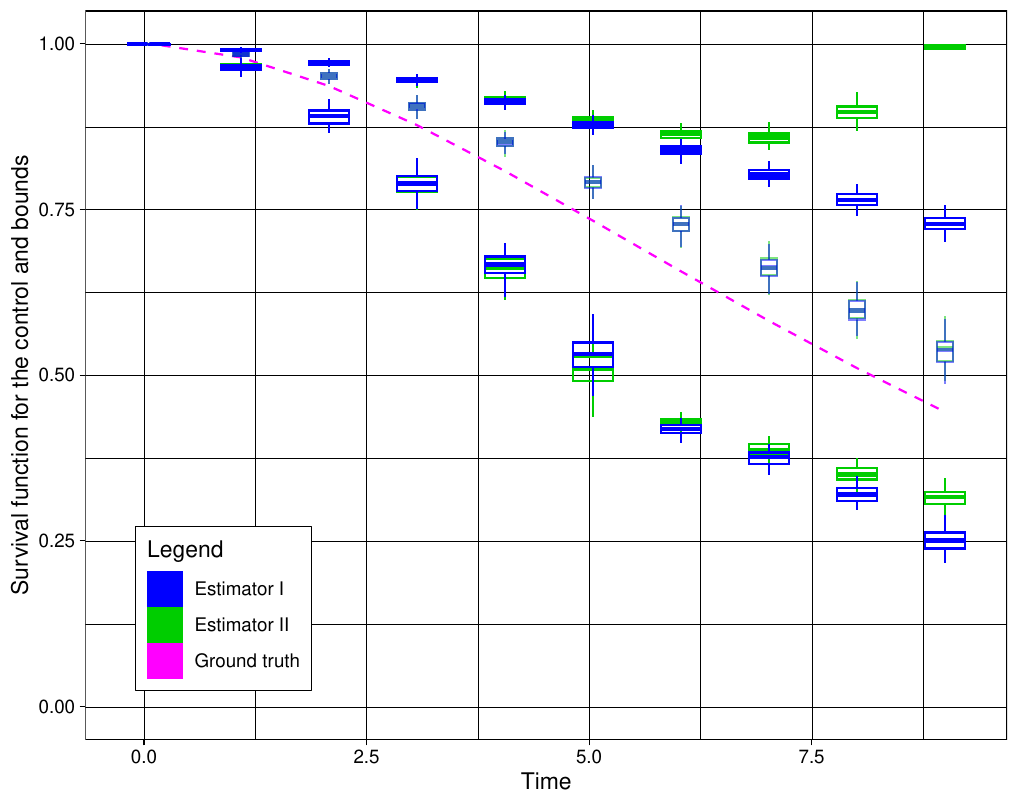}\label{fig:control_surv_fun_hd}}
    \hfill
    \subfigure[Difference of survival functions.]{\includegraphics[width=0.49\textwidth]{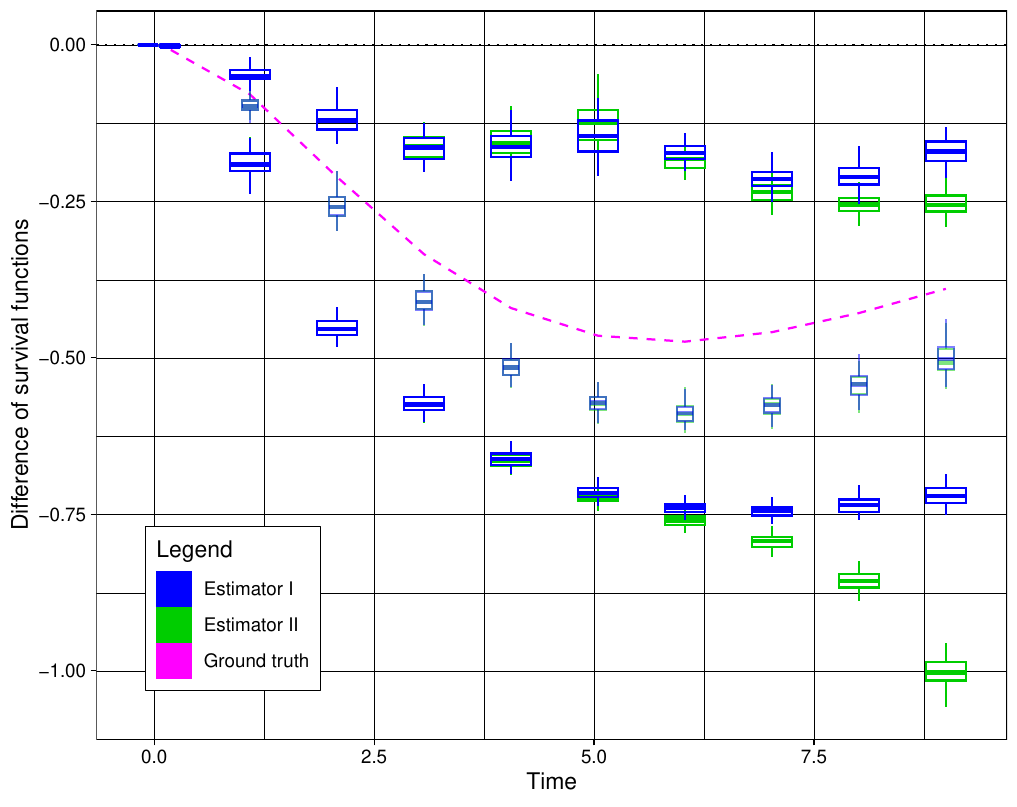}\label{fig:diff_surv_fun_hd}}
    \caption{Sensitivity analysis for the survival function among the treated (\ref{fig:treated_surv_fun_hd}), the control (\ref{fig:control_surv_fun_hd}) and the difference of survival functions (\ref{fig:diff_surv_fun_hd}) \textbf{on high dimensional simulated data ($n = 5000$, $p_\X = 2$, and $p_\U = 2$)}. The dotted lines in magenta are the true survival functions and difference of survival functions. The large boxplots correspond to the estimated upper and lower sensitivity bounds (PEI) with Form~I (in blue) and Form~II (in green) on 100 Monte-Carlo samples, for $\Gamma = 3$. The narrower and transparent boxplots correspond to the value under ignorability ($\Gamma = 1$) for each method.}
    \label{fig:simul_surv_fun_hd_results}
\end{figure}

\begin{figure}[h!]
    \centering
    \subfigure[Among the treated.]{\includegraphics[width=0.49\textwidth]{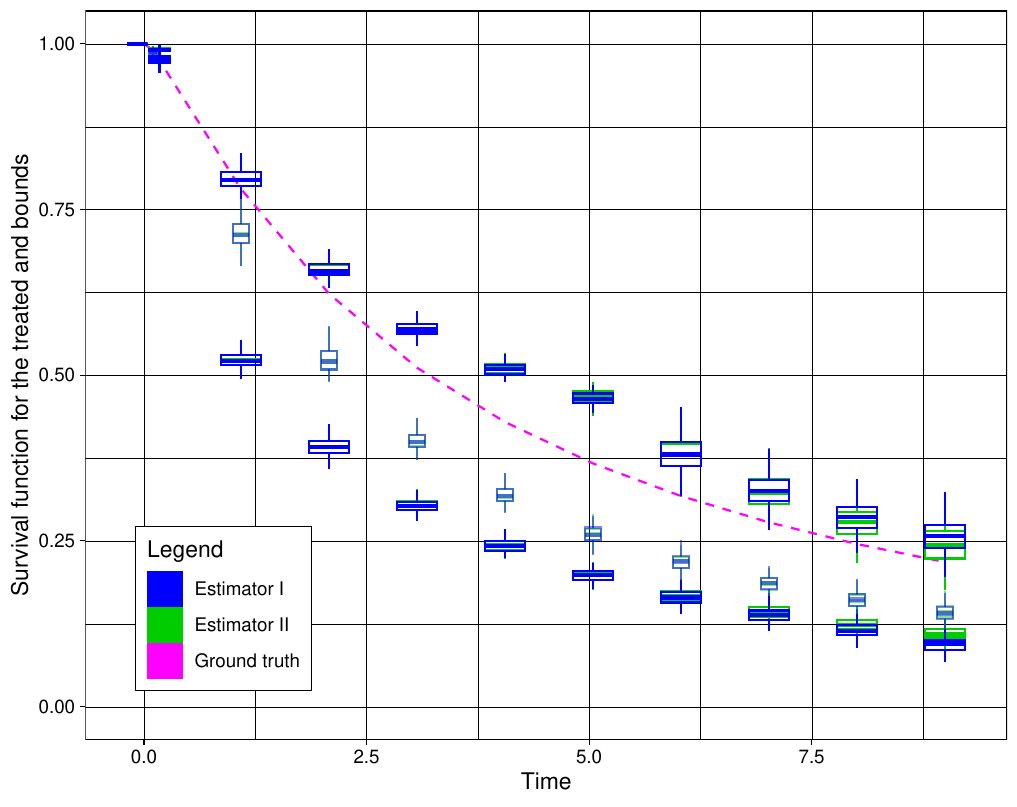}\label{fig:treated_surv_fun_hd_2}}
    \hfill
    \subfigure[Among the control.]{\includegraphics[width=0.49\textwidth]{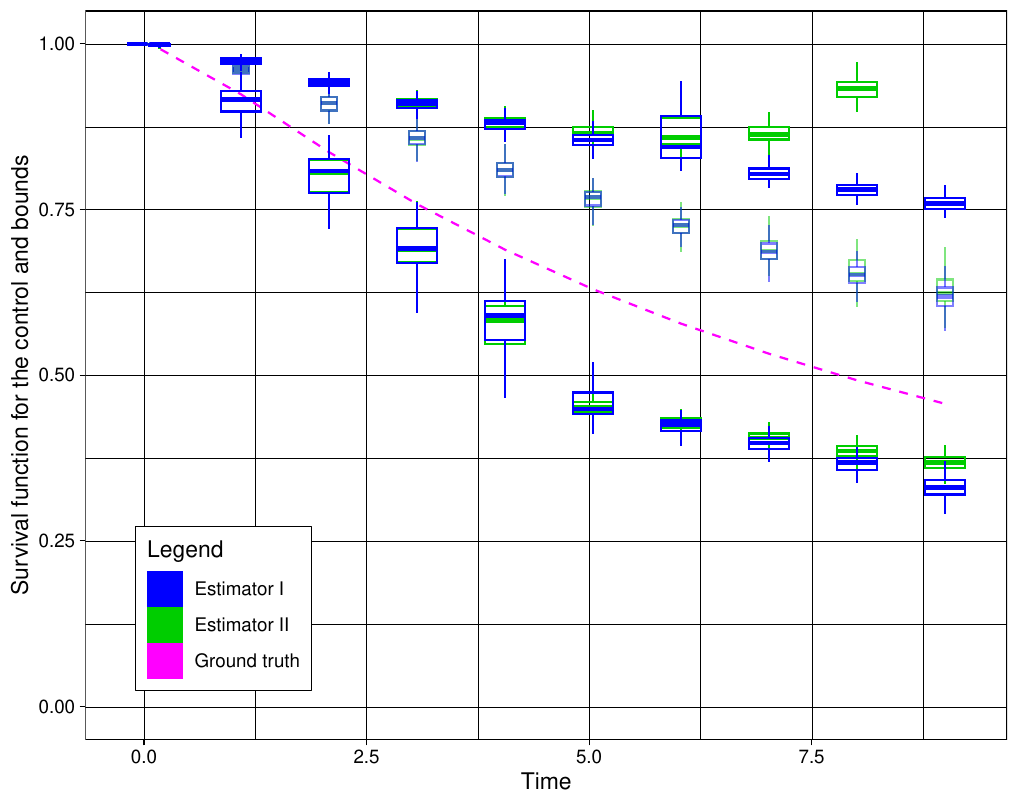}\label{fig:control_surv_fun_hd_2}}
    \hfill
    \subfigure[Difference of survival functions.]{\includegraphics[width=0.49\textwidth]{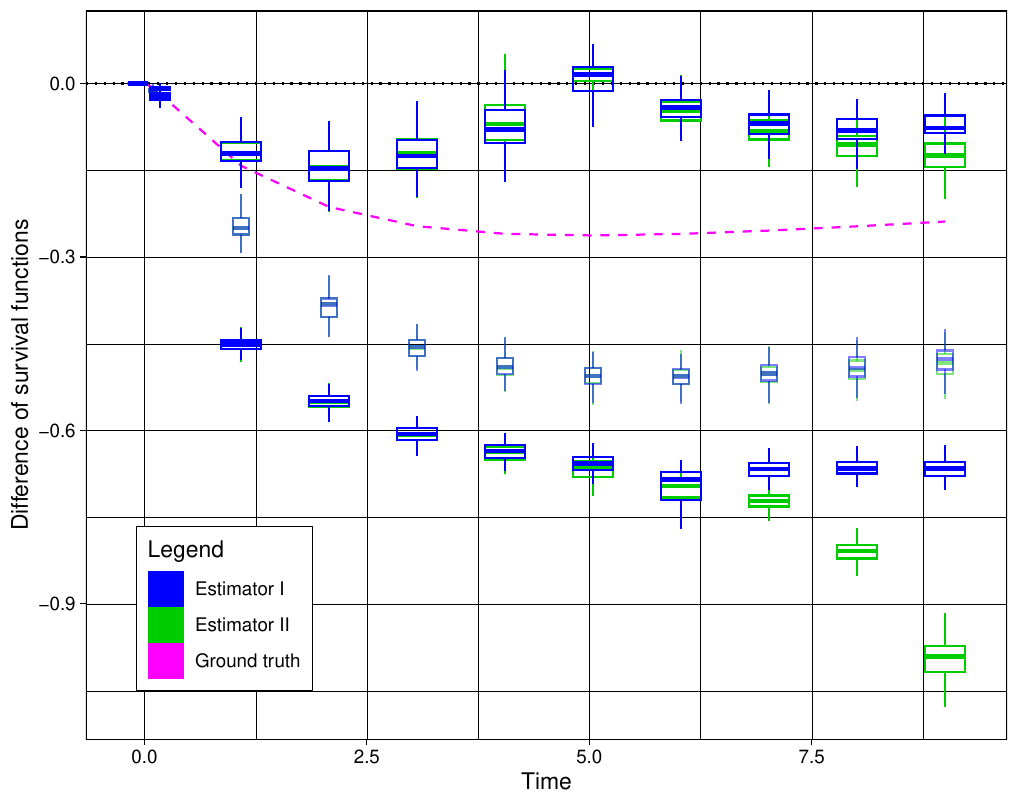}\label{fig:diff_surv_fun_hd_2}}
    \caption{Sensitivity analysis for the survival function among the treated (\ref{fig:treated_surv_fun_hd_2}), the control (\ref{fig:control_surv_fun_hd_2}) and the difference of survival functions (\ref{fig:diff_surv_fun_hd_2}) \textbf{on high dimensional simulated data ($n = 3000$, $p_\X = 40$, and $p_\U = 20$)}. The dotted lines in magenta are the true survival functions and difference of survival functions. The large boxplots correspond to the estimated upper and lower sensitivity bounds (PEI) with Form~I (in blue) and Form~II (in green) on 100 Monte-Carlo samples, for $\Gamma = 3$. The narrower and transparent boxplots correspond to the value under ignorability ($\Gamma = 1$) for each method.}
    \label{fig:simul_surv_fun_hd_2_results}
\end{figure}

We also added two experiments where the true sensitivity parameter $\Gamma^\star$ is equal to 1.01 (near unconfoundedness) and 5 (higher confounding strength), and we kept the sensitivity parameter $\Gamma$ given to our method equal to 3. As expected, under near unconfoundedness ($\Gamma^\star = 1.01$), our methodology and the one from \citet{lee2024sensitivity} are less biased: in Figure~\ref{fig:simul_surv_fun_gamma_1_results}, the boxplots of the values under unconfoundedness are closer to the true estimand and the bounds are centered around these true values. When the true sensitivity parameter is higher ($\Gamma^\star = 5$) but $\Gamma = 3$ (Figure~\ref{fig:simul_surv_fun_gamma_5_results}), the boxplots of the values under unconfoundedness move away from the true value. If $\Gamma$ was too small (for instance, 1.5), many intervals would not contain the true estimand anymore. Therefore, $\Gamma$ should be large enough for the bounds to contain the true treatment effect.

\begin{figure}[h!]
    \centering
    \subfigure[Among the treated.]{\includegraphics[width=0.49\textwidth]{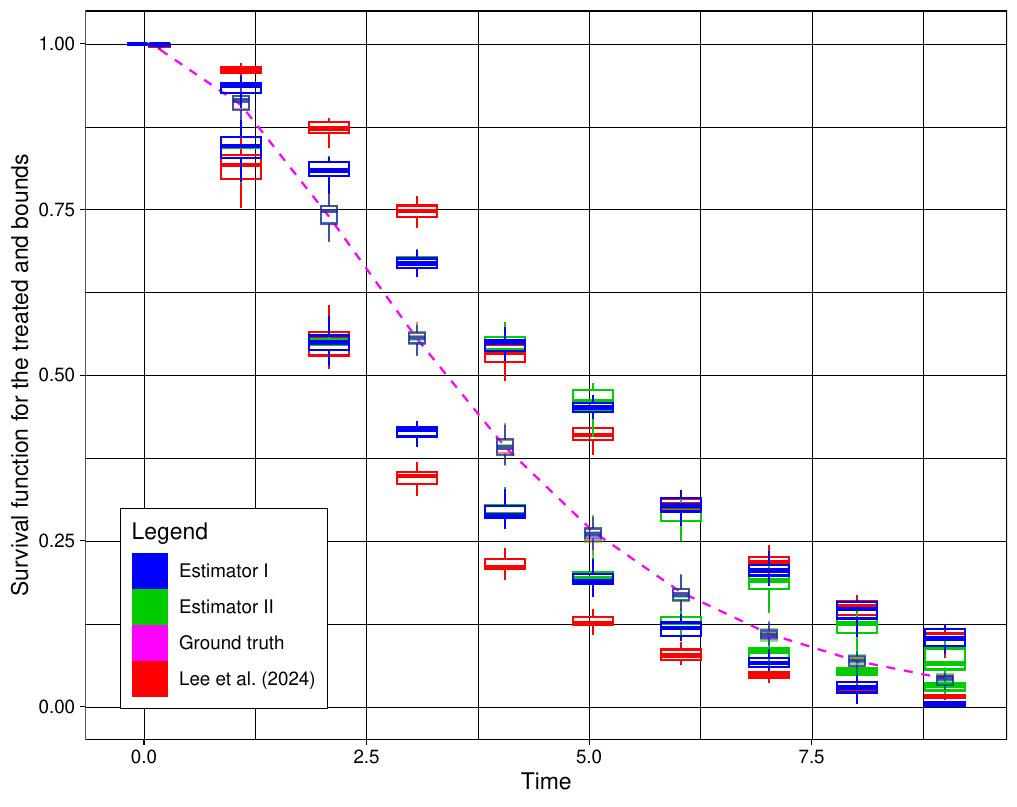}\label{fig:treated_surv_fun_gamma_1}}
    \hfill
    \subfigure[Among the control.]{\includegraphics[width=0.49\textwidth]{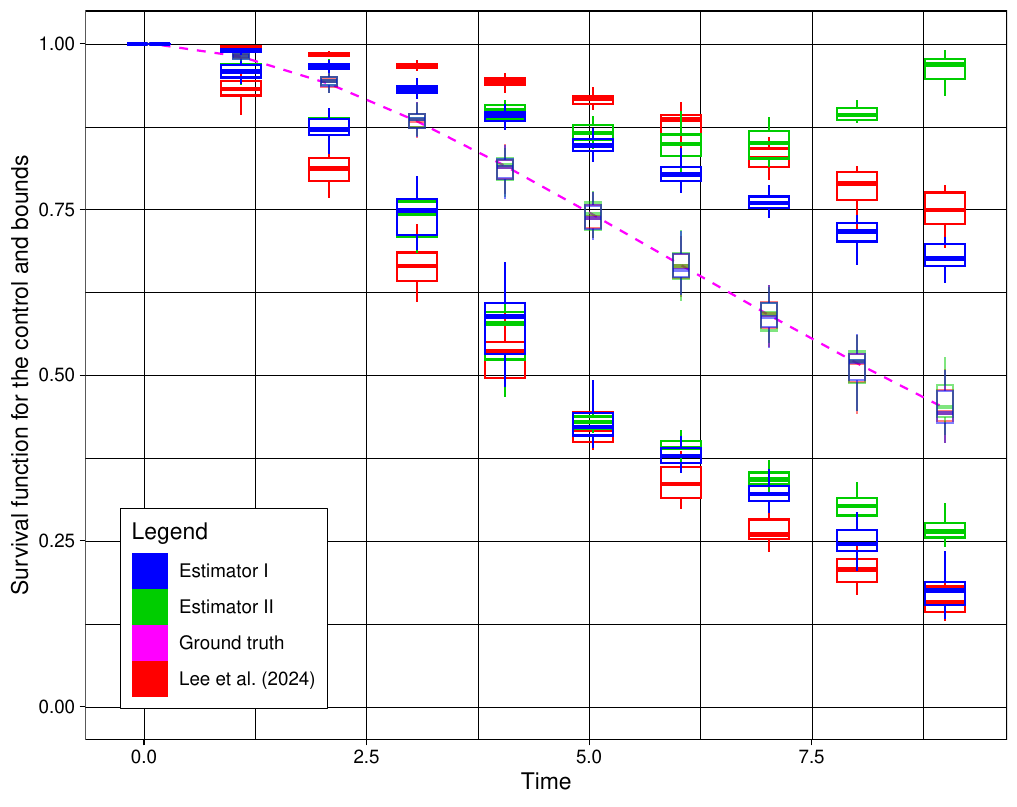}\label{fig:control_surv_fun_gamma_1}}
    \hfill
    \subfigure[Difference of survival functions.]{\includegraphics[width=0.49\textwidth]{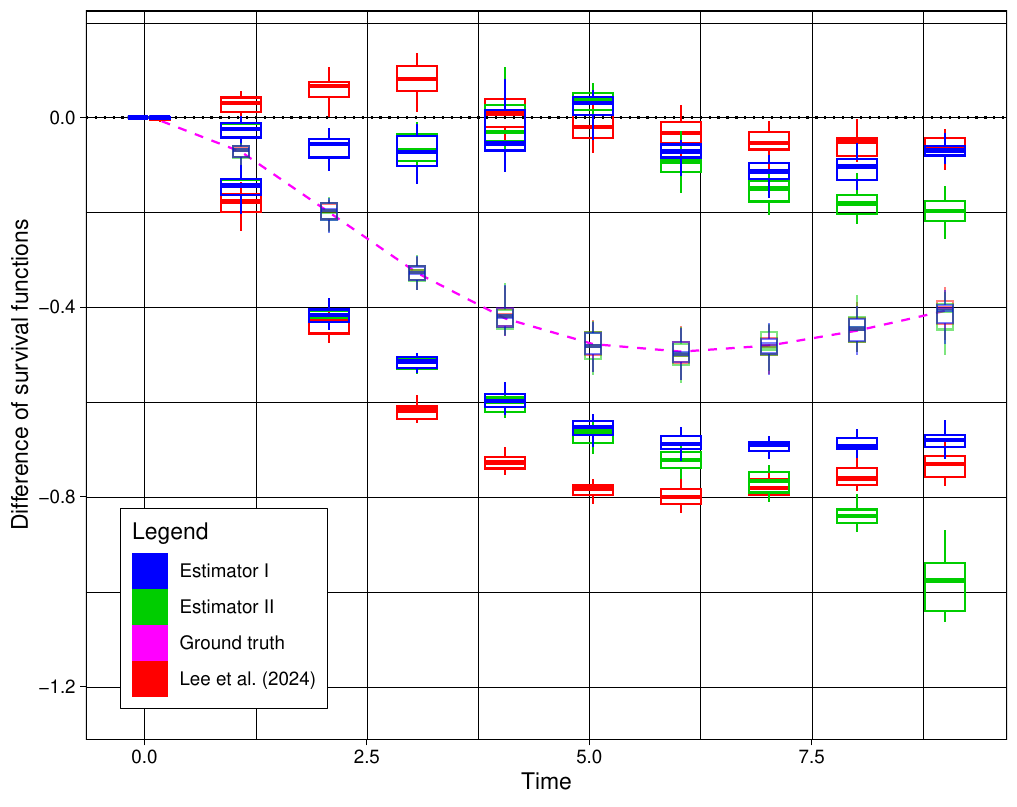}\label{fig:diff_surv_fun_gamma_1}}
    \caption{Sensitivity analysis for the survival function among the treated (\ref{fig:treated_surv_fun_gamma_1}), the control (\ref{fig:control_surv_fun_gamma_1}) and the difference of survival functions (\ref{fig:diff_surv_fun_gamma_1}) \textbf{on the simulated data when the true sensitivity parameter is $\Gamma^\star = 1.01$}. The dotted lines in magenta are the true survival functions and difference of survival functions. The large boxplots correspond to the estimated upper and lower sensitivity bounds (PEI) with Form~I (in blue), Form~II (in green), and the method from \citet{lee2024sensitivity} (in red) on 20 Monte-Carlo samples, for $\Gamma = 3$. The narrower and transparent boxplots correspond to the value under ignorability ($\Gamma = 1$) for each method.}
    \label{fig:simul_surv_fun_gamma_1_results}
\end{figure}

\begin{figure}[h!]
    \centering
    \subfigure[Among the treated.]{\includegraphics[width=0.49\textwidth]{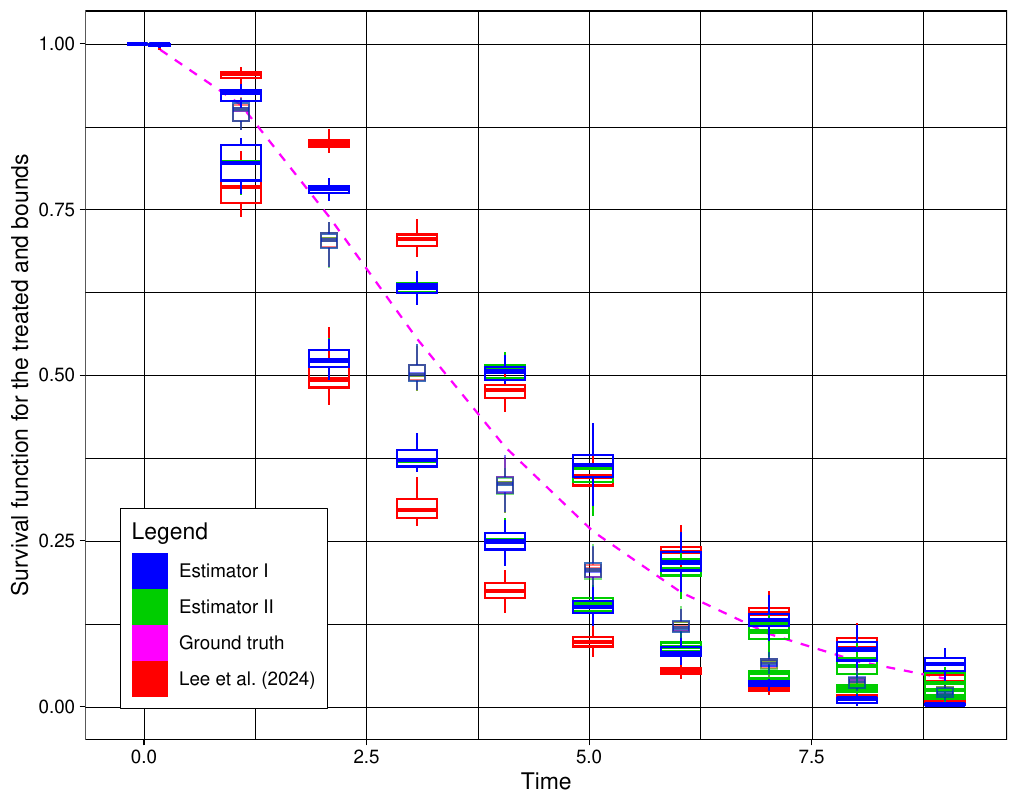}\label{fig:treated_surv_fun_gamma_5}}
    \hfill
    \subfigure[Among the control.]{\includegraphics[width=0.49\textwidth]{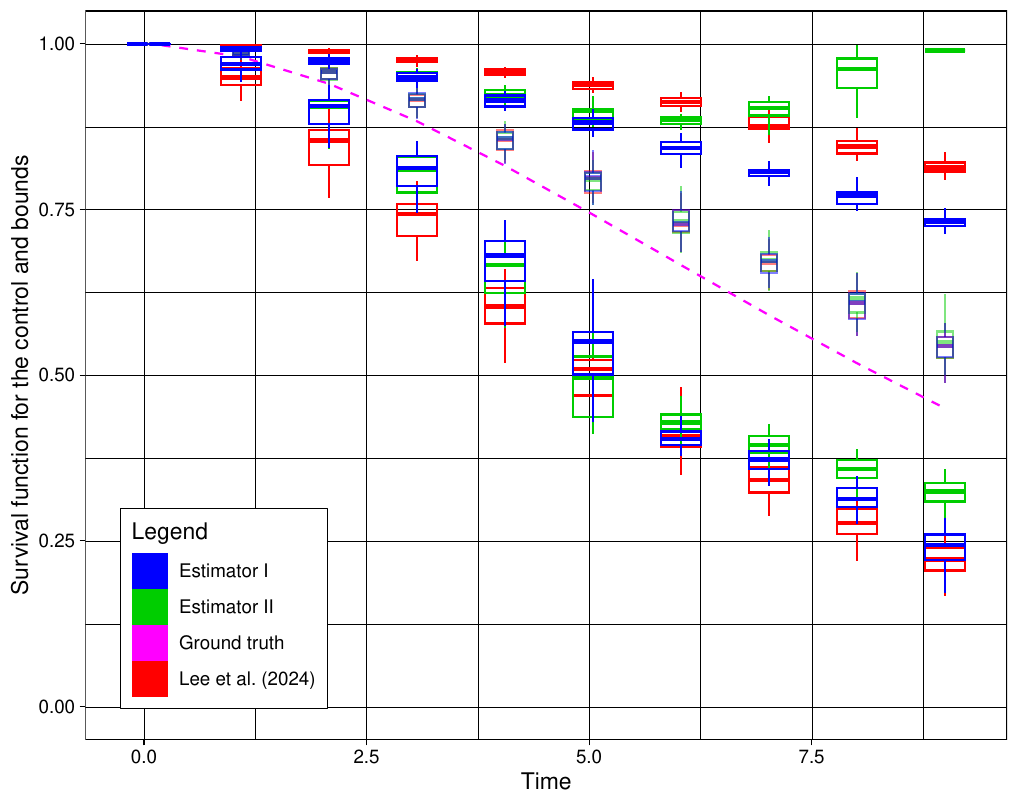}\label{fig:control_surv_fun_gamma_5}}
    \hfill
    \subfigure[Difference of survival functions.]{\includegraphics[width=0.49\textwidth]{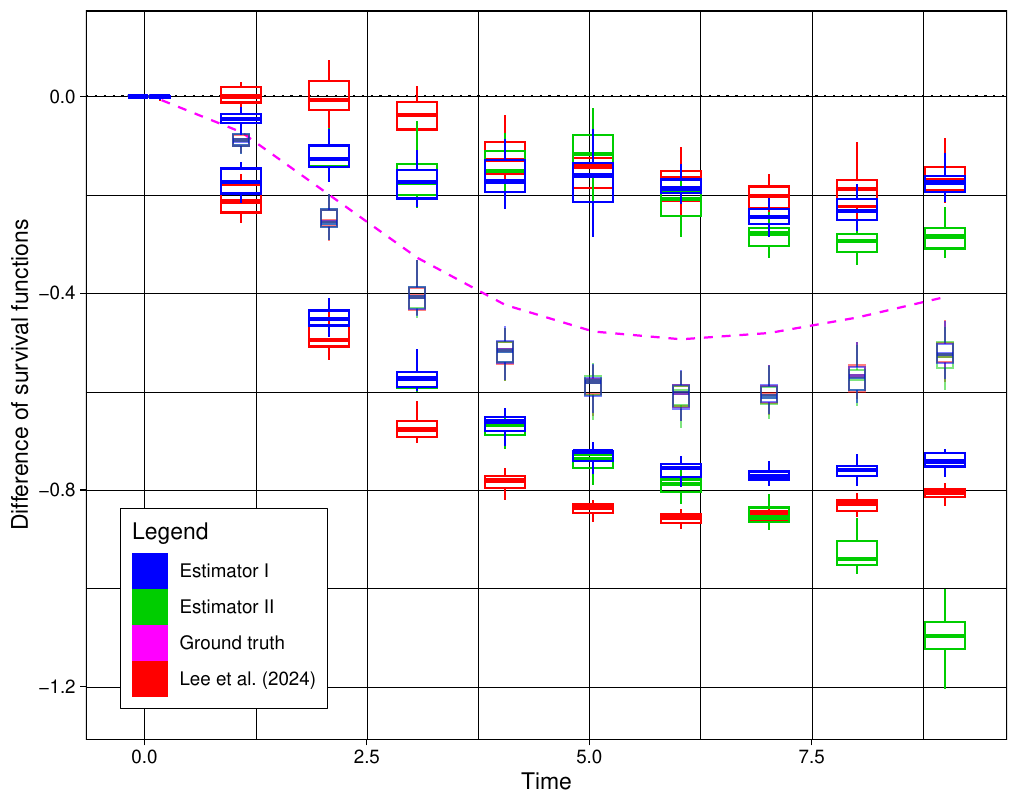}\label{fig:diff_surv_fun_gamma_5}}
    \caption{Sensitivity analysis for the survival function among the treated (\ref{fig:treated_surv_fun_gamma_5}), the control (\ref{fig:control_surv_fun_gamma_5}) and the difference of survival functions (\ref{fig:diff_surv_fun_gamma_5}) \textbf{on the simulated data when the true sensitivity parameter is $\Gamma^\star = 5$}. The dotted lines in magenta are the true survival functions and difference of survival functions. The large boxplots correspond to the estimated upper and lower sensitivity bounds (PEI) with Form~I (in blue), Form~II (in green), and the method from \citet{lee2024sensitivity} (in red) on 20 Monte-Carlo samples, for $\Gamma = 3$. The narrower and transparent boxplots correspond to the value under ignorability ($\Gamma = 1$) for each method.}
    \label{fig:simul_surv_fun_gamma_5_results}
\end{figure}

Finally, we compared the methods in the case of informative censoring, where the censoring time $C$ depends on $\X$ and $A$ (and we could also add $\U$). We do not present the bounds obtained with Form~II because they diverged. $C$ was generated the same way as the event time $T$, with
\begin{equation*}
    C^{(a)} = 7 \cdot \biggl( \frac{- \log (U^{(a)})}{0.95 \cdot \exp \bigl( a\log(5) + \beta_{\X, C}^T \X \bigr)} \biggr)^{1/6},
\end{equation*}
for $a \in \{0, 1\}$, where $U^{(a)} \sim \mathcal{U}(0, 1)$ and $\beta_{\X, C} = \mathcal{U}(5.0, 5.5)^{p_\mathbf{X}}$. Figure~\ref{fig:simul_surv_fun_inform_cens_results} shows that both methods stay valid when $\Gamma^\star = \Gamma = 3$, but the boxplots of the two methods do not overlap under ignorability, as expected. Note that our method is less biased when the estimand is the difference of survival functions (Figure~\ref{fig:diff_surv_fun_inform_cens}), but at the cost of potentially larger intervals in practice. Indeed, we can see that, for high values of time, such as around 7.5 time units, our method yields larger intervals than the method of \citet{lee2024sensitivity}. This can be due to the fact that our bounds do not include a stabilization with respect to the censoring function $\bar{G}$, while the bounds of \citet{lee2024sensitivity} do not include $\bar{G}$ at all.

\begin{figure}[h!]
    \centering
    \subfigure[Among the treated.]{\includegraphics[width=0.49\textwidth]{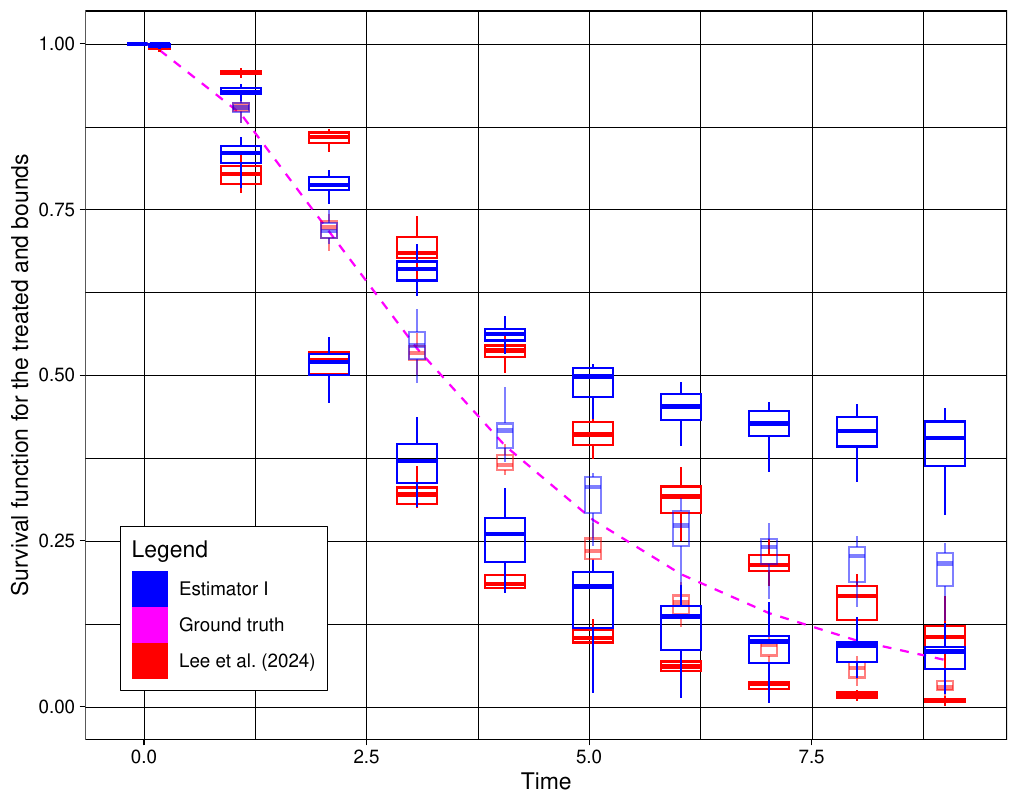}\label{fig:treated_surv_fun_inform_cens}}
    \hfill
    \subfigure[Among the control.]{\includegraphics[width=0.49\textwidth]{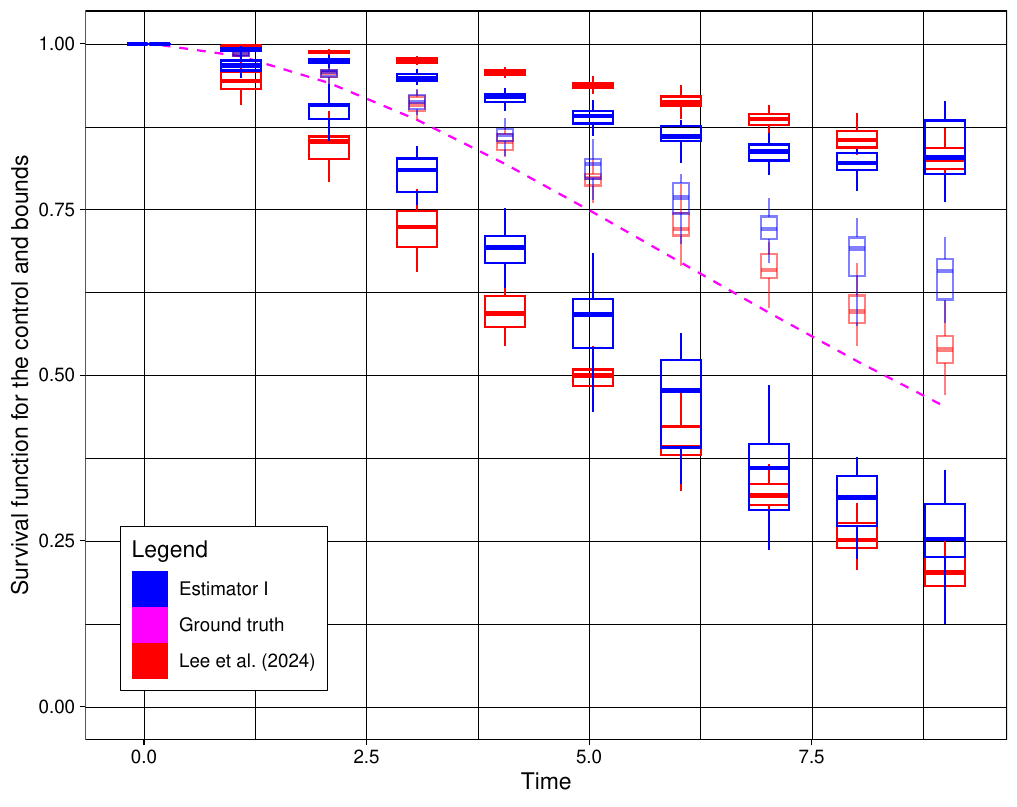}\label{fig:control_surv_fun_inform_cens}}
    \hfill
    \subfigure[Difference of survival functions.]{\includegraphics[width=0.49\textwidth]{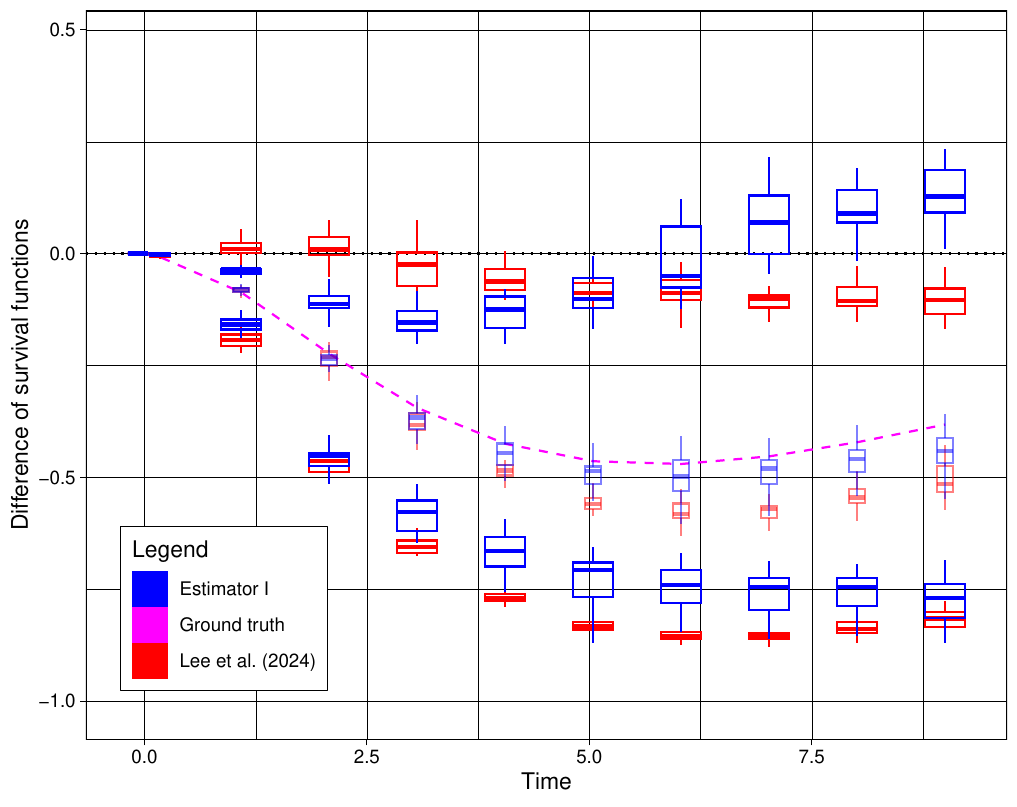}\label{fig:diff_surv_fun_inform_cens}}
    \caption{Sensitivity analysis for the survival function among the treated (\ref{fig:treated_surv_fun_inform_cens}), the control (\ref{fig:control_surv_fun_inform_cens}) and the difference of survival functions (\ref{fig:diff_surv_fun_inform_cens}) \textbf{on the simulated data in the presence of informative censoring}. The dotted lines in magenta are the true survival functions and difference of survival functions. The large boxplots correspond to the estimated upper and lower sensitivity bounds (PEI) with Form~I (in blue) and the method from \citet{lee2024sensitivity} (in red) on 20 Monte-Carlo samples, for $\Gamma = 3$. The narrower and transparent boxplots correspond to the value under ignorability ($\Gamma = 1$) for each method.}
    \label{fig:simul_surv_fun_inform_cens_results}
\end{figure}

\paragraph{RMST and Difference in RMST.}

We parallelized computations between different values of time horizon $\tau$ for our DVDS bounds.

For the RMST, \citet{lee2024sensitivity} originally computed the area under the survival curve via the left point rectangle method whereas we implemented the middle point rectangle method because it converges slightly faster towards the true integral. To compute the area under the curve, we used all observed time points between 0 and the time horizon $\tau$.

Results of sensitivity analysis for the RMST and the difference in RMST are given in Figure~\ref{fig:simul_rmst_results} for our DVDS bounds (Form~I in blue and Form~II in green) and the bounds from \citet{lee2024sensitivity} (in red). Under ignorability (narrow and transparent boxplots), the estimations deviate slightly from the ground truth (dotted lines in magenta) because of unobserved confounders, but are still close to the true value. Note that the three methods give approximately the same estimation.

When $\Gamma = 3$, that is, the same value $\Gamma^\star$ that generated the data, the intervals almost always contain the true value of the RMST or difference in RMST.

The sharpness property is less apparent with the difference in RMST (Figure~\ref{fig:diff_rmst}) than with the difference in survival functions (Figure~\ref{fig:diff_surv_fun}), especially with high values of time $t$, where an offset between our bounds and the bounds of the comparator appears. This could be explained by two facts: our DVDS bounds were written for the survival functions and then integrated with a rectangle method, whereas \citet{lee2024sensitivity} directly derived bounds for the RMST; moreover, to simplify, the simulation setup was based on the assumption of non-informative censoring, which was supposed by \citet{lee2024sensitivity}, whereas we assumed informative censoring and introduced $\bar{G}$.

\begin{figure}[h!]
     \centering
     \subfigure[Among the treated.]{\includegraphics[width=0.49\textwidth]{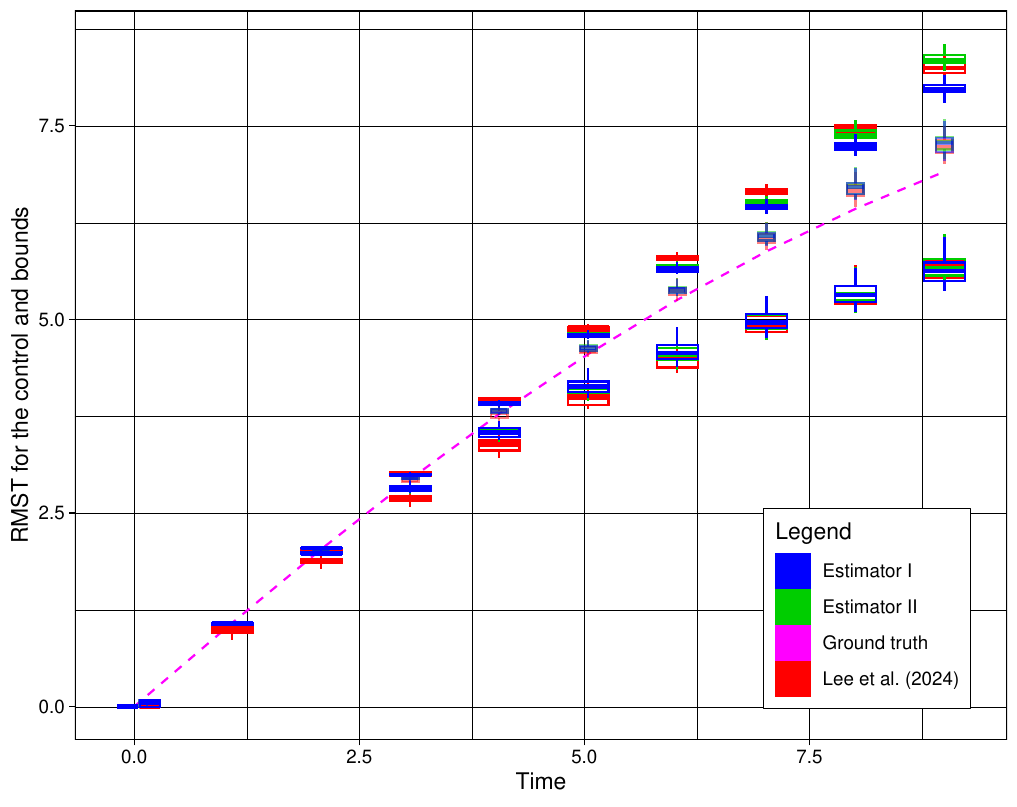}\label{fig:treated_rmst}}
     \hfill
     \subfigure[Among the control.]{\includegraphics[width=0.49\textwidth]{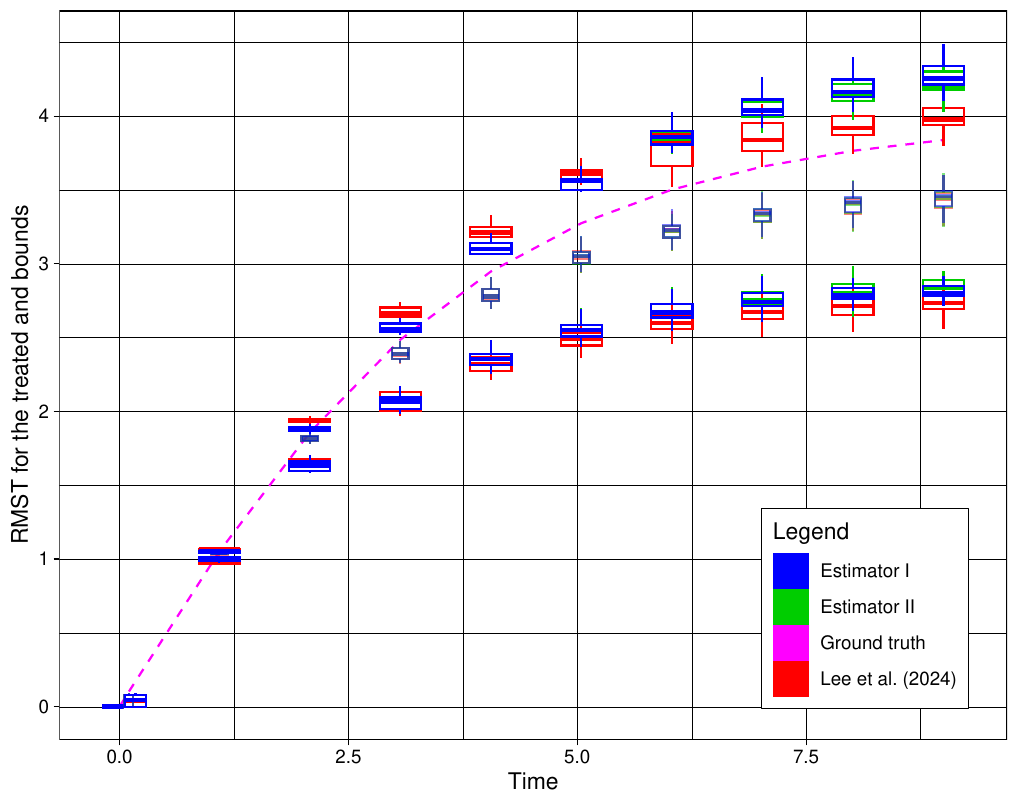}\label{fig:control_rmst}}
     \hfill
     \subfigure[Difference of RMST.]{\includegraphics[width=0.49\textwidth]{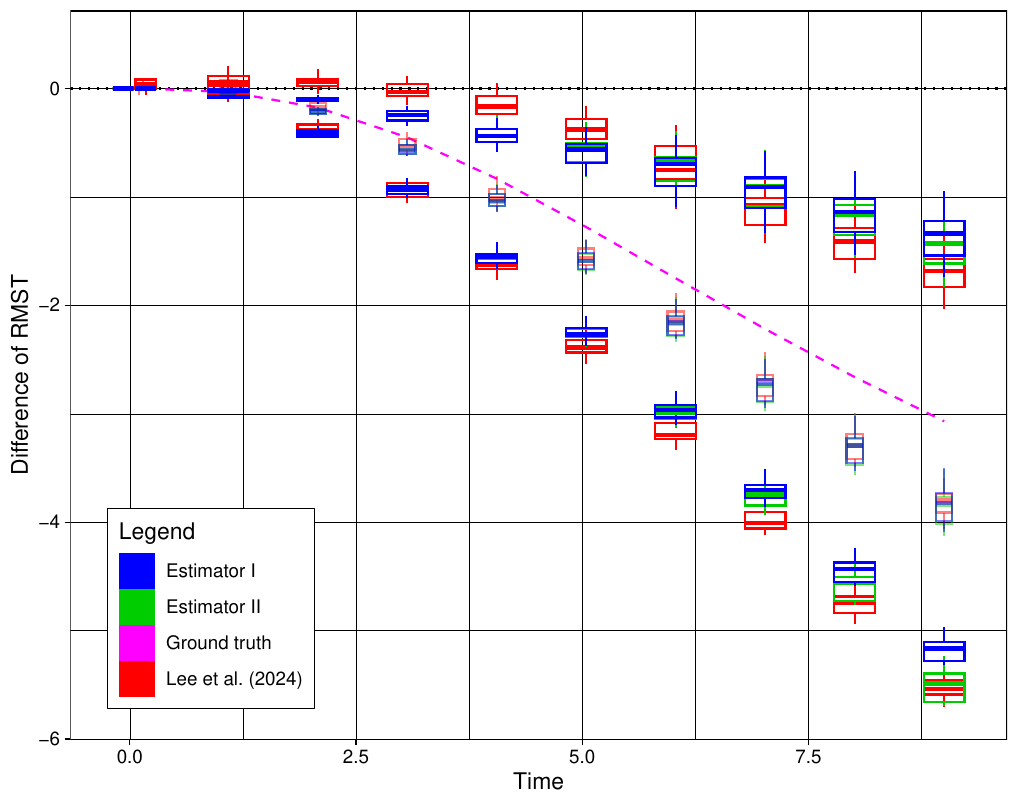}\label{fig:diff_rmst}}
    \caption{Sensitivity analysis for the RMST among the treated (\ref{fig:treated_rmst}), the control (\ref{fig:control_rmst}) and the difference in RMST (\ref{fig:diff_rmst}) on the simulated data. The dotted lines in magenta are the true RMST and difference in RMST. The large boxplots correspond to the estimated upper and lower sensitivity bounds (PEI) \textbf{via integration by the rectangle method} with Form~I (in blue) and Form~II (in green), and with the method from \citet{lee2024sensitivity} (in red) on 20 Monte-Carlo samples, for $\Gamma = 3$. The narrower and transparent boxplots correspond to the value under ignorability ($\Gamma = 1$) for each method.}
    \label{fig:simul_rmst_results}
\end{figure}

Our DVDS bounds, and especially the ones for Form~I, show good coverage of the real (difference in) RMST, and sharpness with respect to the bounds from \citet{lee2024sensitivity}. Moreover, our bounds also enjoy shorter computation times, but, as compared with bounds for the survival function, the computation time gap is less important because of the integration by the rectangle method and the need to compute the PEIs \textit{for each observed time point} between 0 and the time horizon $\tau$ with our method. The method by \citet{lee2024sensitivity} took 9 hours and 54 minutes to complete, whereas our DVDS bounds (for Forms~I and II simultaneously) required 8 hours and 48 minutes. Our method is still more rapid but the gain factor is only 1.1. Note that \citet{lee2024sensitivity} developed an approximate method which is faster than direct optimization, but at the cost of simplifying assumptions.

As seen in Appendix~\ref{app:fast_rmst_bounds}, it is possible to implement faster DVDS bounds for the RMST (a few minutes with our bounds versus several hours with the direct optimization method from \citet{lee2024sensitivity}). However, in practice, they can perform poorly because they are more conservative than the bounds obtained via integration of the survival function. Indeed, in Figure~\ref{fig:simul_fast_rmst_results}, we can see that our bounds are tighter than the bounds from \citet{lee2024sensitivity} until a certain time horizon, from which several observations start to be censored. We tried to stabilize the bounds, with no improvement.

\begin{figure}[h!]
     \centering
     \subfigure[Among the treated.]{\includegraphics[width=0.49\textwidth]{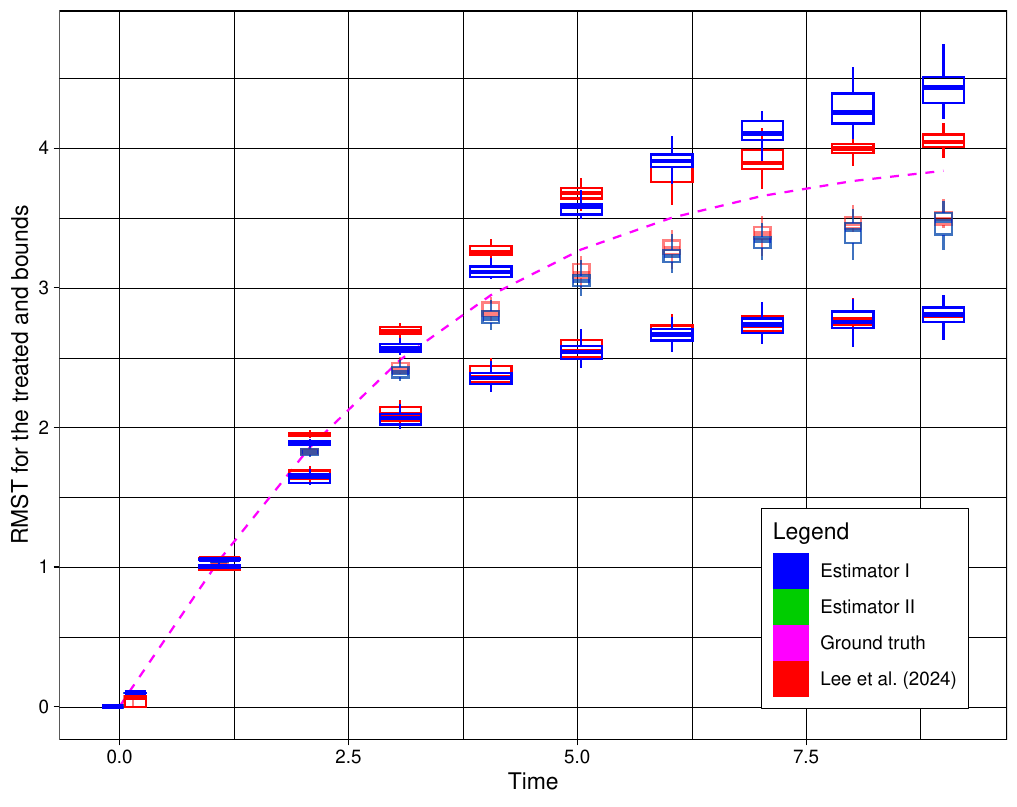}\label{fig:treated_fast_rmst}}
     \hfill
     \subfigure[Among the control.]{\includegraphics[width=0.49\textwidth]{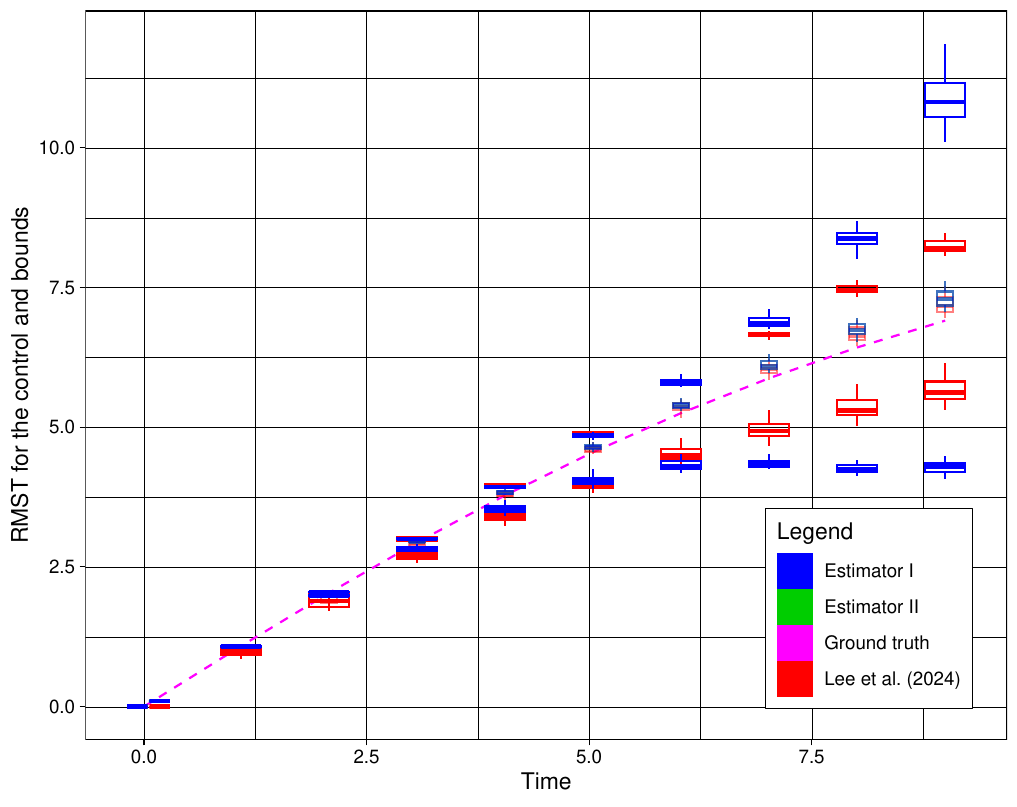}\label{fig:control_fast_rmst}}
     \hfill
     \subfigure[Difference in RMST.]{\includegraphics[width=0.49\textwidth]{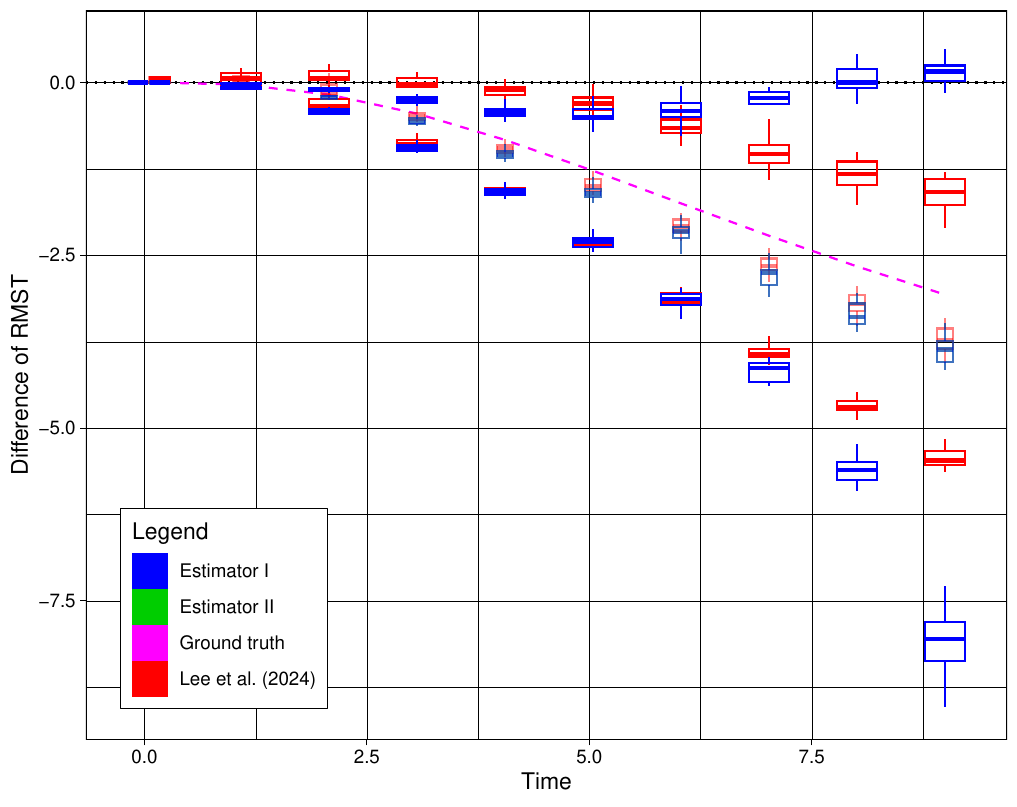}\label{fig:diff_fast_rmst}}
    \caption{Sensitivity analysis for the RMST among the treated (\ref{fig:treated_fast_rmst}), the control (\ref{fig:control_fast_rmst}) and the difference in RMST (\ref{fig:diff_fast_rmst}) on the simulated data. The dotted lines in magenta are the true RMST and difference in RMST. The large boxplots correspond to the estimated upper and lower sensitivity bounds (PEI) \textbf{with our ``fast RMST" method} on Form~I (in blue) and Form~II (in green), and with the method from \citet{lee2024sensitivity} (in red) on 20 Monte-Carlo samples, for $\Gamma = 3$. The narrower and transparent boxplots correspond to the value under ignorability ($\Gamma = 1$) for each method.}
    \label{fig:simul_fast_rmst_results}
\end{figure}

\subsubsection{Results on Real Data}

\paragraph{Calibration of $\Gamma$ by informal benchmarking.}

We calibrated the sensitivity parameter $\Gamma$ using an informal benchmarking approach where we took the maximum between the maximum odds ratio of the MSM and the inverse of the minimum odds ratio of the MSM with and
without each observed covariate. The results are summarized in Tables~\ref{tab:ib_gbcsg} and \ref{tab:ib_rhc} for, respectively, the RHC and the GBCSG data.

\begin{table}[ht]
    \centering
    \begin{tabular}{|lr|lr|}
        \hline
        Covariate & $\hat \Gamma$ & Covariate & $\hat \Gamma$ \\ 
        \hline
        immunhx & 1.07 & seps\_Yes & 1.63 \\ 
        sex\_Male & 1.08 & gastr\_Yes & 1.65 \\ 
        income\_\$25-\$50k & 1.11 & psychhx & 1.71 \\ 
        income\_\$11-\$25k & 1.11 & hema\_Yes & 1.76 \\ 
        amihx & 1.12 & sod1 & 1.79 \\ 
        race\_other & 1.13 & resp\_Yes & 1.81 \\ 
        income\_Under \$11k & 1.14 & liverhx & 1.99 \\ 
        race\_white & 1.15 & dnr1\_Yes & 1.99 \\ 
        age & 1.19 & gibledhx & 2.04 \\ 
        chrpulhx & 1.24 & ca\_No & 2.20 \\ 
        ca\_Yes & 1.24 & bili1 & 2.24 \\ 
        scoma1 & 1.24 & neuro\_Yes & 2.25 \\ 
        temp1 & 1.35 & hema1 & 2.31 \\ 
        meta\_Yes & 1.37 & hrt1 & 2.37 \\ 
        ninsclas\_Medicare & 1.38 & wtkilo1 & 2.47 \\ 
        malighx & 1.38 & crea1 & 2.74 \\ 
        cardiohx & 1.38 & ph1 & 2.98 \\ 
        chfhx & 1.39 & pot1 & 3.10 \\ 
        ninsclas\_Medicare \& Medicaid & 1.48 & card\_Yes & 3.59 \\ 
        edu & 1.50 & aps1 & 3.80 \\ 
        transhx & 1.56 & trauma\_Yes & 4.18 \\ 
        wblc1 & 1.57 & resp1 & 4.78 \\ 
        ninsclas\_No insurance & 1.57 & meanbp1 & 4.86 \\ 
        renal\_Yes & 1.58 & ortho\_Yes & 8.38 \\ 
        ninsclas\_Private & 1.59 & paco21 & 8.49 \\ 
        renalhx & 1.60 & alb1 & 19.51 \\ 
        ninsclas\_Private \& Medicare & 1.61 & pafi1 & 35.58 \\ 
        dementhx & 1.63 & & \\
        \hline
    \end{tabular}
    \caption{Informal benchmarking on the RHC data. Variables with an underscore in their name correspond to binary encoded variables.}
    \label{tab:ib_rhc}
\end{table}

\begin{table}[ht]
    \centering
    \begin{tabular}{|lr|}
        \hline
        Covariate & $\hat \Gamma$ \\ 
        \hline
        grade\_2 & 1.26 \\ 
        grade\_3 & 1.37 \\ 
        pgr & 1.50 \\ 
        age & 2.00 \\ 
        size & 2.05 \\ 
        meno & 2.18 \\ 
        nodes & 2.50 \\ 
        er & 2.75 \\ 
        \hline
    \end{tabular}
    \caption{Informal benchmarking on the GBCSG data}
    \label{tab:ib_gbcsg}
\end{table}

In the RHC data, there were 47 covariates that were included, among which some were binary encoded. The variable ``immunhx" (immunosuppression) corresponds to the minimum estimated value of $\Gamma$, $\hat{\Gamma}_\mathrm{min} = 1.07$, whereas ``pafi1" (PaO2/FIO2 ratio) corresponds to the maximum estimated value of $\Gamma$, $\hat{\Gamma}_\mathrm{max} = 35.6$. The median value is 1.63. We already identified a critical value $\Gamma_{c, \mathrm{DVDS}} = 1.2$ at day 30 with our method and a critical value $\Gamma_{c, \mathrm{Lee}} = 1.15$ with the method of \citet{lee2024sensitivity}. 8 covariates have a corresponding $\hat \Gamma$ lower than $\Gamma_{c, \mathrm{Lee}}$ (covariates on income, race, immunosuppression, sex, or myocardial infarction) and 9 covariates have a corresponding $\hat \Gamma$ lower than $\Gamma_{c, \mathrm{DVDS}}$ (the same as before and age). As the majority of observed covariates has a corresponding $\hat \Gamma$ that is greater than the critical values, the results are not robust to unobserved confounders and we cannot conclude on a negative effect of RHC on survival at day 30 because small confounders would be enough to explain away a causal association.

In the GBCSG data, there were 7 covariates that were included, among which some were binary encoded. The variable ‘‘grade\_2" (tumor of grade 2) has the lowest associated $\hat \Gamma$, $\hat \Gamma_\mathrm{min} = 1.26$, and the variable ‘‘er" (estrogen receptor, in fmol/L) has the highest associated $\hat \Gamma$, $\hat \Gamma_\mathrm{max} = 2.75$. The median value is 2.03. The estimated $\hat \Gamma$ associated with tumor size (‘‘size" variable) reported by \citet{lee2024sensitivity} is equal to 1.17 whereas we estimated a value of 2.05 because we used 5-fold cross-fitting. With 95\%-level confidence intervals, we identified a critical value $\Gamma_{c, \mathrm{I}} = 1.09$ at 5 years with our Form~I estimator and a critical value $\Gamma_{c, \mathrm{Lee}} = 1.08$ with the method of \citet{lee2024sensitivity}. These two critical values are lower than 2.05, which means that we cannot conclude about a positive effect of tamoxifen and chemotherapy on recurrence-free survival at 5 years when compared to chemotherapy only.

\paragraph{Survival Function.}

Figures~\ref{fig:rhc_surv_all_results} and \ref{fig:gbc_surv_all_results} were obtained with the same setup as described in Section~\ref{sec:results_real_data}, respectively, on the RHC and GBCSG data.

\begin{figure}[h!]
    \centering
    \subfigure[$\Gamma = 1.15$]{\includegraphics[width=0.49\textwidth]{rhc_diff_surv_plot_lee_dvds_estim_I_and_II_v3_v3.pdf}\label{fig:surv_all_gamma_1_15}}
    \hfill
    \subfigure[$\Gamma = 1.2$]{\includegraphics[width=0.49\textwidth]{rhc_diff_surv_plot_lee_dvds_estim_I_and_II_v4_v4.pdf}\label{fig:surv_all_gamma_1_2}}
    \hfill
    \subfigure[$\Gamma = 1.5$]{\includegraphics[width=0.49\textwidth]{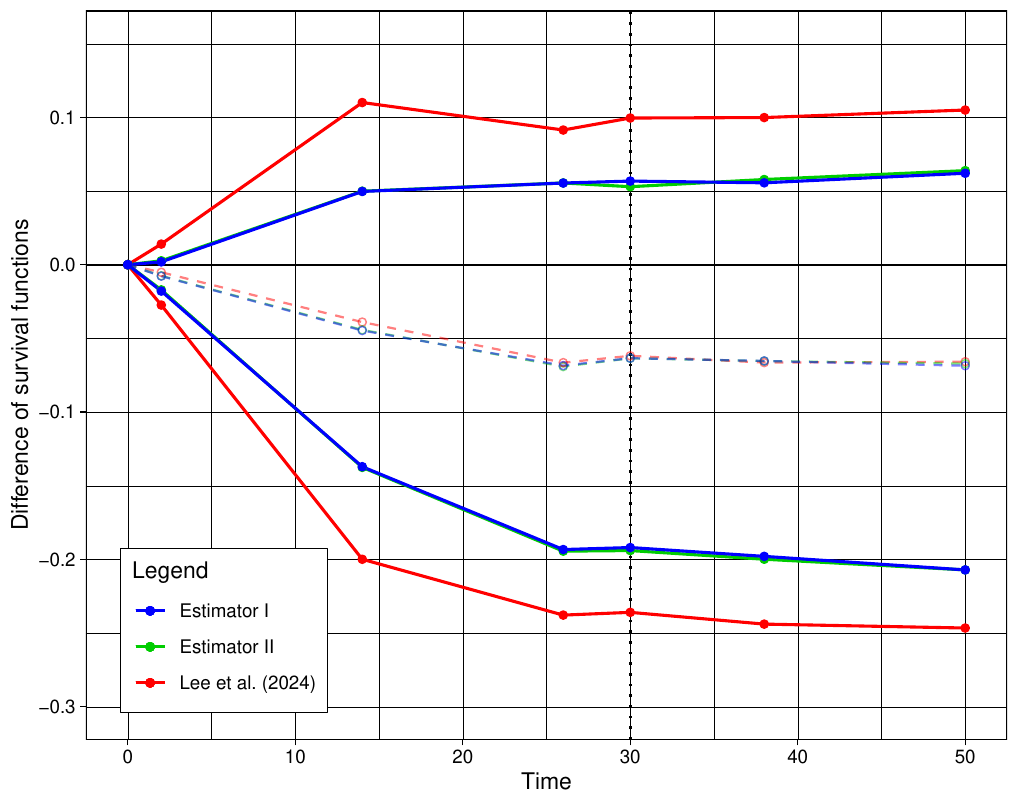}\label{fig:surv_all_gamma_1_5}}
    \caption{Sensitivity analysis for the difference of survival functions for $\Gamma = 1.15$ (\ref{fig:surv_all_gamma_1_15}), $1.2$ (\ref{fig:surv_all_gamma_1_2}), and $1.5$ (\ref{fig:surv_all_gamma_1_5}) on the RHC data. The curves correspond to the lower and upper sensitivity bounds (PEI) obtained with Form~I (in blue), Form~II (in green), and the method from \citet{lee2024sensitivity} (in red). The dotted and transparent lines correspond to the value under ignorability ($\Gamma = 1$) for each method. A vertical dotted black line indicates a time of 30 days. Note that the bounds with Form~II are almost coincident with the bounds with Form~I.}
    \label{fig:rhc_surv_all_results}
\end{figure}

\begin{figure}[h!]
    \centering
    \subfigure[$\Gamma = 1.08$]{\includegraphics[width=0.49\textwidth]{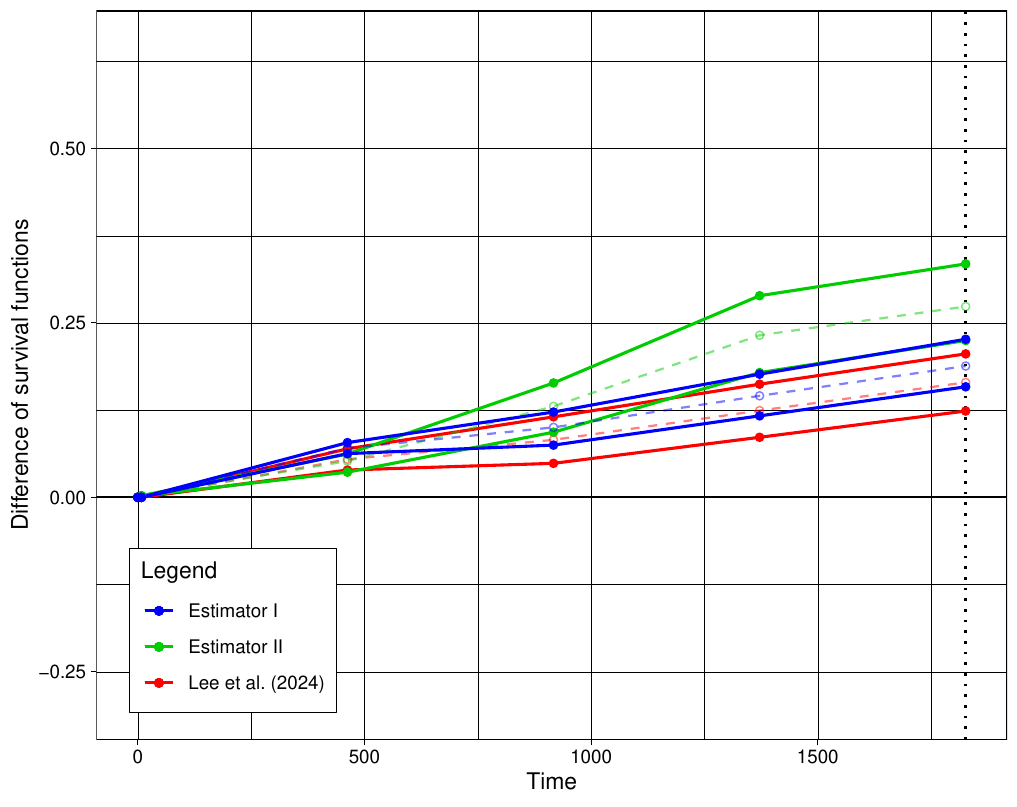}\label{fig:gbc_surv_all_gamma_1_08}}
    \hfill
    \subfigure[$\Gamma = 1.35$]{\includegraphics[width=0.49\textwidth]{gbsg_diff_surv_plot_lee_dvds_estim_I_and_II_v32_v32.pdf}\label{fig:gbc_surv_all_gamma_1_35}}
    \hfill
    \subfigure[$\Gamma = 1.47$]{\includegraphics[width=0.49\textwidth]{gbsg_diff_surv_plot_lee_dvds_estim_I_and_II_v34_v34.pdf}\label{fig:gbc_surv_all_gamma_1_47}}
    \hfill
    \subfigure[$\Gamma = 1.58$]{\includegraphics[width=0.49\textwidth]{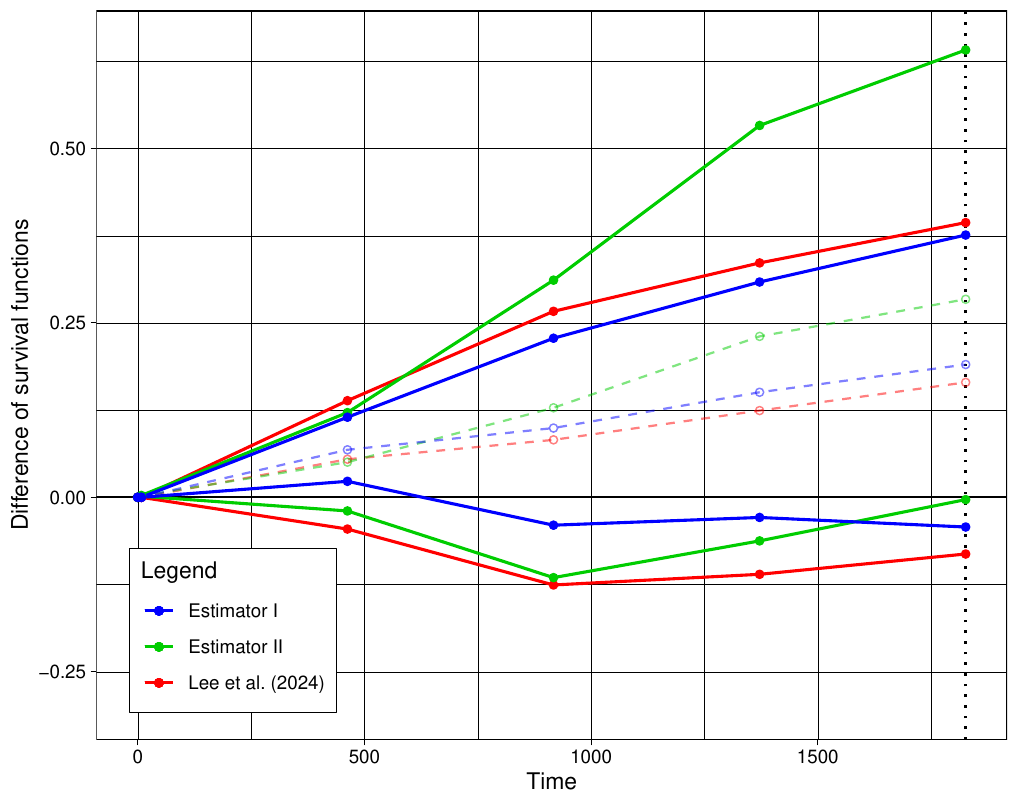}\label{fig:gbc_surv_all_gamma_1_58}}
    \caption{Sensitivity analysis for the difference of survival functions for $\Gamma = 1.08$ (\ref{fig:gbc_surv_all_gamma_1_08}), $\Gamma = 1.35$ (\ref{fig:gbc_surv_all_gamma_1_35}), $1.47$ (\ref{fig:gbc_surv_all_gamma_1_47}), and $1.58$ (\ref{fig:gbc_surv_all_gamma_1_58}) on the GBCSG data. The curves correspond to the lower and upper sensitivity bounds (PEI) obtained with Form~I (in blue), Form~II (in green), and the method from \citet{lee2024sensitivity} (in red). The dotted and transparent lines correspond to the value under ignorability ($\Gamma = 1$) for each method. A vertical black dotted line indicates a time of 1826.25 days (5 years).}
    \label{fig:gbc_surv_all_results}
\end{figure}

\begin{figure}[h!]
    \centering
    \includegraphics[width=0.49\textwidth]{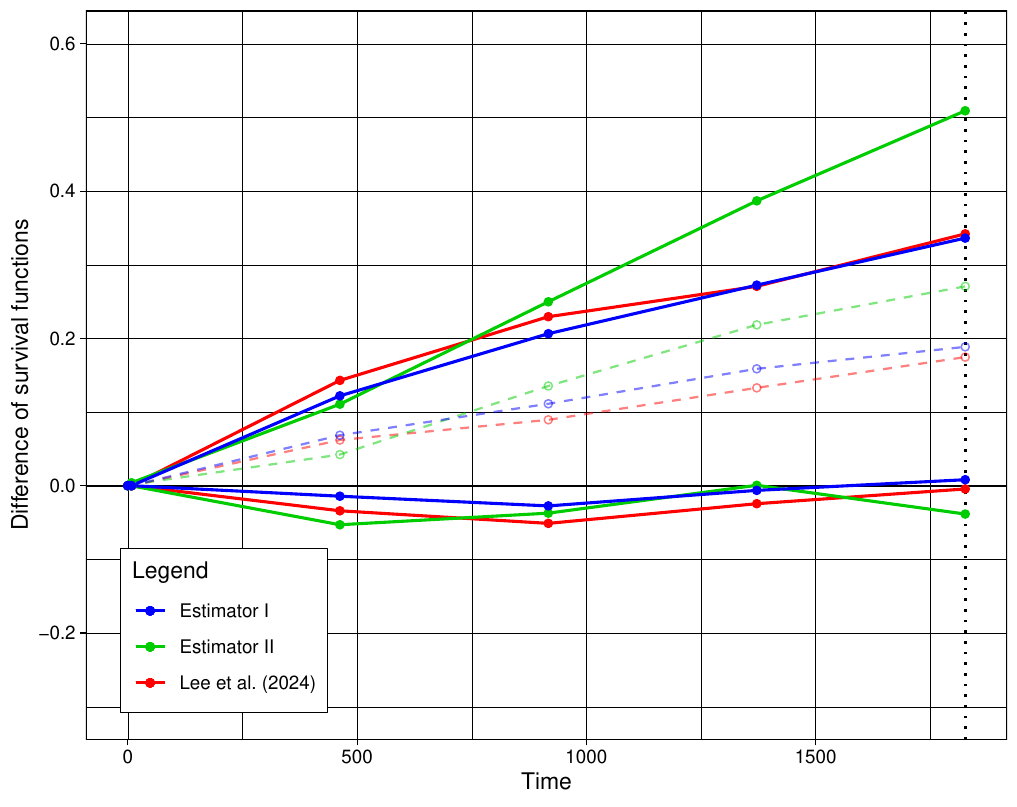}
    \caption{Sensitivity analysis for the difference of survival functions for $\Gamma = 1.08$ on the GBCSG data. The curves correspond to the 95\%-CIs obtained with Form~I (in blue), Form~II (in green), and the method from \citet{lee2024sensitivity} (in red). The dotted and transparent lines correspond to the value under ignorability ($\Gamma = 1$) for each method. A vertical black dotted line indicates a time of 1826.25 days (5 years).}
    \label{fig:gbc_surv_ci_all_results}
\end{figure}

\paragraph{RMST and Difference in RMST.}

\begin{figure}[h!]
    \centering
    \subfigure[$\Gamma = 1.15$]{\includegraphics[width=0.49\textwidth]{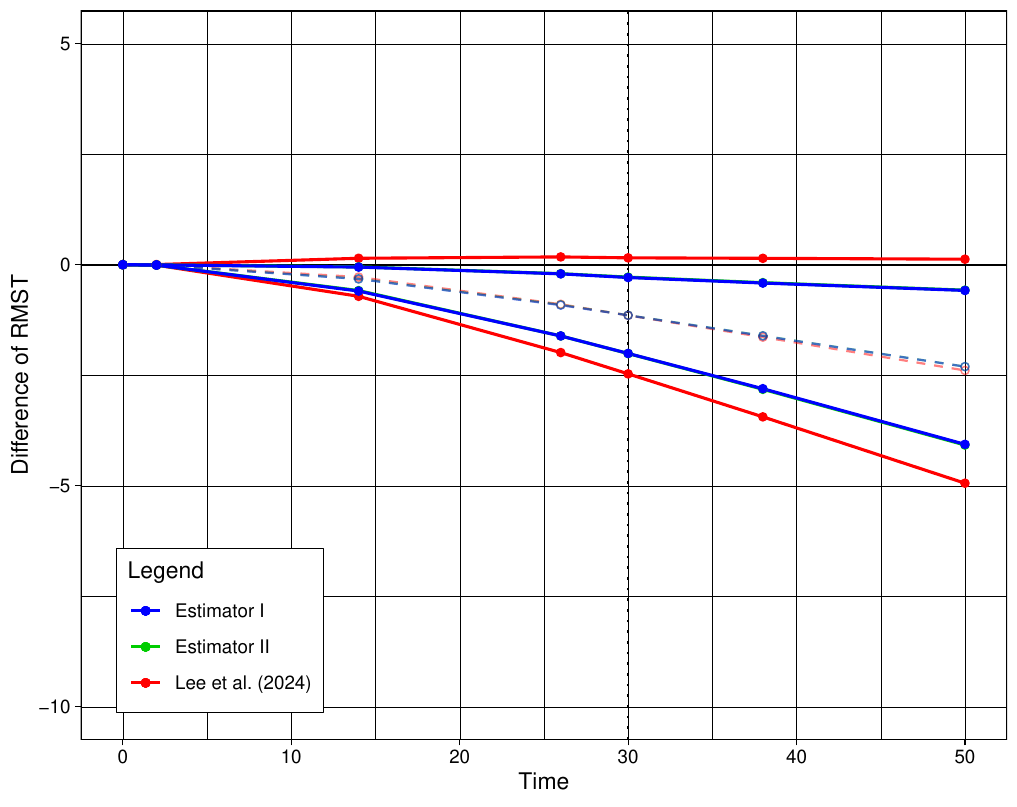}\label{fig:rmst_gamma_1_15}}
    \hfill
    \subfigure[$\Gamma = 1.2$]{\includegraphics[width=0.49\textwidth]{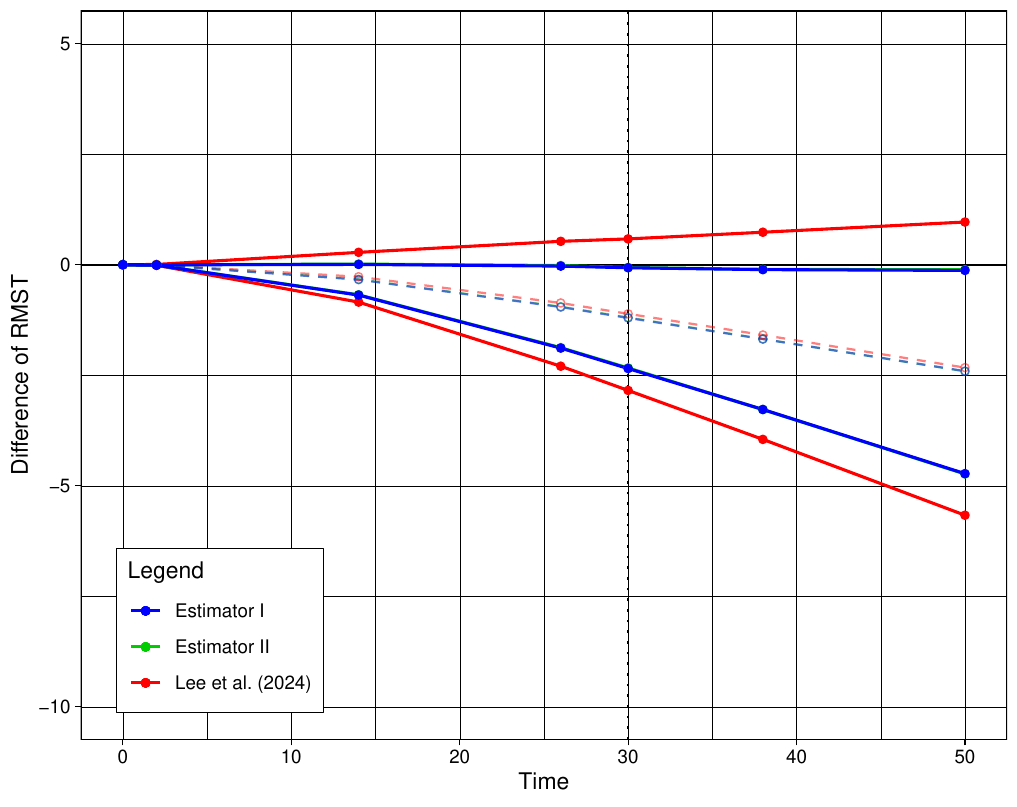}\label{fig:rmst_gamma_1_2}}
    \hfill
    \subfigure[$\Gamma = 1.5$]{\includegraphics[width=0.49\textwidth]{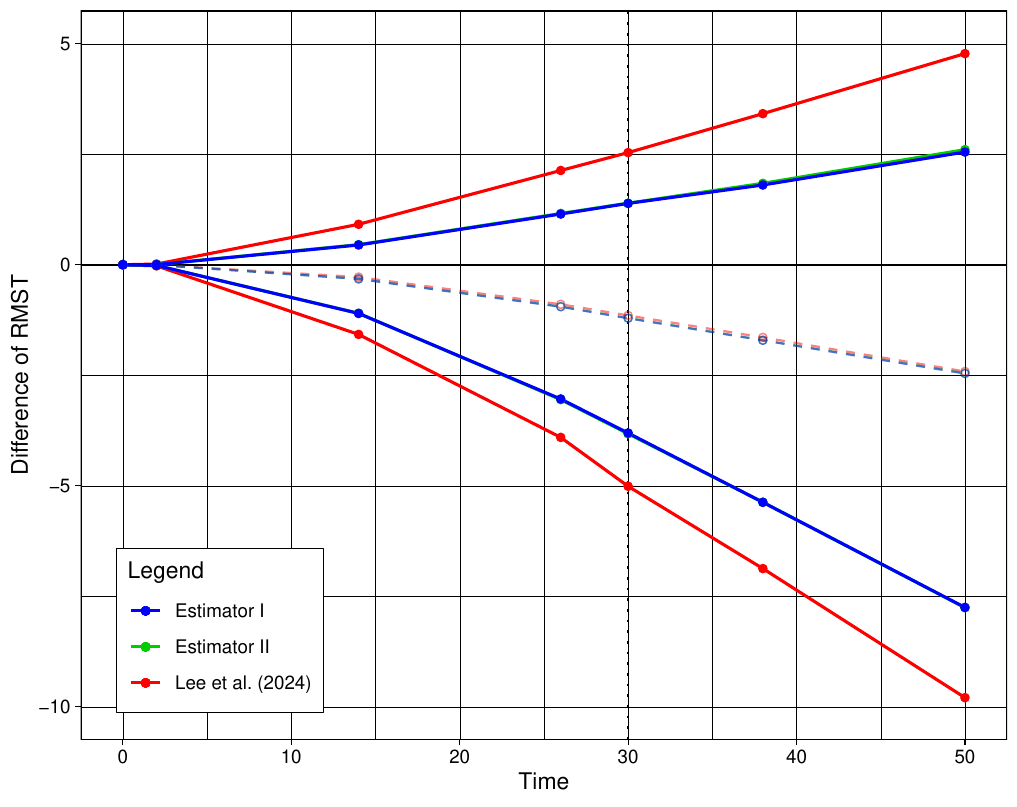}\label{fig:rmst_gamma_1_5}}
    \caption{Sensitivity analysis for the difference in RMST for $\Gamma = 1.15$ (\ref{fig:rmst_gamma_1_15}), $1.2$ (\ref{fig:rmst_gamma_1_2}), and $1.5$ (\ref{fig:rmst_gamma_1_5}) on the RHC data. The curves correspond to the lower and upper sensitivity bounds (PEI) obtained with Form~I (in blue), Form~II (in green), and the method from \citet{lee2024sensitivity} (in red). The dotted and transparent lines correspond to the value under ignorability ($\Gamma = 1$) for each method. A vertical black dotted line indicates a time horizon of 30 days. Note that the bounds with Form~II are almost coincident with the bounds with Form~I.}
    \label{fig:rhc_rmst_results}
\end{figure}

The results of the sensitivity analyses on the difference in RMST in the RHC and GBCSG datasets are given, respectively, in Figures~\ref{fig:rhc_rmst_results} and \ref{fig:gbc_rmst_results}, for a sensitivity parameter $\Gamma$ equal to 1.15, 1.2, and 1.5 (and an additional value of 1.08 for the GBCSG dataset). For the RHC dataset, the PEIs obtained via the DVDS bounds for Form~I (in blue), II (in green) and via the method of \citet{lee2024sensitivity} (in red) were computed for 5 equally spaced values of time between 2 and 50 days, to which we also added 30 days. For the GBCSG dataset, we used 5 equally spaced values between 8 and 1826.25 days. The estimates under ignorability ($\Gamma = 1$) were represented with dotted lines.

For the RHC data, observe that, under ignorability, the difference in RMST is always negative, regardless of the time point between 0 and 50 days. This is consistent with findings from \citet{connors1996effectiveness} on the negative effect of RHC on survival. In particular, if no confounders were hidden, Figure~\ref{fig:rhc_rmst_results} would suggest that RHC reduces survival by approximately 1 day on a time horizon of 30 days. For the GBCSG data, under ignorability, the difference in RMST is almost always positive (except at time 463 days for \citet{lee2024sensitivity}).

For the RHC data, at day 30, we reach similar conclusions to the ones obtained with survival functions. For the GBCSG data, at 5 years, the critical value is 1.2 with Form~II, 1.15 with Form~I, and even lower with the method from \citet{lee2024sensitivity}. Bounds with Form~I are tighter than our comparator, and bounds with Form~II are shifted toward higher values.

\begin{figure}[h!]
    \centering
    \subfigure[$\Gamma = 1.15$]{\includegraphics[width=0.49\textwidth]{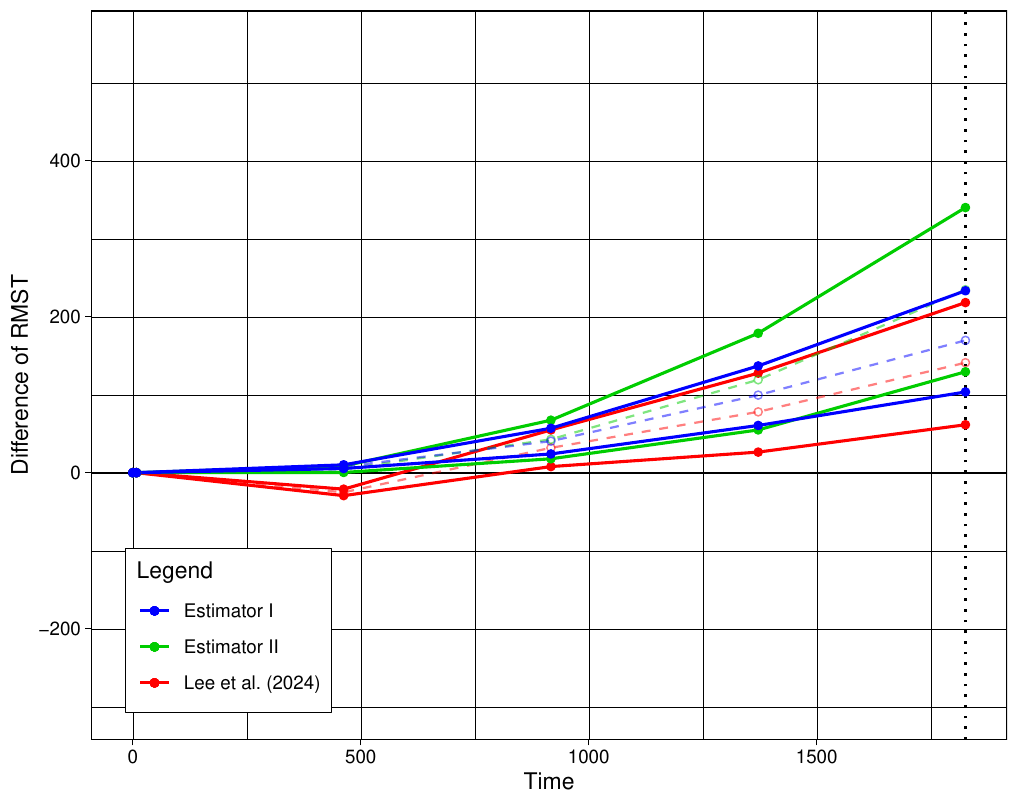}\label{fig:gbc_rmst_gamma_1_15}}
    \hfill
    \subfigure[$\Gamma = 1.2$]{\includegraphics[width=0.49\textwidth]{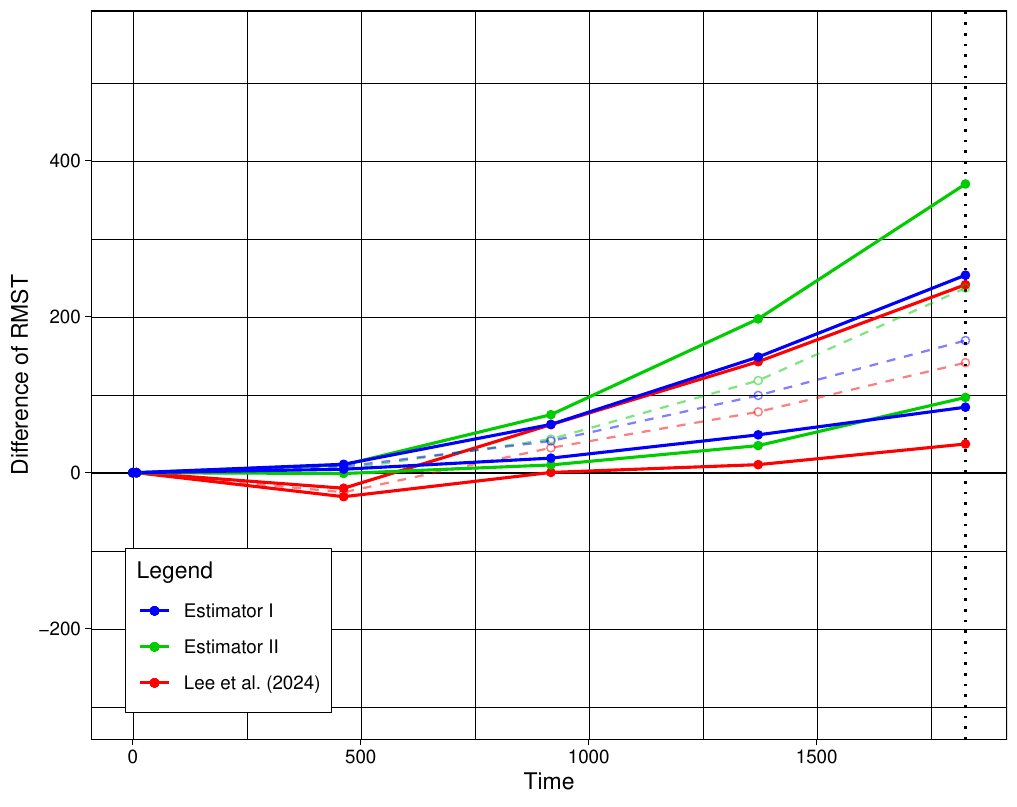}\label{fig:gbc_rmst_gamma_1_2}}
    \hfill
    \subfigure[$\Gamma = 1.5$]{\includegraphics[width=0.49\textwidth]{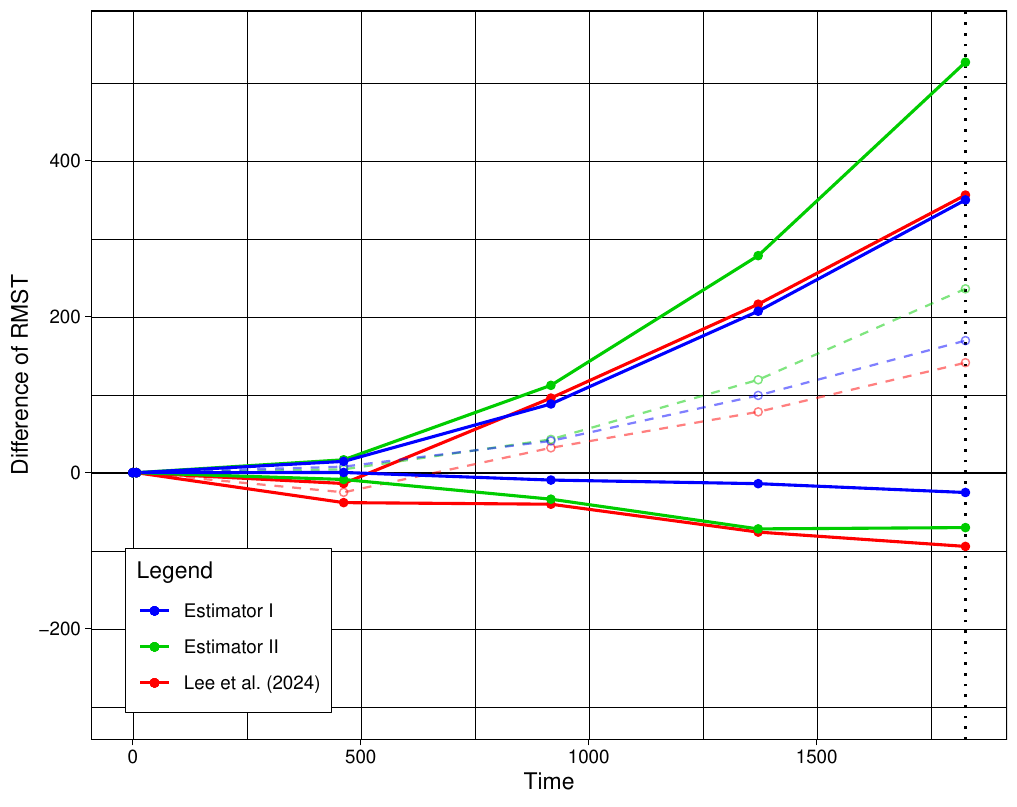}\label{fig:gbc_rmst_gamma_1_5}}
    \caption{Sensitivity analysis for the difference in RMST for $\Gamma = 1.15$ (\ref{fig:gbc_rmst_gamma_1_15}), $1.2$ (\ref{fig:gbc_rmst_gamma_1_2}), and $1.5$ (\ref{fig:gbc_rmst_gamma_1_5}) on the GBCSG data. The curves correspond to the lower and upper sensitivity bounds (PEI) obtained with Form~I (in blue), Form~II (in green), and the method from \citet{lee2024sensitivity} (in red). The dotted and transparent lines correspond to the value under ignorability ($\Gamma = 1$) for each method. A vertical black dotted line indicates a time horizon of 1826.25 days (5 years).}
    \label{fig:gbc_rmst_results}
\end{figure}

\paragraph{Computation time.}

Tables~\ref{tab:exec_times_rhc} and \ref{tab:exec_times_gbcsg} provide the execution times for each experiment on real-world data (RHC and GBCSG data).

With the RHC data ($n = 5735$ and $p_\mathbf{X} = 47$), our method is faster than the method of \citet{lee2024sensitivity} by at least a factor of 1.19 for the difference of survival function, and by at least a factor of 1.87 for the difference in RMST. Moreover, the execution time stays consistent with our method when $\Gamma$ changes whereas it varies more with the method of \citet{lee2024sensitivity} because of the optimization step.

With the GBCSG data ($n = 686$ and $p_\mathbf{X} = 7$), our method is faster than the method of \citet{lee2024sensitivity} for the difference of survival function by at least a factor of 6.90 but is longer for the difference in RMST by at least a factor of 20.0. This longer execution time can be explained by the integration step in our method, because we computed Forms~I and II simultaneously, and because the optimization step of \citet{lee2024sensitivity} is well suited to the RMST when the sample size is not too large (a few hundreds in the GBCSG data vs.\ a few thousands in the RHC data).

\begin{table}[h]
    \centering
    \begin{tabular}{|c|c|c|c|}
        \hline
        Estimand & $\Gamma$ & \makecell{Exec. time for\\ Estim.~I and II} & \makecell{Exec. time for\\ \citet{lee2024sensitivity}} \\
        \hline
        Survival function & 1.15 & 1.26 & 1.50 \\
        Survival function & 1.2 & 1.21 & 1.60 \\
        Survival function & 1.5 & 1.20 & 1.85 \\
        RMST & 1.15 & 0.53 & 0.99 \\
        RMST & 1.2 & 0.51 & 1.15 \\
        RMST & 1.5 & 0.51 & 1.33 \\
        \hline
    \end{tabular}
    \caption{Comparison of execution times (in hours) on the RHC data for the difference in survival functions and the difference in RMST.}
    \label{tab:exec_times_rhc}
\end{table}

\begin{table}[h]
    \centering
    \begin{tabular}{|c|c|c|c|}
        \hline
        Estimand & $\Gamma$ & \makecell{Exec. time for\\ Estim.~I and II} & \makecell{Exec. time for\\ \citet{lee2024sensitivity}} \\
        \hline
        Survival function & 1.15 & 25.9 s & 2.98 min \\
        Survival function & 1.2 & 22.2 s & 3.24 min \\
        Survival function & 1.5 & 20.9 s & 3.68 min \\
        RMST & 1.15 & 11.2 min & 27.4 s \\
        RMST & 1.2 & 11.1 min & 27.3 s \\
        RMST & 1.5 & 11.0 min & 33.2 s \\
        \hline
    \end{tabular}
    \caption{Comparison of execution times on the GBCSG data for the difference in survival functions and the difference in RMST.}
    \label{tab:exec_times_gbcsg}
\end{table}

\end{document}